\def\argmax{\mathop{\rm arg\,max}}%
\def\argmin{\mathop{\rm arg\,min}}%
\pgfplotsset{compat=1.13}
\newcolumntype{R}{>{\columncolor{red!20}}c}
\newcolumntype{G}{>{\columncolor{green!20}}c}
\newcolumntype{B}{>{\columncolor{blue!20}}c}
\newcolumntype{Y}{>{\columncolor{yellow!20}}c}
\newcolumntype{K}{>{\columncolor{black!20}}c}
\newtheorem{theorem}{Theorem}
\newtheorem{lemma}{Lemma}
\newtheorem{proposition}{Proposition}
\newtheorem{corollary}{Corollary}
\newtheorem{remark}{Remark}
\lstdefinestyle{C++}{
	language=C++,
	keywordstyle=\bfseries\color{purple},
	stringstyle=\color{blue},
	commentstyle=\color{gray},
	comment=[l]{/*}
}
\lstdefinestyle{AMPL}{
	language=AMPL,
	aboveskip=3mm,
	belowskip=3mm,
	showstringspaces=false,
	columns=flexible,
	keywordstyle=\bfseries,
	breaklines=true,
	breakatwhitespace=true,
	tabsize=3,
}
\lstdefinelanguage{AMPL}{keywords={let,set,param,var,arc,integer,minimize,maximize,subject,to,node,
sum,in,Current,complements,integer,solve_result_num,IN,contains,less,suffix,INOUT,default,logical,
Infinity,dimen,max,symbolic,Initial,div,min,table,LOCAL,else,option,then,OUT,environ,setof,union,
all,exists,shell_exitcodeuntil,binary,forall,solve_exitcodewhile,by,if,solve_messagewithin,check,
solve_result},sensitive=true,comment=[l]{\#}}
\newcommand{\be}{\begin{equation}}
\newcommand{\ee}{\end{equation}}
\newcommand{\bea}{\begin{eqnarray}}
\newcommand{\eea}{\end{eqnarray}}
\newcommand{\bvec}{\left(\begin{array}{c}}
\newcommand{\evec}{\end{array}\right)}
\newcommand{\bsub}{\begin{subequations}}
\newcommand{\esub}{\end{subequations}}
\definecolor{peach}{HTML}{FFCCAC}
\definecolor{butter}{HTML}{FFEB94}
\definecolor{babyblue}{HTML}{C1E1DC}
\begin{document}
\allowdisplaybreaks
\title{Switching Time Optimization for Binary Quantum Optimal Control}

\author{Xinyu Fei\thanks{Department of Industrial and Operations Engineering, University of Michigan at Ann Arbor, USA;}
    ~~~Lucas T. Brady\thanks{KBR/NASA Ames Quantum Artificial Intelligence Laboratory;}
     ~~~Jeffrey Larson\thanks{Mathematics and Computer Science Division, Argonne National Laboratory;}
    ~~~Sven Leyffer\footnotemark[3]
     ~~~Siqian Shen\thanks{Department of Industrial and Operations Engineering, University of Michigan, Ann Arbor, USA, email: {\tt siqian@umich.edu}.}
    }
\date{~}
\maketitle
\begin{abstract}
  Quantum optimal control is a technique for controlling the evolution of a quantum system and has been applied to a wide range of problems in quantum physics.
    We study a binary quantum control optimization problem, where control decisions are binary-valued and the problem is solved in diverse quantum algorithms. 
  In this paper, we utilize classical optimization and computing techniques to develop an algorithmic framework that sequentially optimizes the number of control switches and the duration of each control interval on a continuous time horizon.
    Specifically, we first solve the continuous relaxation of the binary control problem based on time discretization and then use a heuristic to obtain a controller sequence with a penalty on the number of switches. 
    Then, we formulate a switching time optimization model and apply sequential least-squares programming with accelerated time-evolution simulation to solve the model. 
    We demonstrate that our computational framework can obtain binary controls with high-quality performance and also reduce computational time via solving a family of quantum control 
    instances in various quantum physics applications. 
\end{abstract}

\section{Introduction}
\label{sec:intro}
{We use methods from classical computer design and engineering to accelerate the development of practical and scalable quantum computing systems.} Quantum control theory has a rich history that parallels the development of controllable quantum devices.  
Notions of quantum control theory \citep{Glaser_2015,Werschnik_2007,d2021introduction} first appeared in more analog settings, such as quantum chemistry \citep{Kosloff_1989,Peirce_1988}.  
As the field evolves, it saw applications in quantum information through gate design \citep{pawela2016various,Motzoi2009} and quantum circuit compilation \citep{gokhale2019partial}, and later has been used in more high-level design of quantum algorithms, especially with the advent of variational quantum circuits~\citep{Yang_2017,Bapat2019,Mbeng_2019,Lin_2019,brady2020optimal,Brady_2021b,Larocca_2022,Venuti_2021,Ge_2022,Fei2023binarycontrolpulse}.

To better implement quantum control, unlike most of the literature that solves quantum control problems using a fixed time discretization, we optimize both control functions and the time between consecutive control switches. 
{Our new algorithmic framework---based on classical computer and optimization models and algorithms---improves the quality of quantum controls and reduces computational time.}

Here we generalize the quantum control problem studied in our prior work (see~\citet{Fei2023binarycontrolpulse}) with binary variables and linear constraints, describing a so-called bang-bang control that is commonly used in variational quantum algorithms~\citep{farhi2014quantum,Bharti2022,Cerezo2021}. 
The quantum control problem is defined in a quantum system with $q$ qubits, an intrinsic Hamiltonian $H^{(0)}$, and $N$ control Hamiltonians $H^{(j)},\ j=1,\ldots,N$ with an evolution time horizon $[0, t_f]$. 
Matrices $X_\textrm{init}$ and $X_\textrm{targ}$ represent the initial and target unitary operator, respectively. 
For any time $t\in [0,t_f]$, we define control functions $u_j(t)$ for each controller $ j=1,\ldots,N$, a time-dependent Hamiltonian function $H(t)$, and a unitary operator function $X(t)$. 
Notation $u(t)=(u_1(t),\ldots,u_N(t))$ represents the vector form of control functions and $\mathcal{U}\subseteq \{0,1\}^N$ represents the feasible set of $u(t)$. 
A generic binary quantum control model is given by ~\citep{Fei2023binarycontrolpulse}: 
\begin{subequations}
\makeatletter
\def\@currentlabel{P}
\makeatother
\label{eq:model-c-1}
\begin{align}
    \label{eq:model-c-1-obj}
    (P) \quad \min_{u, X, H} \quad & F(X(t_f))\\
    \label{eq:model-c-1-cons-h}
    \textrm{s.t.}\quad & H(t) = H^{(0)} + \sum_{j=1}^N u_j(t)H^{(j)},\ \forall t\in [0,t_f]\\
    \label{eq:model-c-1-cons-s}
    & \frac{d}{dt} X(t) = -iH(t) X(t),\ \forall t\in [0,t_f]\\
    \label{eq:model-c-1-cons-i}
    & X(0) = X_\textrm{init}\\
    \label{eq:model-c-1-cons-u-def}
    & u(t)\in \mathcal{U},\ a.e.\ t\in [0,t_f].
\end{align}
\end{subequations}
The objective function~\eqref{eq:model-c-1-obj} is a general cost function only corresponding to the final unitary operator $X(t_f)$. We will introduce the specific functions discussed in our paper later. 
Constraint~\eqref{eq:model-c-1-cons-h} formulates the time-dependent Hamiltonian function $H(t)$ as a linear combination of the intrinsic Hamiltonian and the control Hamiltonians weighted by control functions. 
Constraint~\eqref{eq:model-c-1-cons-s} is the differential equation describing the time-evolution process of the unitary operator function $X(t)$ in our considered quantum systems with the same units for energy and frequency, which means that $\hbar =1$.
Constraint~\eqref{eq:model-c-1-cons-i} is the initial condition constraint of $X(t)$. 
Constraint~\eqref{eq:model-c-1-cons-u-def} indicates that for $t\in [0,t_f]$ almost everywhere, the feasible region set $\mathcal{U}$ constrains the value of the control function $u(t)$. In this paper, we consider two cases, one is binary controls $\mathcal{U} = \{0,1\}^N$, and the other one is binary controls with an additional constraint $\mathcal{U} = \{u\in \{0,1\}^N: \sum_{j=1}^N u_j=1\}$, 
where the constraint requires that the summation of all the control values should be one at any time, known as the Special Ordered Set of Type 1 (SOS1) property in discrete optimization~\citep{wolsey1999integer,sager2012integer}. For binary control functions, it is equivalent to requiring only one controller to be active at any time.

We employ two widely used objective functions in the quantum control field. We use $|\cdot\rangle$ to represent a quantum state vector and $\langle \cdot |$ to represent its conjugate transpose, and $\cdot^\dagger$ to represent the conjugate transpose of a complex matrix. One function is the energy ratio minimization function with the specific form
\begin{align}
\label{eq:obj-energy}
    F(X(t_f)) = 1 - \langle \psi_0 | X^\dagger(t_f) \tilde{H} X(t_f) |\psi_0\rangle / E_{\textrm{min}}, 
\end{align}
to minimize the energy corresponding to the Hamiltonian $\tilde{H}$. 
In the objective function, the initial state of the quantum system is given by $|\psi_0\rangle$ 
and the constant minimum energy $E_\textrm{min}$ represents the minimum eigenvalue of the Hamiltonian $\tilde{H}$. 
An alternative objective function is the infidelity function
\begin{align}
\label{eq:obj-fid}
    F(X(t_f)) = 1 - \frac{1}{2^q} \left|\textbf{tr} \left\{ X^\dagger_\textrm{targ} X(t_f)\right\}\right|, 
\end{align}
to minimize the difference between $X(t_f)$ and the target operator $X_\textrm{targ}$. Both objective functions are bounded between $[0,1]$.

In the literature, the most widely used method to solve the original quantum control problem~\eqref{eq:model-c-1} is time discretization, which divides the evolution time horizon into time intervals. We review the general discretized model proposed in ~\citet{Fei2023binarycontrolpulse} below. 
We divide the evolution time horizon $[0,t_f]$ into $T$ time intervals as $0=t_0<t_1<\ldots<t_T=t_f$. We use $k=1,\ldots,T$ to represent each time step with a corresponding time interval $[t_{k-1}, t_k)$.
We define a piece-wise constant control function based on time discretization. 
 For each time step $k=1,\ldots,T$ 
and each controller $j=1,\ldots,N$, we define control variables $u_{jk}$ with vector form $u_k=(u_{1k},\ldots,u_{Nk})^T$. 
For simplicity, we use $u$ to represent the control variables $u_{jk},\ j=1,\ldots,N,\ k=1,\ldots,T$.
Furthermore, we define time-dependent Hamiltonian variables $H_{k}$ and unitary operator variables $X_{k}=X(t_k)$ 
 for $k=1,\ldots,T$. The differential equation~\eqref{eq:model-c-1-cons-s} has the following explicit solution on the time interval $[t_{k-1}, t_k]$:
\begin{align}
    X_k=\exp\left\{-iH_k(t_k-t_{k-1})\right\}X_{k-1},\ k=1,\ldots,T,
\end{align}
where $H_k$ is defined in~\eqref{eq:model-d-1-cons-h}. 
With this explicit solution, the discretized quantum control problem has the following formulation:
\begin{subequations}
\makeatletter
\def\@currentlabel{$P_D$}
\makeatother
\label{eq:model-d-1} 
\begin{align}
    \label{eq:model-d-1-obj}
    (P_D)\quad \min_{u,X,H,t_1,\ldots,t_T} \quad & F(X_T)\\
    \label{eq:model-d-1-cons-h}
    \textrm{s.t.}\quad & H_k = H^{(0)} + \sum_{j=1}^N u_{jk}H^{(j)},\ k=1,\ldots,T\\
    \label{eq:model-d-1-cons-s}
    & X_{k}=e^{-i H_k (t_k-t_{k-1})}X_{k-1},\ k=1,\ldots,T \\
    \label{eq:model-d-1-cons-i}
    & X_0 = X_\textrm{init}\\
    \label{eq:model-d-1-cons-u-def}
    & u_k\in \mathcal{U},\ k=1,\ldots,T.
\end{align}
\end{subequations}
The objective function~\eqref{eq:model-d-1-obj} is the objective function~\eqref{eq:model-c-1-obj} with the approximated final operator $X_T\approx X(t_f)$. 
Constraints~\eqref{eq:model-d-1-cons-h}--\eqref{eq:model-d-1-cons-u-def} are the time-discretized versions of constraints~\eqref{eq:model-c-1-cons-h}--\eqref{eq:model-c-1-cons-u-def}.
The simultaneous optimization of both control functions $u,\ X,\ H$, and time discretization points $t_1,\ldots,t_T$ was proposed by~\citet{logsdon1989accurate} for differential equation systems. 
Due to the computational complexity, we divide our problem into two parts: (i) optimizing discretized control variables $u_k$ for each time step $k=1,\ldots,T$ and (ii) optimizing time points of switching between controllers.
Next, we review the main literature focusing on the problem of each part.

\paragraph{Literature of Quantum Optimal Control.} Most quantum optimal control algorithms only consider optimizing controls under a {\it fixed} time discretization. 
For continuous quantum control problems, \citet{khaneja2005optimal} first proposed the gradient ascent pulse engineering (GRAPE) algorithm to estimate control functions with piece-wise constant functions and apply gradient-based methods to optimize them. 
Furthermore, \citet{Larocca_2021} combined the GRAPE algorithm and Krylov-subspace approach to improve computational efficiency. 
\citet{brady2020optimal} developed an analytical framework based on time discretization for optimizing bang-bang and smooth annealing controls for a quantum energy minimization problem. 
Other algorithms based on time discretization include pseudospectral methods~\citep{li2009pseudospectral} and reinforcement learning frameworks~\citep{bukov2018reinforcement,niu2019universal,sivak2022model}. 
For binary quantum control problems, \citet{vogt2022binary} applied a trust-region method~\citep{nocedal2006numerical} to optimize controls in a single flux quantum system. 
\citet{Fei2023binarycontrolpulse} developed a solution framework combining the GRAPE algorithm, rounding techniques, and local branching heuristics. 
However, obtaining controls with higher quality by the discretized model requires more precise time discretization, equivalent to a larger number of decision variables associated with each time step, which increases the computational cost severely. 
Estimating control functions by piece-wise functions also leads to limitations in obtaining better control solutions.

\paragraph{Literature of Switching Time Optimization.} The switching time optimization has been a meaningful yet challenging topic in controls of switched-mode dynamic systems~\citep{meier1990efficient,egerstedt2006transition,johnson2011second,flasskamp2013discretized,stellato2017second}. 
\citet{vossen2010switching} discussed the switching time optimization for both bang-bang and singular controls. 
In quantum theory, a class of literature formulates the quantum approximate optimization algorithm (QAOA) as an optimal control problem with two controllers and applies multiple methods to solve it~\citep{farhi2014quantum,liang2020investigating,bao2018optimal}, which can be considered as a simplified version of switching time optimization with only two switching controllers. 
The aforementioned literature assumes known analytical formulations of control functions. \citet{bukov2018reinforcement} designed simple variational control protocols from discretized control solutions obtained by reinforcement learning which requires numerous training episodes and lacks generality for various control problems. 

\paragraph{Main Contributions.} The main contributions of this paper are as follows. First, we develop a new algorithmic framework that not only optimize control functions but also switching time points. 
Specifically, we develop two heuristic methods to obtain binary discretized controls from continuous discretized controls balancing the quality of controls and the number of switches for general quantum control problems. 
Second, we build and solve a generic switching time optimization model for quantum control problems with given Hamiltonian controllers as parameters. 
In addition, we accelerate the time-evolution simulations by pre-computing the eigenvalues of a small number of Hamiltonian matrices.
Third, we conduct numerical simulations on multiple quantum control examples and show that our method obtains controls with higher quality and a similar number of switches within significantly less computational time compared to the discretized model.  

The remainder of the paper is organized as follows. 
In Section~\ref{sec:alg-extraction}, we construct our algorithmic framework and develop heuristic methods to obtain controller sequences. 
In Section~\ref{sec:model-switching}, we formulate and solve the switching time optimization model with acceleration techniques for time evolution. 
In Section~\ref{sec:numerical}, we conduct numerical simulations and discuss the results. 
In Section~\ref{sec:conclusion}, we conclude our work and propose future research directions.
\section{Algorithm Framework and Controller Sequence Extraction}
\label{sec:alg-extraction}
Our algorithmic framework consists of four steps, which we describe in detail in Algorithm~\ref{alg:overall}. First, we solve the continuous relaxation of the discretized model~\eqref{eq:model-d-1} with a fixed (equal) time discretization and time interval length $\Delta t$ by the penalized GRAPE algorithm proposed in the paper by~\citet{Fei2023binarycontrolpulse}. 
Second, we obtain binary controls from solutions of continuous relaxation by rounding algorithms, such as algorithms in~\citep[Table 1]{Fei2023binarycontrolpulse}, Algorithm~\ref{alg:switch-eobj}, and Algorithm~\ref{alg:switch-cumdiff}.
Third, we merge consecutive time intervals with the same control values as a time interval to derive a sequence of controllers by Algorithm~\ref{alg:extract-switch}. 
Forth, we optimize the switching time points of merged time intervals with the sequence of controllers as parameters, which is discussed in detail in Section~\ref{sec:model-switching}. We present the flow chart for our overall algorithm in Figure~\ref{fig:flowchart}.
\begin{algorithm}[!ht] \caption{Switching time optimization method for binary quantum control problem.\label{alg:overall}} 
    \DontPrintSemicolon
    \SetNoFillComment
    \KwInput{Discretized binary quantum control model~\eqref{eq:model-d-1}.}
    Solve the continuous relaxation of~\eqref{eq:model-d-1} with a fixed time discretization and obtain continuous solutions $u^{\textrm{con}}$.\;
    \label{algline:solvec}
    Obtain heuristic binary solutions $u^\textrm{bin}$ from continuous solutions $u^\textrm{con}$ by a given rounding algorithm, such as algorithms in~\citep[Table 1]{Fei2023binarycontrolpulse}, Algorithms~\ref{alg:switch-eobj}, and~\ref{alg:switch-cumdiff}.\;
        \label{algline:round}
    Merge time intervals with the same control value and derive the controller sequence $\bar{\mathcal{H}}$ with length $S$ by Algorithm~\ref{alg:extract-switch}.\;
        \label{algline:merge}
    Solve switching time optimization model~\eqref{eq:model-ts} in Section~\ref{sec:model-switching} to optimize switching time points with the given controller sequence $\bar{\mathcal{H}}$.\;
        \label{algline:st} 
    \KwOutput{Control time interval lengths for each controller in the controller sequence $\bar{\mathcal{H}}$.}
\end{algorithm}
\begin{algorithm}[!ht]
\caption{Controller sequence extraction from binary controls. \label{alg:extract-switch}}
 \DontPrintSemicolon
    \SetNoFillComment
 \KwInput{Discretized binary controls $u^{\textrm{bin}}$.}
    Initialize controller sequence $\bar{\mathcal{H}}_1 = \left[H^{(0)} +\sum_{j=1}^N u_{j1}H^{(j)}\right]$ and sequence length $S\gets1$.\;
    \For{$k=2,\ldots,T$}{
    \If{$u_{k}\neq u_{k-1}$\label{algline:update-swith-start}} 
    {$S\leftarrow S + 1,\ \bar{H}_S = H^{(0)} + \sum_{j=1}^N u_{jk} H^{(j)}$.\;
    $\Bar{\mathcal{H}}_S\leftarrow \left[\bar{\mathcal{H}}_{S-1},\bar{H}_S\right]$.}
    }
    \KwOutput{Controller sequence $\Bar{\mathcal{H}}_S$ and length $S$. }
\end{algorithm}
\begin{figure}[ht]
    \centering
    \includegraphics[width=0.9\textwidth]{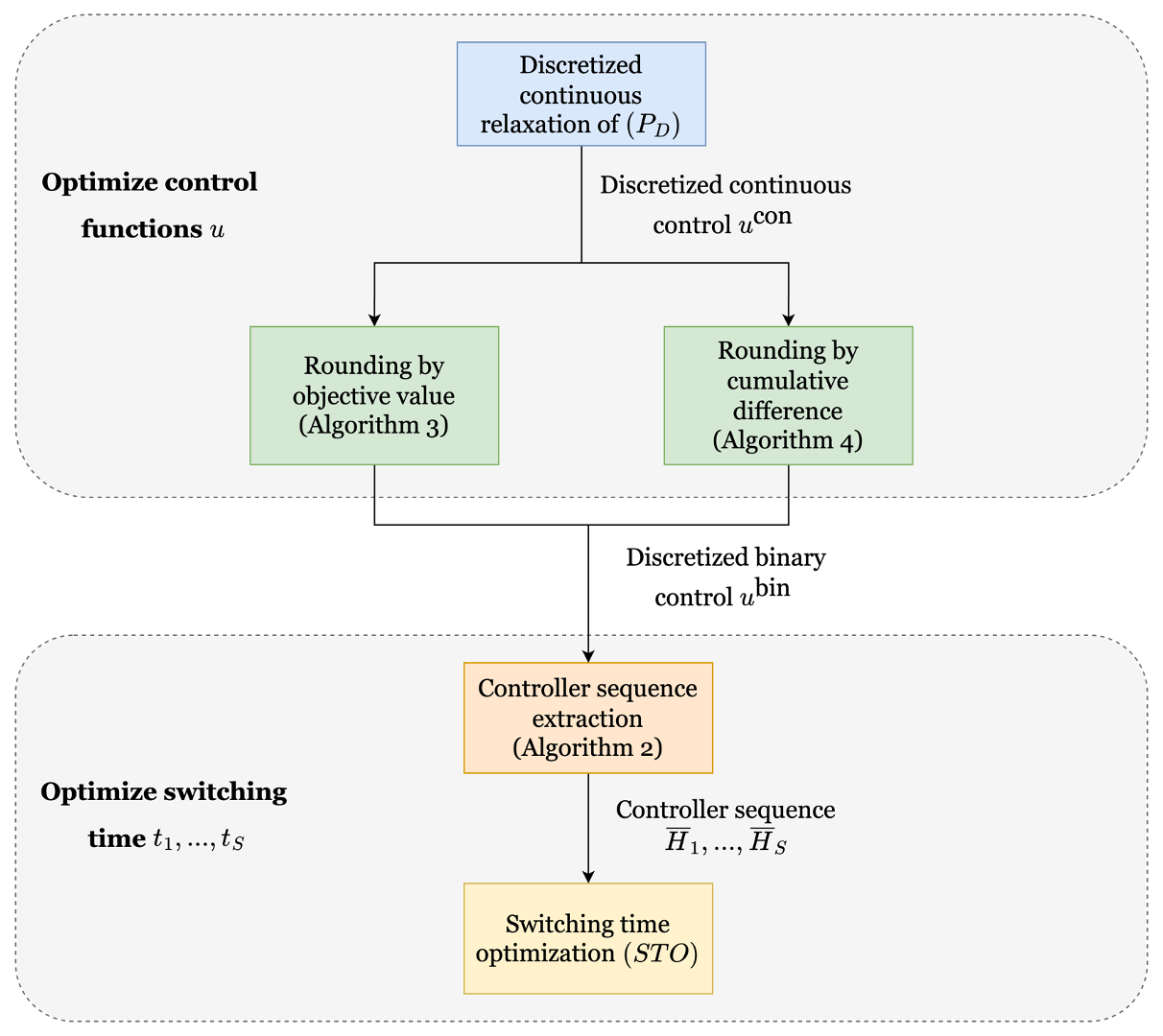}
\caption{Flow chart for the overall algorithmic framework (Algorithm~\ref{alg:overall}). We first optimize control functions $u$, including solving the discretized continuous relaxation of model~\eqref{eq:model-d-1} and applying rounding algorithms (Algorithm~\ref{alg:switch-eobj}--\ref{alg:switch-cumdiff}). Then we optimize the switching times by solving the switching time optimization model~\eqref{eq:model-ts} with controller sequences extracted from binary controls (Algorithm~\ref{alg:extract-switch}) as input.}
    \label{fig:flowchart}
\end{figure}

We present control results of a simple example in Figure~\ref{fig:example-overall}.
In the left figure, we present the piece-wise continuous control function $u^\textrm{con}$ obtained after solving the discretized continuous relaxation (step~\ref{algline:solvec}). 
In the middle figure, we present the binary control obtained after rounding methods (step~\ref{algline:round}). 
We convert the control functions from fractional values to binary values. 
In the right figure, we present the optimized control after optimizing the time of switches obtained by step~\ref{algline:st}.
\begin{figure}[htbp]
    \centering
\subfloat{\includegraphics[width=0.32\textwidth]{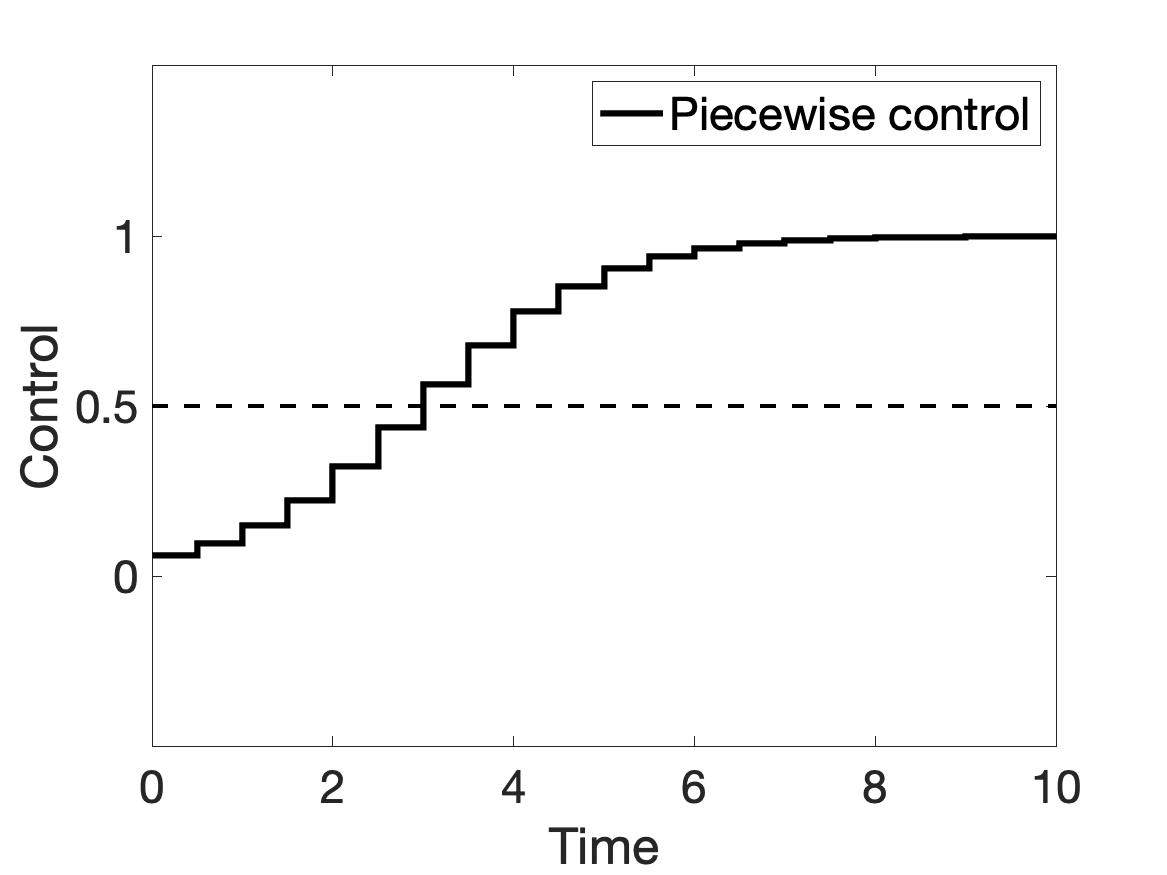}\label{fig:example-overall-c}}
\subfloat{\includegraphics[width=0.32\textwidth]{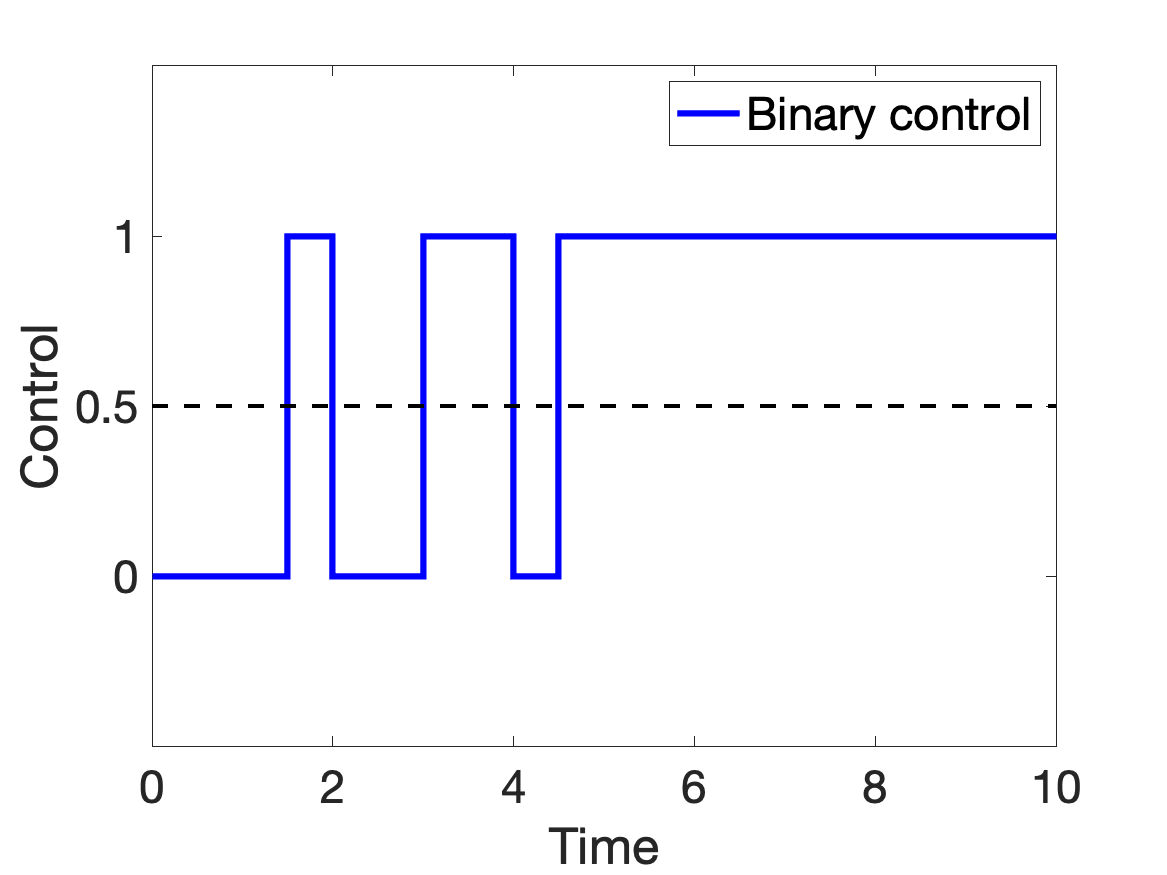}\label{fig:example-overall-b}}
\subfloat{\includegraphics[width=0.32\textwidth]{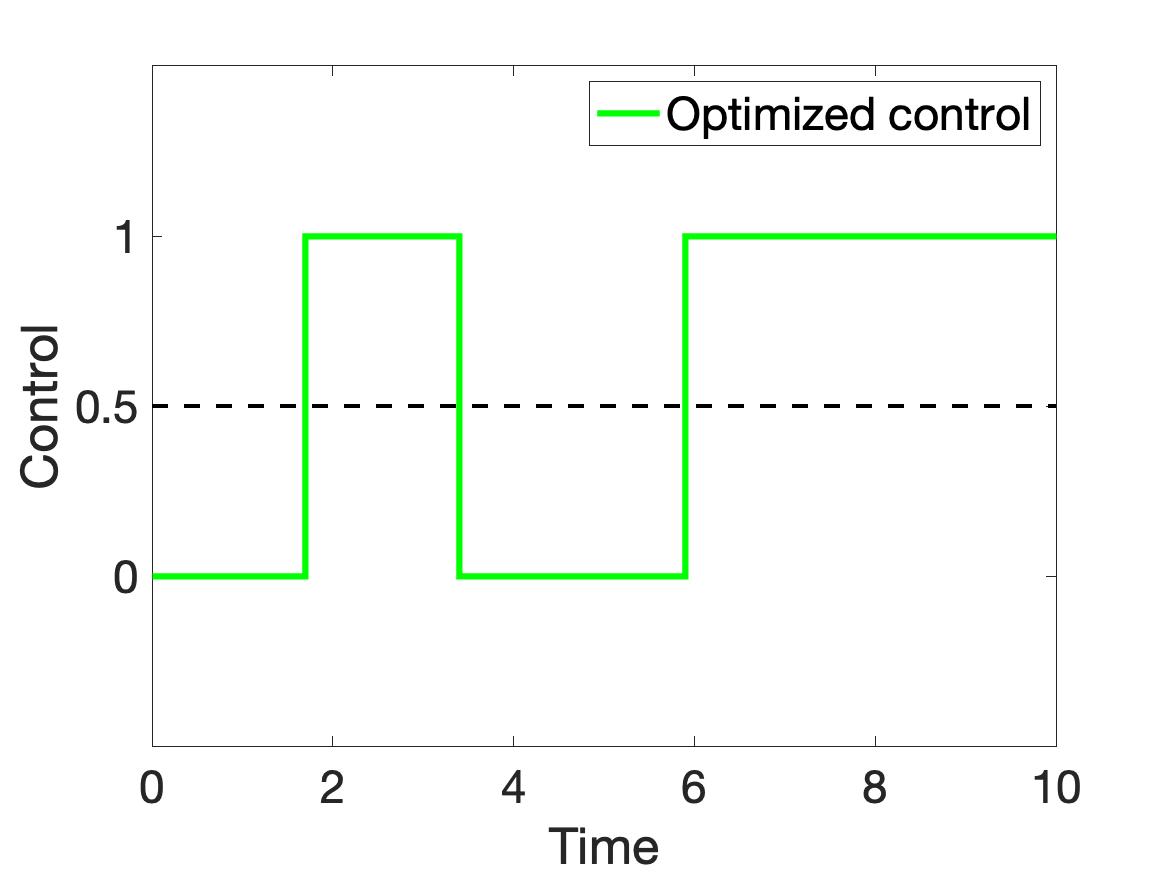}\label{fig:example-overall-st}}
    \caption{Simple example for the overall algorithm framework (Algorithm~\ref{alg:overall}). Left: Piece-wise continuous control $u^\textrm{con}$ after step~\ref{algline:solvec}. Middle: Binary control $u^\textrm{bin}$ after step~\ref{algline:round}. Right: Optimized control after step~\ref{algline:st}.}
    \label{fig:example-overall}
\end{figure}

Various methods to obtain binary controls have been proposed but they either require high computational costs or frequent switches~\citep{Fei2023binarycontrolpulse,sager2012integer,you2011mixed,manns2020multidimensional,sager2011combinatorial}. 
In this section, we present two heuristic rounding methods to obtain controller sequences from given continuous discretized controls $u^\textrm{con}$. 
Our key idea is to balance the difference between continuous and binary controls and the number of switches. For any control $u$, we evaluate the frequency of switching by the total variational (TV) norm defined as cumulative absolute differences between consecutive steps with the formulation:
\begin{align}
    TV(u)=\sum_{k=1}^{T-1} \sum_{j=1}^N \left|u_{jk} - u_{jk+1}\right|.
\end{align}
We evaluate the difference based on two metrics and propose the corresponding methods in Sections~\ref{sec:alg-obj} and \ref{sec:alg-sur}, respectively. 
In each section, we also prove that for both methods, if we only consider minimizing the difference between continuous and binary controls, when the length of time intervals goes to zero, the limit of the objective value of binary controls is no more than the limit of the objective value of the continuous control under the objective functions~\eqref{eq:obj-energy} and \eqref{eq:obj-fid} considered in this paper. 

\subsection{Method Based on Objective Value}
\label{sec:alg-obj} 
We convert unitary operator variables $X$ and Hamiltonian variables $H$ to implicit functions of control variables $u$ by constraints~\eqref{eq:model-d-1-cons-h}--\eqref{eq:model-d-1-cons-i}. The final operator $X_T$ can also be computed as a function of $u$ as
\begin{align}
    X_T(u) = \prod_{k=1}^T e^{-i\left(H^{(0)} + \sum_{j=1}^N u_{jk} H^{(j)}\right)\Delta
    t}X_{\textrm{init}}, 
\end{align}
which is equivalent to solving the Schr\"odinger equation~\eqref{eq:model-d-1-cons-h}--\eqref{eq:model-d-1-cons-i}. 
Substituting $X_T(u)$ into the objective function $F$, we derive an objective function only dependent on $u$, denoted as $\bar{F}(u)$ by eliminating intermediate variables $H_k$ and $X_k$, $k=1,\ldots,T $. 

In this section, we propose Algorithm~\ref{alg:switch-eobj} based on evaluating the difference between binary and continuous solutions by the difference of objective values $\bar{F}(u^\textrm{bin})$ and $\bar{F}(u^\textrm{con})$. We define $\alpha$ as the TV norm penalty parameter.
We present the explanation as follows. 
At each time step $k$, we examine the current control $[u_1^\textrm{bin},\ldots,u_{k-1}^\textrm{bin}, \hat{u}, u_{k+1}^\textrm{con},\ldots,u_{T}^\textrm{con}]\in [0,1]^{N\times T}$ for all $\hat{u}\in \mathcal{U}$ and choose the control with smallest objective value $\hat{u}^*$ (see step~\ref{algline:obj-ite}--\ref{algline:obj-evaluate}).  
We compare the choices of keeping the control values of the previous time step, i.e. $\hat{u} = u_{k-1}^\textrm{bin}$, and updating control values, i.e. $\hat{u} = \hat{u}^*$ (see step~\ref{algline:obj-condition}). 
If the objective value of keeping the control is no larger than the TV-norm value of updating the control weighted by $\alpha$, we keep the control (see step~\ref{algline:obj-keep}). 
Otherwise, we choose the control with the smallest objective value and update the TV-norm value (see step~\ref{algline:obj-update}). 
If $\alpha=0$, we always choose the control with the smallest objective value. 
Furthermore, we update the binary control $u^\textrm{bin}_k$ by the selected control and move to the next time step $k+1$ (see step~\ref{algline:obj-update-last}). This approach requires $T-1$ simulations at a receding time horizon. 

The computational advantages of conducting time-evolution processes on quantum computers allow computing objective values quickly. For simulations on classical computers, we propose an acceleration technique to avoid conducting time evolution for every examined control. 
With continuous solutions $u^\textrm{con}$, we define back propagators $\mu_k$ as 
\begin{align}
\label{eq:acc-eobj-back}
    \mu_k = \mu_{k+1}e^{-i\left(H^{(0)} + \sum_{j=1}^N u^\textrm{con}_{jk} H^{(j)}\right)\Delta t},\ k=1,\ldots,T,
\end{align}
where $\mu_{T+1}$ is an identity matrix. At each time step, after determining the binary control, the operator corresponding to binary controls is updated as
\begin{align}
\label{eq:acc-state}
    X^\textrm{bin}_{k} = e^{-i\left(H^{(0)} + \sum_{j=1}^N u^\textrm{bin}_{jk}H^{(j)}\right)\Delta t}X^\textrm{bin}_{k-1}, 
\end{align}
where $X_0^\textrm{bin}=X_\textrm{init}$. We have the following proposition to evaluate the objective value. 
\begin{proposition}
\label{prop:acc-obj}
    At time step $k=1,\ldots,T$, the objective value of a current examined control is computed by
    \begin{align}
\label{eq:acc-obj}
    \bar{F}\left([u_1^\textrm{bin},\ldots,u_{k-1}^\textrm{bin}, \hat{u}, u_{k+1}^\textrm{con},\ldots,u_{T}^\textrm{con}]\right) = F\left(\mu_{k+1}e^{-i\left(H^{(0)} + \sum_{j=1}^N \hat{u}_j H^{(j)}\right)\Delta t}X^\textrm{bin}_{k-1}\right),\ \forall \hat{u}\in \mathcal{U}.
\end{align}
\end{proposition}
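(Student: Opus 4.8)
The plan is to prove the identity by directly factoring the matrix product that defines $X_T$ and matching the resulting blocks against the forward operator $X^\textrm{bin}_{k-1}$ and the back propagator $\mu_{k+1}$. First I would substitute the mixed control vector $v = [u_1^\textrm{bin},\ldots,u_{k-1}^\textrm{bin},\hat u, u_{k+1}^\textrm{con},\ldots,u_T^\textrm{con}]$ into the explicit product expression for $X_T(u)$, so that $\bar F(v) = F(X_T(v))$ with $X_T(v)$ written as an ordered product of $T$ matrix exponentials acting on $X_\textrm{init}$. Because these exponentials do not commute, the entire argument hinges on respecting the ordering: the factor for step $T$ sits leftmost, while the factor for step $1$ acts first on $X_\textrm{init}$.

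Next I would split this ordered product into three consecutive blocks, separated at the transition between steps $k-1$ and $k$ and at the transition between steps $k$ and $k+1$. The rightmost block consists of the factors for steps $1,\ldots,k-1$, which use the binary controls $u^\textrm{bin}$; applied to $X_\textrm{init}$ it telescopes to $X^\textrm{bin}_{k-1}$ by repeatedly invoking the forward recursion~\eqref{eq:acc-state} starting from $X_0^\textrm{bin}=X_\textrm{init}$. The single middle factor is $e^{-i(H^{(0)}+\sum_{j=1}^N \hat u_j H^{(j)})\Delta t}$ by construction. For the leftmost block I would unfold the back-propagator recursion~\eqref{eq:acc-eobj-back} from $\mu_{T+1}=I$; a short induction on the recursion index shows that $\mu_{k+1}$ equals the ordered product of the continuous-control factors for steps $k+1,\ldots,T$, with the step-$T$ factor leftmost, which is precisely the leftmost block.

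Assembling the three blocks in their correct left-to-right order yields $X_T(v) = \mu_{k+1}\, e^{-i(H^{(0)}+\sum_{j=1}^N \hat u_j H^{(j)})\Delta t}\, X^\textrm{bin}_{k-1}$, and applying $F$ to both sides gives the claimed equation~\eqref{eq:acc-obj}. Since $\hat u\in\mathcal U$ was arbitrary, the identity holds for every control examined at step $k$.

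I do not expect any genuine obstacle, as the argument is essentially bookkeeping. The one point requiring care is the non-commutativity of the exponential factors: I would state explicitly that the back propagator accumulates factors to the left while the forward operator accumulates them to the right, so that concatenating the three blocks reproduces the original product with no factor displaced from its position.
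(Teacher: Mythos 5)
Your proposal is correct and follows essentially the same route as the paper's proof: both decompose the full ordered product of $T$ exponentials into the back propagator $\mu_{k+1}$ (steps $k+1,\ldots,T$), the single step-$k$ exponential, and the forward binary operator $X^{\textrm{bin}}_{k-1}$ (steps $1,\ldots,k-1$), then apply $F$. The paper states this more tersely by citing the definitions in~\eqref{eq:acc-eobj-back}--\eqref{eq:acc-state}, whereas you spell out the telescoping/induction and the ordering convention explicitly, but the argument is the same.
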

\begin{proof}
    For any examined control, the final operator $\hat{X}_T$ is computed by 
    \begin{align}
        \displaystyle \hat{X}_T = \prod_{t=k+1}^{T} e^{-i\left(H^{(0)} + \sum_{j=1}^N u^\textrm{con}_{jt} H^{(j)}\right)\Delta t}X_k
        =\mu_{k+1} X_k = \mu_{k+1} e^{-i\left(H^{(0)} + \sum_{j=1}^N \hat{u}_j H^{(j)}\right)\Delta t} X_{k-1}^\textrm{bin}, 
    \end{align}
    where the last two equalities directly follow the definition $\mu_{k}$ and $X_{k}^\textrm{bin}$ in~\eqref{eq:acc-eobj-back}--\eqref{eq:acc-state}. 
    From the definition of $\Bar{F}$, we have
    \begin{align}
        \bar{F}\left([u_1^\textrm{bin},\ldots,u_{k-1}^\textrm{bin}, \hat{u}, u_{k+1}^\textrm{con},\ldots,u_{T}^\textrm{con}]\right) 
        = F(\hat{X}_T) = F\left(\mu_{k+1}e^{-i\left(H^{(0)} + \sum_{j=1}^N \hat{u}_j H^{(j)}\right)\Delta t}X^\textrm{bin}_{k-1}\right).
    \end{align}
\end{proof}
From Proposition~\ref{prop:acc-obj}, we show that the pre-computing and the reformulation of evaluating the objective value significantly reduce the matrix operation times, which we discuss in detail in the following proposition. 
\begin{proposition}
    Let $N'=\mathcal{U}$ be the number of feasible examined controls at each time step. With Algorithm~\ref{alg:switch-eobj}, we reduce the computation of matrix exponentials from $O(N'T^2)$ to $O(N'+T)$ and the computation of matrix multiplications from $O(N'T^2)$ to $O(N'T)$.
\end{proposition}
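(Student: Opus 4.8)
The plan is to prove this by a direct accounting of matrix operations, comparing the naive per-step-from-scratch evaluation against the accelerated scheme enabled by Proposition~\ref{prop:acc-obj}. First I would establish the naive baseline: evaluating $\bar{F}$ at a single examined control $[u_1^\textrm{bin},\ldots,\hat{u},\ldots,u_T^\textrm{con}]$ by directly forming $X_T(u)$ requires computing the full product of $T$ matrix exponentials and performing $T$ matrix multiplications against $X_\textrm{init}$. Since there are $N'$ feasible controls $\hat{u}\in\mathcal{U}$ examined at each of the $T$ time steps, this yields $N'T$ objective evaluations, each costing $T$ exponentials and $T$ multiplications, for a total of $O(N'T^2)$ of each operation. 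This gives the baseline figures.

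For the accelerated count I would separate the work into a precomputation phase and an evaluation phase. The key observation is that, under the fixed (equal) time discretization, the exponential factor $e^{-i(H^{(0)}+\sum_j u_j H^{(j)})\Delta t}$ depends only on the control value $u$ and not on its position in the sequence, so only finitely many distinct exponentials ever appear. In the precomputation phase I would (i) form the $T$ exponentials $e^{-i(H^{(0)}+\sum_j u^\textrm{con}_{jk}H^{(j)})\Delta t}$, $k=1,\ldots,T$, and chain them into the back propagators $\mu_1,\ldots,\mu_T$ via the recursion~\eqref{eq:acc-eobj-back}, using $T$ multiplications; and (ii) form the $N'$ exponentials $e^{-i(H^{(0)}+\sum_j \hat{u}_j H^{(j)})\Delta t}$, one for each $\hat{u}\in\mathcal{U}$, reusable across all time steps. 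This phase contributes $T+N'=O(N'+T)$ exponentials and $O(T)$ multiplications.

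In the evaluation phase I would invoke Proposition~\ref{prop:acc-obj} to replace each objective computation by $F(\mu_{k+1}E_{\hat{u}}X_{k-1}^\textrm{bin})$, with both $E_{\hat{u}}$ (the precomputed exponential for $\hat{u}$) and $\mu_{k+1}$ already stored; hence each of the $N'T$ evaluations costs only the two multiplications needed to combine these three stored matrices and zero new exponentials, while maintaining $X_k^\textrm{bin}$ through~\eqref{eq:acc-state} adds one further already-available multiplication per step. Summing over both phases, the exponentials total $T+N'=O(N'+T)$ and the multiplications total $T+2N'T+T=O(N'T)$, which is the claim.

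The argument is essentially bookkeeping, so the point to get right — rather than a genuine obstacle — is the justification that the exponential factors may be precomputed and reused. This rests on the constant spacing $\Delta t$ together with the fact that the examined set $\mathcal{U}$ is identical at every step, which is exactly what collapses the $N'T$ candidate control-exponentials down to $N'$ and the per-evaluation chain of $T$ continuous exponentials down to the single shared back-propagator sequence. I would state this reuse explicitly and verify that the precomputation operations are not double-counted in the evaluation phase; the $O(T-1)$ versus $O(T)$ discrepancy from the receding-horizon remark is immaterial to the asymptotics.
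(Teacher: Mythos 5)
Your proposal is correct and follows essentially the same argument as the paper's proof: establish the $O(N'T^2)$ baseline by counting $T$ exponentials and $T$ multiplications per evaluation over $N'T$ evaluations, then split the accelerated scheme into precomputation ($T$ continuous-control exponentials chained into the $\mu_k$, plus $N'$ candidate exponentials reused across steps, giving $O(N'+T)$ exponentials) and evaluation via Proposition~\ref{prop:acc-obj} ($O(N')$ multiplications per step plus one update of $X_k^\textrm{bin}$, giving $O(N'T)$ multiplications). Your explicit justification of exponential reuse via the fixed $\Delta t$ and the step-independence of $\mathcal{U}$ is a point the paper leaves implicit, but the accounting is the same.
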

\begin{proof}
    Without our reformulation of the objective value and pre-computation, we require computing $O(T)$ matrix exponentials and $O(T)$ matrix multiplications during each time evolution process. We need to evaluate all the examined controls at each time step, which means that we need to do $N'T$ time evolution processes, hence the total number is $O(N'T^2)$ for both matrix exponentials and multiplications. 
    With our acceleration, we only need to compute $T$ unitary propagators of the continuous control and the matrix exponential of each possible control in $\mathcal{U}$. 
    Hence, the order of computing matrix exponentials is $O(N'+T)$. 
    We need to perform $T$ matrix multiplications for computing $\mu_k,\ k=1,\ldots,T$. 
    At each time step, we need to compute the objective value of setting each controller active, which requires total $O(N')$ matrix multiplications, and then update the current operator $X_k^\textrm{bin}$, which only requires $1$ multiplication. Therefore, taking a summation of all the time steps, the order of matrix multiplications is $O(N'T)$. 
\end{proof}
We present the details of the algorithm in Algorithm~\ref{alg:switch-eobj}. 
\begin{algorithm}[ht] \caption{Heuristic Rounding Methods based on Objective Values (Obj)
\def\@currentlabel{{\it Obj}}
\label{alg:switch-eobj}
} 
    \DontPrintSemicolon
    \SetNoFillComment
    \KwInput{Continuous control $u^\textrm{con}$ and constant TV norm parameter $\alpha$.}
    Initialize binary control $u^\textrm{bin}=u^\textrm{con}$ and current objective value $F_\textrm{cur}=\bar{F}(u^\textrm{con})$.\;
    Pre-compute back propagators $\mu_k,\ k=1,\ldots,T$ by Eqn~\eqref{eq:acc-eobj-back}.\;
    Pre-compute matrix exponentials $e^{-i\left(H^{(0)}+\sum_{j=1}^N \hat{u}_{j}H^{(j)}\right)}$ for all $\hat{u}\in \mathcal{U}$.\;
    Let $u_1^\mathrm{bin}=\argmin_{\hat{u}\in \mathcal{U}} \bar{F}\left([\hat{u}, u_2^\mathrm{con},\ldots,u_{T}^\mathrm{con}]\right)$.\;
    \For{$k=2,\ldots,T$
    \label{algline:obj-ite}}{
    Let $\hat{u}^* = \argmin_{\hat{u}\in \mathcal{U}} \bar{F}\left([u_1^\textrm{bin},\ldots,u_{k-1}^\textrm{bin}, \hat{u}, u_{k+1}^\textrm{con},\ldots,u_{T}^\textrm{con}]\right)$.\;
        \label{algline:obj-evaluate}
\eIf{$\bar{F}\left([u_1^\textrm{bin},\ldots,u_{k-1}^\textrm{bin}, u_{k-1}^\textrm{bin}, u_{k+1}^\textrm{con},\ldots,u_{T}^\textrm{con}]\right) \leq \alpha TV([u_1^\textrm{bin},\ldots,u_{k-1}^\textrm{bin}, \hat{u}^*, u_{k+1}^\textrm{con},\ldots,u_{T}^\textrm{con}])$\label{algline:obj-condition}}
    {
    Update binary control $u_k^\textrm{bin} = u_{k-1}^\textrm{bin}$.\;
    \label{algline:obj-keep}}   
    {
    Update binary control $u_k^\textrm{bin} = \hat{u}^*$, break ties 
    with smaller TV value.\;
    \label{algline:obj-update}}
    Update $X_k^\textrm{bin}$ by Eqn.~\eqref{eq:acc-state} and objective value $F_\textrm{cur}=\bar{F}\left([u_1^\textrm{bin},\ldots,u_{k}^\textrm{bin}, u_{k+1}^\textrm{con},\ldots,u_{T}^\textrm{con}]\right)$.\;
    \label{algline:obj-update-last}}
    \KwOutput{Binary control $u^\textrm{bin}$.}
\end{algorithm}
For any bounded objective function with upper bound $F_{ub}$, the TV-norm value with penalty parameter $\alpha$ is upper-bounded by $F_{ub}/\alpha$. In the following theorems, we discuss the difference between the objective values of binary controls $u^\textrm{bin}$ and input continuous controls $u^\textrm{con}$ when setting the penalty parameter $\alpha=0$. All the conclusions in this section are proved for both the energy objective function~\eqref{eq:obj-energy} and the infidelity objective function~\eqref{eq:obj-fid}. 
Before presenting our main results, we review a series of lemmas corresponding to unitary matrices and singular values used in our proof. The proof of Lemma~\ref{lemma:singular-1}--\ref{lemma:singular-3} is presented by~\citet{horn_johnson_1991}. 
\begin{lemma}
\label{lemma:singular-1}
(\citep[Section 3.1]{horn_johnson_1991})
The singular values are invariant under the multiplication of unitary matrices. 
\end{lemma}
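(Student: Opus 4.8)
The plan is to reduce the claim to the standard characterization of singular values and to exploit the fact that a product of unitary matrices is again unitary. Recall that the singular values of a matrix $A$ are precisely the nonnegative square roots of the eigenvalues of $A^\dagger A$, or equivalently the diagonal entries $\sigma_1 \geq \cdots \geq \sigma_n \geq 0$ appearing in a singular value decomposition $A = W\Sigma V^\dagger$, where $W$ and $V$ are unitary and $\Sigma = \mathrm{diag}(\sigma_1,\ldots,\sigma_n)$. Because these diagonal entries are uniquely determined by $A$, it suffices to exhibit, for unitary matrices $U_1$ and $U_2$, a singular value decomposition of $U_1 A U_2$ whose middle factor is the same $\Sigma$.

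First I would treat the two-sided product directly through the SVD. Starting from $A = W\Sigma V^\dagger$, I substitute to obtain $U_1 A U_2 = (U_1 W)\,\Sigma\,(U_2^\dagger V)^\dagger$. Since $U_1 W$ and $U_2^\dagger V$ are each products of unitary matrices, they are unitary, so the right-hand side is a genuine singular value decomposition of $U_1 A U_2$ with middle factor exactly $\Sigma$. By uniqueness of the singular values, $U_1 A U_2$ and $A$ share the same singular values.

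As an alternative derivation and a consistency check, I would verify the two one-sided cases through the Gram matrix $A^\dagger A$. For left multiplication, $(U_1 A)^\dagger (U_1 A) = A^\dagger U_1^\dagger U_1 A = A^\dagger A$, so the two Gram matrices are identical and hence have identical eigenvalues; for right multiplication, $(A U_2)^\dagger (A U_2) = U_2^\dagger (A^\dagger A) U_2$ is unitarily similar to $A^\dagger A$ and therefore has the same eigenvalues. Taking nonnegative square roots yields equality of singular values in each case, and composing the two recovers the general statement.

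This result is classical and essentially computational, so I do not anticipate a genuine obstacle; the only points requiring care are the bookkeeping of the conjugate transpose when absorbing $U_2$ into the right factor (so that the product remains in the form unitary $\cdot\ \Sigma\ \cdot$ unitary$^\dagger$) and the appeal to uniqueness of the singular values, which is what lets us conclude equality rather than merely similarity of the two decompositions.
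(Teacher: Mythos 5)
Your proof is correct. The paper does not prove this lemma at all---it simply cites \citep[Section 3.1]{horn_johnson_1991} and defers the argument to that reference---and what you have written (absorbing the unitary factors into the SVD, $U_1 A U_2 = (U_1 W)\,\Sigma\,(U_2^\dagger V)^\dagger$, with the Gram-matrix computation $(U_1 A)^\dagger (U_1 A) = A^\dagger A$ as the justification for uniqueness) is exactly the standard textbook argument that the citation points to, so there is nothing to reconcile.
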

\begin{lemma}
\label{lemma:singular-2}
(\citep[Theorem 3.3.4, 3.3.16]{horn_johnson_1991})
Let $A,\ B\in \mathbb{C}^{m\times m}$ be two complex matrices and let $\sigma_1(\cdot)$ represent the maximum singular value of a matrix, then
    $ \sigma_1(AB)\leq \sigma_1(A)\sigma_1(B),\ \sigma_1(A+B)\leq \sigma_1(A) + \sigma_1(B)$.
\end{lemma}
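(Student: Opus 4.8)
The plan is to reduce both claimed inequalities to standard properties of the spectral (operator $2$-)norm, using the fact that the maximum singular value coincides with that norm. The key observation is that $\sigma_1(\cdot)$ is not merely a spectral quantity but equals $\max_{\|x\|_2=1}\|\cdot\,x\|_2$; once this variational identity is in hand, both inequalities follow immediately from the submultiplicativity of an induced operator norm and from the triangle inequality for the Euclidean vector norm.

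First I would recall the singular value decomposition $A=U\Sigma V^{*}$ with $U,V$ unitary and $\Sigma=\operatorname{diag}(\sigma_1,\ldots,\sigma_m)$, $\sigma_1\geq\cdots\geq\sigma_m\geq 0$, and then establish the variational characterization $\sigma_1(A)=\max_{\|x\|_2=1}\|Ax\|_2$. This follows by writing, for $y=V^{*}x$,
\[
\|Ax\|_2^2=\|\Sigma y\|_2^2=\sum_{i=1}^m \sigma_i^2\,|y_i|^2\leq \sigma_1^2\sum_{i=1}^m|y_i|^2=\sigma_1^2\,\|x\|_2^2,
\]
where I use that unitary matrices preserve the Euclidean norm (so $\|Ax\|_2=\|\Sigma V^{*}x\|_2$ and $\|y\|_2=\|x\|_2$); the bound is attained at $x=Ve_1$, giving the claimed maximum.

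Second, with this identity both statements drop out. For any unit vector $x$, submultiplicativity follows from $\|ABx\|_2\leq\sigma_1(A)\|Bx\|_2\leq\sigma_1(A)\sigma_1(B)\|x\|_2$, and taking the maximum over $\|x\|_2=1$ yields $\sigma_1(AB)\leq\sigma_1(A)\sigma_1(B)$. Likewise, subadditivity follows from $\|(A+B)x\|_2\leq\|Ax\|_2+\|Bx\|_2\leq\sigma_1(A)+\sigma_1(B)$ by the triangle inequality for $\|\cdot\|_2$, and maximizing over unit $x$ gives $\sigma_1(A+B)\leq\sigma_1(A)+\sigma_1(B)$.

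The only real content—hence the main obstacle—is the variational identity $\sigma_1(A)=\max_{\|x\|_2=1}\|Ax\|_2$, equivalently $\sigma_1(A)^2=\lambda_{\max}(A^{*}A)$, which rests on the spectral theorem for the Hermitian positive-semidefinite matrix $A^{*}A$ (equivalently, on the existence of the SVD). This is closely allied to the unitary-invariance fact of Lemma~\ref{lemma:singular-1}. Once that bridge is built, no further machinery is required and everything reduces to the norm axioms.
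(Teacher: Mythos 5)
Your proof is correct and complete. Note, however, that the paper itself never proves this lemma: it is quoted as a known result, with the proof explicitly deferred to \citep[Theorems 3.3.4, 3.3.16]{horn_johnson_1991}, and the only lemma the paper proves in Appendix~\ref{app:proof} is Lemma~\ref{lemma:singular-4}. The cited Horn--Johnson theorems are considerably more general statements about the whole singular-value spectrum (inequalities of the type $\sigma_{i+j-1}(AB)\leq \sigma_i(A)\sigma_j(B)$ and $\sigma_{i+j-1}(A+B)\leq \sigma_i(A)+\sigma_j(B)$, proved there with variational and majorization machinery), of which the lemma is precisely the special case $i=j=1$. What you do instead is build a self-contained elementary argument for exactly the case needed: you establish the variational identity $\sigma_1(A)=\max_{\|x\|_2=1}\|Ax\|_2$ from the SVD (with attainment at $x=Ve_1$, and using unitary invariance of the Euclidean norm, which is the same fact underlying Lemma~\ref{lemma:singular-1}), after which submultiplicativity and subadditivity reduce to the axioms of an induced operator norm. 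This buys a proof requiring nothing beyond the spectral theorem, and it is the right level of generality for how the lemma is actually used in the paper, since only $\sigma_1$ ever enters the estimates in the proofs of Theorems~\ref{theo:str} and~\ref{theo:str-no-sos1}; the citation route buys statements about all singular values, which the paper never needs. The one step carrying real content, as you correctly flag, is the identity $\sigma_1(A)^2=\lambda_{\max}(A^{\dagger}A)$, and your verification of it (including the attainment argument, so that the maximum is genuinely achieved) is sound.
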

\begin{lemma}
\label{lemma:singular-3}
(\citep[Theorem 3.1.2]{horn_johnson_1991}) 
Let $A$ be an arbitrary complex matrix, then for any unit vector $|\psi\rangle$ such that $\||\psi\rangle\|_2=1$, it holds that $\left|\langle \psi|A|\psi\rangle\right| \leq \sigma_1(A)$.
\end{lemma}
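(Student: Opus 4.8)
The plan is to reduce the inequality to a single application of the Cauchy--Schwarz inequality by diagonalizing $A$ through its singular value decomposition. First I would write $A = U\Sigma V^\dagger$, where $U,V \in \mathbb{C}^{m\times m}$ are unitary and $\Sigma = \operatorname{diag}(\sigma_1,\ldots,\sigma_m)$ lists the singular values in nonincreasing order, so that $\sigma_1 = \sigma_1(A)$ is the largest. Substituting into the quadratic form gives $\langle\psi|A|\psi\rangle = \langle\psi|U\Sigma V^\dagger|\psi\rangle$. I would then set $|a\rangle = U^\dagger|\psi\rangle$ and $|b\rangle = V^\dagger|\psi\rangle$; because $U$ and $V$ are unitary and $\||\psi\rangle\|_2 = 1$, both $|a\rangle$ and $|b\rangle$ are again unit vectors (the same norm-invariance underlying Lemma~\ref{lemma:singular-1}).

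Writing the form out in coordinates yields $\langle\psi|A|\psi\rangle = \sum_{i=1}^m \sigma_i\, \overline{a_i}\, b_i$. Bounding each $\sigma_i \le \sigma_1$ and applying the triangle inequality gives
\[
\left|\langle\psi|A|\psi\rangle\right| \;\le\; \sigma_1 \sum_{i=1}^m |a_i|\,|b_i|.
\]
A final Cauchy--Schwarz step bounds $\sum_{i} |a_i|\,|b_i| \le \||a\rangle\|_2\,\||b\rangle\|_2 = 1$, so that $\left|\langle\psi|A|\psi\rangle\right| \le \sigma_1 = \sigma_1(A)$, which is the claim.

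Alternatively, and even more directly, I would invoke the standard characterization of the largest singular value as the spectral norm, $\sigma_1(A) = \max_{\||x\rangle\|_2 = 1}\|A|x\rangle\|_2$. Then a single Cauchy--Schwarz estimate suffices: $\left|\langle\psi|A|\psi\rangle\right| \le \||\psi\rangle\|_2\,\|A|\psi\rangle\|_2 \le \sigma_1(A)$, where the last inequality uses $\|A|\psi\rangle\|_2 \le \sigma_1(A)\,\||\psi\rangle\|_2$ together with $\||\psi\rangle\|_2 = 1$.

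The main obstacle, such as it is, is only a matter of selecting the right representation of $\sigma_1(A)$: once either the SVD or the spectral-norm characterization is in hand, the result collapses to Cauchy--Schwarz and there is no genuine difficulty. The one point to be careful about is to apply the decomposition to $A$ itself rather than to $A^\dagger A$, so that the diagonal entries of $\Sigma$ are the singular values directly and not their squares.
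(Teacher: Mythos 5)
Your proof is correct. Note that the paper does not actually prove this lemma: it is quoted directly from \citep[Theorem 3.1.2]{horn_johnson_1991}, with the proof deferred to that reference, so there is no in-paper argument to compare against. Both of your routes---the SVD reduction to Cauchy--Schwarz and the spectral-norm characterization $\sigma_1(A)=\max_{\||x\rangle\|_2=1}\|A|x\rangle\|_2$ followed by a single Cauchy--Schwarz estimate---are standard, complete, and exactly the kind of argument the cited theorem rests on; the second is the more economical of the two.
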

\begin{lemma}
\label{lemma:singular-4}
Let $U\in \mathbb{C}^{m\times m}$ be a unitary matrix and $A\in \mathbb{C}^{m\times m}$ be an arbitrary complex matrix, then
    $\displaystyle \left|\mathbf{tr}\left\{UA\right\}\right|\leq \sum_{i=1}^m \sigma_i(A)\leq m\sigma_1(A)$, 
where $\sigma_1(A)\geq \ldots\geq \sigma_m(A)\geq 0$ are the singular values of $A$. 
\end{lemma}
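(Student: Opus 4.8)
The plan is to reduce the trace to a diagonal sum via the singular value decomposition and then exploit unitarity. First I would write the SVD $A = V\Sigma W^\dagger$, where $V, W \in \mathbb{C}^{m\times m}$ are unitary and $\Sigma = \mathrm{diag}(\sigma_1(A),\ldots,\sigma_m(A))$ collects the (nonnegative) singular values on its diagonal. Using the cyclic invariance of the trace, $\mathbf{tr}\{UA\} = \mathbf{tr}\{U V \Sigma W^\dagger\} = \mathbf{tr}\{(W^\dagger U V)\Sigma\}$. Setting $Q := W^\dagger U V$, which is a product of unitary matrices and hence itself unitary, turns the problem into bounding $\mathbf{tr}\{Q\Sigma\}$ for a unitary $Q$ and a nonnegative diagonal $\Sigma$.

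Next I would expand this trace explicitly. Because $\Sigma$ is diagonal, $\mathbf{tr}\{Q\Sigma\} = \sum_{i=1}^m Q_{ii}\,\sigma_i(A)$, so the triangle inequality gives $\left|\mathbf{tr}\{UA\}\right| \le \sum_{i=1}^m |Q_{ii}|\,\sigma_i(A)$, using $\sigma_i(A)\ge 0$. The key observation is that every diagonal entry of a unitary matrix has modulus at most one: the $i$-th column of $Q$ is a unit vector, so $|Q_{ii}|^2 \le \sum_{\ell=1}^m |Q_{\ell i}|^2 = 1$, i.e.\ $|Q_{ii}|\le 1$. Substituting this bound yields $\left|\mathbf{tr}\{UA\}\right| \le \sum_{i=1}^m \sigma_i(A)$, which is the first inequality in the statement.

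The second inequality is then immediate from the ordering $\sigma_1(A)\ge \cdots \ge \sigma_m(A)\ge 0$: each of the $m$ summands is at most $\sigma_1(A)$, so $\sum_{i=1}^m \sigma_i(A) \le m\,\sigma_1(A)$.

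I do not expect a substantial obstacle here; the only point requiring genuine care is the bound $|Q_{ii}|\le 1$, which follows directly from the orthonormality of the columns of a unitary matrix. One alternative route is to invoke von Neumann's trace inequality $\left|\mathbf{tr}\{AB\}\right|\le \sum_i \sigma_i(A)\sigma_i(B)$ with $B=U$ (whose singular values are all equal to one), but that inequality is itself harder to establish than the elementary SVD argument above, so I would prefer the direct approach.
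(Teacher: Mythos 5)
Your proof is correct and takes essentially the same route as the paper's: both decompose $A$ by its SVD and reduce the claim to the bound $\left|v_i^\dagger U w_i\right|\leq 1$ for singular vector pairs, which you phrase as the diagonal entries of the unitary matrix $Q=W^\dagger UV$ having modulus at most one, and the paper phrases as Cauchy--Schwarz applied to $\mathbf{tr}\{Uw_i\sigma_i(A)v_i^\dagger\}$ after expanding $A=\sum_i w_i\sigma_i(A)v_i^\dagger$. The two arguments are the same computation in different notation, so no further comparison is needed.
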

The proof of Lemma~\ref{lemma:singular-4} is presented in Appendix~\ref{app:proof}. 
In the following theorem, we prove that if $u^\textrm{con}$ has the SOS1 property, for the energy objective function and infidelity objective function discussed in Section~\ref{sec:intro}, the difference of objective values $\bar{F}(u^\textrm{bin})-\bar{F}(u^\textrm{con})$ is bounded by $O(\Delta t)$. 
\begin{theorem}
\label{theo:str}
Let $u^\textrm{con}$ be the solution of the continuous relaxation of~\eqref{eq:model-d-1} with a fixed time discretization and $u^\textrm{bin}$ be the binary controls obtained by Algorithm~\ref{alg:switch-eobj} when $\alpha=0$. If the SOS1 property holds for continuous controls, which means that $\sum_{j=1}^N u^\textrm{con}_{jk} = 1,\ k=1,\ldots,T$, and is required for binary controls, then there exists constants $C_1,\ C_2 > 0$ such that for energy objective function~\eqref{eq:obj-energy} and infidelity objective function~\eqref{eq:obj-fid}, the difference between the objective value of binary and continuous control satisfies
\begin{align}
    \label{eq:diff}
    \bar{F}(u^\textrm{bin}) - \bar{F}(u^\textrm{con})\leq 2C_1e^{C_2\Delta t}\Delta t.
\end{align}
Furthermore, we have \begin{align}
    \limsup\limits_{\Delta t\rightarrow 0} \bar{F}(u^\textrm{bin}) - \bar{F}(u^\textrm{con})\leq 0.
\end{align}
\end{theorem}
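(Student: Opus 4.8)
The plan is to telescope the total objective difference into $T$ single-step increments and to show that each increment is $O(\Delta t^2)$, so that summing over the $T = t_f/\Delta t$ steps yields the claimed $O(\Delta t)$ bound. First I would introduce the hybrid controls $w^{(k)} = [u_1^\textrm{bin},\ldots,u_k^\textrm{bin},u_{k+1}^\textrm{con},\ldots,u_{T}^\textrm{con}]$, so that $w^{(0)} = u^\textrm{con}$ and $w^{(T)} = u^\textrm{bin}$, and write
\begin{align}
\bar{F}(u^\textrm{bin}) - \bar{F}(u^\textrm{con}) = \sum_{k=1}^{T}\left[\bar{F}(w^{(k)}) - \bar{F}(w^{(k-1)})\right].
\end{align}
The crucial structural fact is that with $\alpha=0$ the algorithm always selects the binary $\hat{u}\in\mathcal{U}$ minimizing the objective, so $\bar{F}(w^{(k)}) = \min_{\hat{u}\in\mathcal{U}}\bar{F}([\ldots,\hat{u},\ldots])$. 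Under SOS1 the feasible binary controls are exactly the unit vectors $e_1,\ldots,e_N$, and $u_k^\textrm{con} = \sum_j \lambda_j e_j$ with $\lambda_j = u_{jk}^\textrm{con}\ge 0$, $\sum_j\lambda_j = 1$. Since the minimum is dominated by any convex combination, $\bar{F}(w^{(k)})\le \sum_j \lambda_j \bar{F}([\ldots,e_j,\ldots])$, and hence each increment is bounded by $\sum_j\lambda_j\bar{F}([\ldots,e_j,\ldots]) - \bar{F}([\ldots,u_k^\textrm{con},\ldots])$. By Proposition~\ref{prop:acc-obj} these quantities differ only in the $k$-th propagator factor, evaluated with Hamiltonian $H_j := H^{(0)}+H^{(j)}$ versus $H_\textrm{con} := H^{(0)}+\sum_l\lambda_l H^{(l)} = \sum_j\lambda_j H_j$, sandwiched between the fixed unitaries $\mu_{k+1}$ and $X_{k-1}^\textrm{bin}$.

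The heart of the argument is a Jensen-type estimate at the operator level: I would show $\sigma_1\bigl(e^{-iH_\textrm{con}\Delta t} - \sum_j\lambda_j e^{-iH_j\Delta t}\bigr) = O(\Delta t^2)$. This is precisely where the SOS1 property is indispensable: because $H_\textrm{con} = \sum_j\lambda_j H_j$, the first-order Taylor terms $-iH_\textrm{con}\Delta t$ and $-i\sum_j\lambda_j H_j\Delta t$ cancel exactly, leaving only second-order remainders. Each remainder $e^{-iH\Delta t}-(I-iH\Delta t)$ is bounded by $\tfrac{1}{2}M^2\Delta t^2 e^{M\Delta t}$ via the exponential series, with $M = \max_j \sigma_1(H_j)$, so the gap contributes two such terms and is $O(\Delta t^2)$.

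I would then transfer this operator bound to the objective. For the infidelity~\eqref{eq:obj-fid}, writing $t_j = \mathbf{tr}\{X_\textrm{targ}^\dagger\mu_{k+1}e^{-iH_j\Delta t}X_{k-1}^\textrm{bin}\}$ and $t_\textrm{con}$ analogously, linearity of the trace together with convexity of $|\cdot|$ gives $|t_\textrm{con}| - \sum_j\lambda_j|t_j| \le |t_\textrm{con}-\sum_j\lambda_j t_j|$, and Lemma~\ref{lemma:singular-4} (with the unitary $X_{k-1}^\textrm{bin}X_\textrm{targ}^\dagger\mu_{k+1}$ via cyclicity) reduces this to $2^q\sigma_1$ of the exponential gap, so the increment is $O(\Delta t^2)$. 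For the energy objective~\eqref{eq:obj-energy}, I would expand the quadratic form $V^\dagger\tilde{H}'V$ with $\tilde{H}'=\mu_{k+1}^\dagger\tilde{H}\mu_{k+1}$; its first-order commutator term $i[H_\textrm{con},\tilde{H}']\Delta t$ again matches the $\lambda$-weighted average of the $i[H_j,\tilde{H}']\Delta t$, so the quadratic-form gap is $O(\Delta t^2)$ and Lemma~\ref{lemma:singular-3} bounds the induced change in $\langle\phi|\cdot|\phi\rangle$ by its $\sigma_1$. In both cases each per-step increment is at most $C\,M^2\Delta t^2 e^{M\Delta t}$ for an absolute constant $C$.

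Finally, summing over all $T = t_f/\Delta t$ steps yields $\bar{F}(u^\textrm{bin}) - \bar{F}(u^\textrm{con}) \le t_f\,C\,M^2 e^{M\Delta t}\Delta t$, which has the stated form $2C_1 e^{C_2\Delta t}\Delta t$ with $C_1 = t_f C M^2/2$ (the factor $2$ absorbing the two remainder contributions) and $C_2 = M$; letting $\Delta t\to 0$ immediately gives $\limsup\le 0$. The main obstacle is securing the \emph{second}-order per-step bound: a naive Lipschitz estimate on the single replaced propagator factor yields only $O(\Delta t)$ per step and hence a useless $O(1)$ total. The cancellation of the first-order terms---forced by the SOS1 identity $H_\textrm{con}=\sum_j\lambda_j H_j$ combined with the greedy minimum being dominated by the convex combination---is exactly what upgrades each increment to $O(\Delta t^2)$ and makes the telescoped sum vanish in the continuum limit.
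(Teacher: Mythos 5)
Your proposal is correct and takes essentially the same approach as the paper's proof: telescope the difference over the $T$ time steps, use the SOS1 identity $H^{(0)}+\sum_j \lambda_j H^{(j)} = \sum_j \lambda_j\bigl(H^{(0)}+H^{(j)}\bigr)$ to cancel the first-order Taylor terms, bound the second-order tails of the matrix exponentials by $C_1 e^{C_2\Delta t}\Delta t^2$ per step, and sum over $T=t_f/\Delta t$ steps. The only difference is presentational: the paper establishes the existence of a good controller at each step by contradiction (taking the $u^{\textrm{con}}_{jk}$-weighted sum of the assumed strict inequalities), which is exactly your direct ``minimum is at most the convex combination'' Jensen-gap argument in contrapositive form.
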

The detailed proof is presented in Appendix~\ref{app:proof}. 
Next, we discuss a more general case when the SOS1 property of the continuous control $u^\textrm{con}$ does not exactly hold but the SOS1 property of the binary control $u^\textrm{bin}$ is still required. 
We define the cumulative error between binary and continuous control as
\begin{align}
    \epsilon^c(\Delta t) = \sum_{k=1}^{T} \left|\sum_{j=1}^N u^\textrm{con}_{jk} - 1\right|\Delta t = \sum_{k=1}^{t_f/\Delta t} \left|\sum_{j=1}^N u^\textrm{con}_{jk} - 1\right|\Delta t.
\end{align}
In the following theorem, we present the estimation of the difference in objective values between continuous and binary controls. 
\begin{theorem}
\label{theo:str-no-sos1}
Let $u^\textrm{con}$ be the solution of the continuous relaxation of model~\eqref{eq:model-d-1} without the SOS1 property under fixed time discretization. 
Let $u^\textrm{bin}$ be the binary control with the enforced SOS1 property obtained by Algorithm~\ref{alg:switch-eobj} with $\alpha=0$. 
There exist constants $C_0,\ C_1,\ C_2$ such that the difference between objective values of continuous and binary controls satisfies
\begin{align}
    \label{eq:diff-no-sos1}
    \bar{F}(u^\textrm{bin}) - \bar{F}(u^\textrm{con})\leq 2C_1e^{C_2 \Delta t}\Delta t + C_0\epsilon^c(\Delta t).
\end{align}
\end{theorem}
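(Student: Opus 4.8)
The plan is to mirror the telescoping argument behind Theorem~\ref{theo:str} and to isolate the extra contribution coming from the failure of the SOS1 property. Writing $\mathcal{U}=\{e_1,\dots,e_N\}$ for the SOS1 vertices and abbreviating the hybrid objective at step $k$ as $G_k(\hat u):=\bar F([u_1^\textrm{bin},\dots,u_{k-1}^\textrm{bin},\hat u,u_{k+1}^\textrm{con},\dots,u_T^\textrm{con}])$, I would first record the telescoping identity
\begin{align}
\bar F(u^\textrm{bin})-\bar F(u^\textrm{con})=\sum_{k=1}^T\bigl[G_k(u_k^\textrm{bin})-G_k(u_k^\textrm{con})\bigr],
\end{align}
since consecutive partial controls differ only in their $k$-th block. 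By Proposition~\ref{prop:acc-obj}, each $G_k(\hat u)$ equals $F(\mu_{k+1}e^{-iM_k(\hat u)\Delta t}X_{k-1}^\textrm{bin})$ with $M_k(\hat u)=H^{(0)}+\sum_j\hat u_jH^{(j)}$ affine in $\hat u$, and since $\alpha=0$ the algorithm sets $G_k(u_k^\textrm{bin})=\min_{j}G_k(e_j)$. So the whole problem reduces to a per-step comparison of $\min_jG_k(e_j)$ against $G_k(u_k^\textrm{con})$.

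The key per-step estimate is that $G_k$ is affine in $\hat u$ up to an $O(\Delta t^2)$ remainder, with a gradient of size $O(\Delta t)$. Concretely, I would expand the propagator $e^{-iM_k(\hat u)\Delta t}=I-iM_k(\hat u)\Delta t+O(\Delta t^2)$, which is affine in $\hat u$, and then push this through $F$. For the energy objective I would use the linearity of $|\psi\rangle\mapsto\langle\psi|\cdot|\psi\rangle$ together with Lemmas~\ref{lemma:singular-1}--\ref{lemma:singular-3}, and for the infidelity objective the convexity of $|\cdot|$ together with Lemma~\ref{lemma:singular-4}, exactly as in the proof of Theorem~\ref{theo:str}; unitarity of $\mu_{k+1}$ and $X_{k-1}^\textrm{bin}$ keeps all singular values bounded by one. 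This yields $G_k(\hat u)=a_k+\sum_j b_{jk}\hat u_j+O(\Delta t^2)$ with coefficients $|b_{jk}|=O(\Delta t)$, the constant in the $O(\Delta t^2)$ term being controlled by $e^{C_2\Delta t}$ after bounding the matrix products as in Theorem~\ref{theo:str}.

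With this linearization the comparison splits cleanly. Writing $s_k=\sum_ju_{jk}^\textrm{con}$ and using $u_{jk}^\textrm{con}\ge 0$ to get $\sum_ju_{jk}^\textrm{con}b_{jk}\ge s_k\min_jb_{jk}$, I would obtain
\begin{align}
G_k(u_k^\textrm{bin})-G_k(u_k^\textrm{con})\le \min_jb_{jk}-\sum_ju_{jk}^\textrm{con}b_{jk}+O(\Delta t^2)\le(1-s_k)\min_jb_{jk}+O(\Delta t^2)\le|1-s_k|\,O(\Delta t)+O(\Delta t^2).
\end{align}
When SOS1 holds ($s_k=1$) the first term vanishes and one recovers the $\sum_kO(\Delta t^2)=2C_1e^{C_2\Delta t}\Delta t$ bound of Theorem~\ref{theo:str}; in general, summing over the $T=t_f/\Delta t$ steps gives $\sum_k|1-s_k|\,O(\Delta t)=C_0\sum_k|s_k-1|\Delta t=C_0\epsilon^c(\Delta t)$, which combined with the $O(\Delta t^2)$ terms produces the claimed estimate~\eqref{eq:diff-no-sos1}.

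I expect the main obstacle to be the uniform, objective-independent control of the first-order expansion: I must show that the SOS1 violation enters only through the $O(\Delta t)$ gradient $b_{jk}$ and not at $O(1)$, since this extra factor of $\Delta t$ is precisely what turns $\sum_k|s_k-1|$ into $\epsilon^c(\Delta t)=\sum_k|s_k-1|\Delta t$ rather than $\epsilon^c(\Delta t)/\Delta t$. Making the remainder rigorous for the infidelity objective, where $|\cdot|$ is not differentiable, requires replacing the naive Taylor step by the convexity and singular-value estimates used for Theorem~\ref{theo:str}, and carefully propagating the constants $C_1,C_2$ through the product $\mu_{k+1}e^{-iM_k(\hat u)\Delta t}X_{k-1}^\textrm{bin}$ so that the SOS1 part of the bound coincides with the one already established.
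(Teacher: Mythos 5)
Your proposal is correct and follows essentially the same route as the paper's proof: your telescoped per-step differences $G_k(u_k^\textrm{bin})-G_k(u_k^\textrm{con})$ are exactly the paper's $\min_j R_{jk}$, your comparison $\sum_j u_{jk}^\textrm{con}b_{jk}\geq s_k\min_j b_{jk}$ is the direct (contrapositive) form of the paper's proof by contradiction using the normalized weights $u_{jk}^\textrm{con}/\sum_{j'}u_{j'k}^\textrm{con}$, and the higher-order remainder is controlled by the same singular-value estimates, with the constants inflated from $\sigma_\textrm{max}$ to $N\sigma_\textrm{max}$ because $\sum_j u_{jk}^\textrm{con}$ may be as large as $N$ without SOS1 (a bookkeeping point the paper makes explicit that you only gesture at). Your one loose claim---that $G_k$ is affine in $\hat u$ up to $O(\Delta t^2)$---holds literally only for the energy objective; for infidelity it must be replaced by convexity of $|\cdot|$ applied to the rescaled convex combination $u_k^\textrm{con}/s_k$ together with the Lemma~\ref{lemma:singular-4} bound $\left|\mathbf{tr}\{U H^{(j)}\}\right|\Delta t\leq 2^q\sigma_\textrm{max}\Delta t$ on the perturbation, which is precisely the fix you flag in your final paragraph and precisely what the paper's weighted-sum contradiction carries out.
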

We provide detailed proof in Appendix~\ref{app:proof}. 
We consider a uniform discretized model with $L_2$ penalty function for the SOS1 property with the following formulation~\citep{Fei2023binarycontrolpulse}:
\begin{align}
\label{eq:model-l2} \tag{P-$L_2$}
    \min_{u\in [0,1]^{N\times T},X,H} \quad & \bar{F}(u) + \rho \sum_{k=1}^T \left(\sum_{j=1}^N u_{jk} - 1\right)^2,
\end{align}
where $\rho$ is the penalty parameter. We assume that the original problem~\eqref{eq:model-c-1} has a feasible solution with the SOS1 property. 
In the following corollary, we show that if we solve the model with $L_2$ penalty function~\eqref{eq:model-l2}, then the convergence result in Theorem~\ref{theo:str} still holds. 
\begin{corollary}
\label{cor:str}
Assume that the original problem~\eqref{eq:model-c-1} has a feasible solution with the SOS1 property. 
For any $\Delta t$, let the discretized continuous controls $u^\textrm{con}$ be the solution of model~\eqref{eq:model-l2} and $u^\textrm{bin}$ be the binary solutions obtained by Algorithm~\ref{alg:switch-eobj} with $\alpha=0$, then we have
\begin{align}
    \limsup\limits_{\Delta t \rightarrow 0}\bar{F}(u^\textrm{bin}) - \bar{F}(u^\textrm{con})\leq 0. 
\end{align}
\end{corollary}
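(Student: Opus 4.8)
The plan is to reduce this corollary to Theorem~\ref{theo:str-no-sos1} and then spend the real effort showing that the cumulative SOS1 violation $\epsilon^c(\Delta t)$ vanishes as $\Delta t\to 0$, which is exactly the extra term that distinguishes Theorem~\ref{theo:str-no-sos1} from Theorem~\ref{theo:str}. First I would observe that the hypotheses of Theorem~\ref{theo:str-no-sos1} are met verbatim: here $u^\textrm{con}$ solves the penalized model~\eqref{eq:model-l2}, which relaxes rather than enforces SOS1, while $u^\textrm{bin}$ is produced by Algorithm~\ref{alg:switch-eobj} with $\alpha=0$ under an enforced SOS1 constraint. Applying that theorem yields constants $C_0,C_1,C_2$ with
\[
\bar F(u^\textrm{bin}) - \bar F(u^\textrm{con}) \le 2C_1 e^{C_2\Delta t}\Delta t + C_0\,\epsilon^c(\Delta t).
\]
The first term clearly tends to $0$, so the entire statement reduces to proving $\epsilon^c(\Delta t)\to 0$.

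The main step is to bound the squared violation using the optimality of $u^\textrm{con}$ for~\eqref{eq:model-l2}. Let $u^*$ be the feasible SOS1 solution of~\eqref{eq:model-c-1} guaranteed by assumption, and let $\tilde u$ be its piecewise-constant discretization on the grid of width $\Delta t$; then $\sum_{j=1}^N \tilde u_{jk}=1$ for every $k$, so the penalty term vanishes at $\tilde u$. (Any SOS1-feasible discretized control, e.g.\ a single fixed active controller, would serve the same purpose.) Since $u^\textrm{con}$ minimizes the penalized objective,
\[
\bar F(u^\textrm{con}) + \rho\sum_{k=1}^T\Big(\sum_{j=1}^N u^\textrm{con}_{jk}-1\Big)^2 \le \bar F(\tilde u),
\]
and using $\bar F(u^\textrm{con})\ge 0$ together with $\bar F(\tilde u)\le 1$ (both objective functions~\eqref{eq:obj-energy} and~\eqref{eq:obj-fid} lie in $[0,1]$) gives the $\Delta t$-independent bound $\sum_{k=1}^T(\sum_{j=1}^N u^\textrm{con}_{jk}-1)^2 \le 1/\rho$.

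Finally I would convert this $\ell_2$ estimate into a bound on $\epsilon^c$. With $T=t_f/\Delta t$, Cauchy--Schwarz gives
\[
\epsilon^c(\Delta t)=\Delta t\sum_{k=1}^T\Big|\sum_{j=1}^N u^\textrm{con}_{jk}-1\Big| \le \Delta t\,\sqrt{T}\,\sqrt{\sum_{k=1}^T\Big(\sum_{j=1}^N u^\textrm{con}_{jk}-1\Big)^2} \le \sqrt{\frac{t_f\,\Delta t}{\rho}},
\]
so for fixed $\rho$ we have $\epsilon^c(\Delta t)=O(\sqrt{\Delta t})\to 0$. Substituting both estimates back into the bound from Theorem~\ref{theo:str-no-sos1} shows each term vanishes, yielding $\limsup_{\Delta t\to 0}\big(\bar F(u^\textrm{bin})-\bar F(u^\textrm{con})\big)\le 0$.

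The hard part will be the middle-to-last step, namely showing $\epsilon^c(\Delta t)\to 0$ even though the number of intervals $T=t_f/\Delta t$ blows up. The resolution is that the penalty argument caps the $\ell_2$ norm of the per-step violations by the $\Delta t$-independent quantity $1/\rho$, and Cauchy--Schwarz then trades the growing factor $\sqrt{T}=\sqrt{t_f/\Delta t}$ against the $\Delta t$ weighting so that the product decays like $\sqrt{\Delta t}$. I would also take care to verify that $C_0$ in Theorem~\ref{theo:str-no-sos1} is independent of $\Delta t$ (as its statement asserts), since otherwise multiplying the vanishing $\epsilon^c(\Delta t)$ by $C_0$ need not preserve the limit.
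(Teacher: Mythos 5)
Your proposal is correct and follows essentially the same route as the paper's proof: reduce to the bound of Theorem~\ref{theo:str-no-sos1}, compare $u^\textrm{con}$ against a discretized SOS1-feasible control built from the assumed feasible solution of~\eqref{eq:model-c-1} so that the penalty term can be controlled, and convert the resulting $\ell_2$ bound on the per-step violations into a bound on $\epsilon^c(\Delta t)$ via the $L_1$--$L_2$ norm inequality. The only real difference is the closing step: where the paper argues by contradiction that $\sqrt{T}\sum_{k}\bigl(\sum_{j} u^{(T)*}_{jk}-1\bigr)^2\,t_f/T\to 0$ (otherwise the penalty term would blow up past the upper bound on the objective), you exploit $\bar F\in[0,1]$ to get the clean $\Delta t$-independent estimate $\sum_{k}\bigl(\sum_{j} u^\textrm{con}_{jk}-1\bigr)^2\le 1/\rho$ and hence the explicit rate $\epsilon^c(\Delta t)=O(\sqrt{\Delta t})$, which is a tidier and slightly stronger finish than the paper's qualitative limit argument.
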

\begin{proof}
    It is obvious that all the constants $C_0,\ C_1,\ C_2$ are bounded. We only need to prove that $\displaystyle \limsup_{\Delta t\rightarrow 0} \epsilon^c(\Delta t) = 0$. 
    With the assumption that the original problem~\eqref{eq:model-c-1} has a feasible solution with the SOS1 property, we have that there exist continuous functions $u_j(t),\ j=1,\ldots,N$ such that $\displaystyle \sum_{j=1}^N u_{j}(t)=1$ for $t\in [0,t_f]$ almost everywhere. For any time discretization with $T$ time intervals and length $\Delta t=t_f/T$, we construct a discretized solution $u^{(T)}$ such that 
    \begin{align}
        u_{jk}^{(T)} = \int_{t_{k-1}}^{t_k} u_j(t) dt/\Delta t,\ k=1,\ldots,T.
    \end{align}
    With the definition, we have $\displaystyle \sum_{j=1}^N u_{jk}^{(T)}=1,\ k=1,\ldots,T$. 
    Let $u^{(T)*}$ be the optimal solution of problem~\eqref{eq:model-l2} with time steps $T$, then we have
    \begin{align}
        \bar{F}(u^{(T)*})+\rho T \sum_{k=1}^T \left(\sum_{j=1}^N u_{jk}^{(T)*} - 1\right)^2 \frac{t_f}{T}\leq \bar{F}(u^{(T)})
    \end{align}
    from the definition of the optimal solution. Taking the limit when $T\rightarrow \infty$, i.e. $\Delta t\rightarrow 0$, we have 
    \begin{align}
    \label{eq:f-ub}
        \lim_{T\rightarrow \infty} \bar{F}(u^{(T)*})+\lim_{T\rightarrow \infty}\rho T \sum_{k=1}^T \left(\sum_{j=1}^N u_{jk}^{(T)*} - 1\right)^2 \frac{t_f}{T}\leq \lim_{T\rightarrow \infty}\bar{F}(u^{(T)})\leq F_{UB}, 
    \end{align}
    where $F_{UB}$ is the upper bound of the objective function $F$ and the last inequality follows $\displaystyle \lim_{T\rightarrow \infty} u_j^{(T)}=u_j(t)$.
    
    With the definition that $u^{\textrm{con}}=u^{(T)*}$, we have 
    \begin{align}
    \label{eq:norm-inequality}
        \sum_{k=1}^T \left|\sum_{j=1}^N u_{jk}^{\textrm{con}} - 1\right| \leq   \sqrt{T} \left(\sum_{k=1}^T \left(\sum_{j=1}^N u_{jk}^{(T)*} - 1\right)^2\right)^{1/2}
    \end{align}
    for any $T$ following the norm inequality between $L_1$ and $L_2$ norm. 
    Consider two cases that the $L_2$ norm in equation~\eqref{eq:norm-inequality} is smaller or not smaller than $1$, then we have 
    \begin{align}
    \label{eq:epsilon-inequality}
        \sum_{k=1}^T \left|\sum_{j=1}^N u_{jk}^{\textrm{con}} - 1\right| \leq    \max\{\sqrt{T},\ \sqrt{T} \sum_{k=1}^T \left(\sum_{j=1}^N u_{jk}^{(T)*} - 1\right)^2\}
    \end{align}
    for any $T$, respectively. 
    Next we prove $\displaystyle \limsup_{\Delta t\rightarrow 0} \epsilon^c(\Delta t)=0$. Because $\epsilon^c(\Delta t)$ can be written as
    \begin{align}
        \epsilon^c(\Delta t) = \sum_{k=1}^T \left|\sum_{j=1}^N u_{jk}^{\textrm{con}} - 1\right|\frac{t_f}{T}, 
    \end{align}
     we only need to prove $\displaystyle \lim_{T\rightarrow \infty} \sqrt{T}\frac{t_f}{T}=0$ and $\displaystyle \lim_{T\rightarrow \infty} \sqrt{T} \sum_{k=1}^T\left(\sum_{j=1}^N u_{jk}^{(T)*} - 1\right)^2 \frac{t_f}{T}=0$.

    With $t_f$ as a constant, we directly have $\displaystyle \lim_{T\rightarrow \infty} \sqrt{T}\frac{t_f}{T}=0$. We prove the second limit by contradiction. 
    Assume there exists a constant $\epsilon_0>0$ such that $\displaystyle \sqrt{T}\sum_{k=1}^T \left(\sum_{j=1}^N u_{jk}^{(T)*} - 1\right)^2 \frac{t_f}{T}\geq \epsilon_0>0$ for any $T$, then the quadratic penalty term in the objective function of~\eqref{eq:model-l2}
    \begin{align}
        \lim_{T\rightarrow \infty}\rho T \sum_{k=1}^T \left(\sum_{j=1}^N u_{jk}^{(T)*} - 1\right)^2 \frac{t_f}{T} = \lim_{T\rightarrow \infty}\rho \sqrt{T} \sqrt{T} \sum_{k=1}^T \left(\sum_{j=1}^N u_{jk}^{(T)*} - 1\right)^2 \frac{t_f}{T} \geq  \lim_{T\rightarrow \infty}\rho \sqrt{T} \epsilon_0 = \infty, 
    \end{align}
    which contradicts the upper bound $F_{UB}$ in~\eqref{eq:f-ub}. Therefore, we have
    \begin{align}
        \limsup_{\Delta t\rightarrow 0} \epsilon^c(\Delta t)\leq \limsup_{T\rightarrow \infty} \max\{\sqrt{T} \frac{t_f}{T},\ \sqrt{T} \sum_{k=1}^T \left(\sum_{j=1}^N u_{jk}^{(T)*} - 1\right)^2 \frac{t_f}{T}\}=0.
    \end{align}
\end{proof}
In the following corollary, we prove that for quantum control problems without the SOS1 property, the convergence results still hold. 
\begin{corollary}
\label{cor:str-nosos1}
Let $u^\textrm{con}$ be the solution of the continuous relaxation of model~\eqref{eq:model-d-1} without the SOS1 property. Let $u^\textrm{bin}$ be the binary control obtained by Algorithm~\ref{alg:switch-eobj} without the SOS1 property constraint (i.e. $\mathcal{U}=\{0,1\}^{N}$) setting $\alpha=0$. Then we have
\begin{align}
    \limsup\limits_{\Delta t \rightarrow 0}\bar{F}(u^\textrm{bin}) - \bar{F}(u^\textrm{con})\leq 0. 
\end{align}
\end{corollary}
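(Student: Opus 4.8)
The plan is to reduce the no-SOS1 setting to the SOS1 setting of Theorem~\ref{theo:str} by \emph{lifting} the control space, so that the $O(\Delta t)$ rounding bound applies verbatim and, crucially, no cumulative-error term $\epsilon^c$ appears (which is what separates this corollary from Theorem~\ref{theo:str-no-sos1}). The key structural observation is that, because the map $u_k \mapsto H^{(0)} + \sum_{j=1}^N u_{jk}H^{(j)}$ is affine, for any $u^\textrm{con}_k \in [0,1]^N$ the continuous Hamiltonian is a convex combination of the $2^N$ binary-vertex Hamiltonians $H_v := H^{(0)} + \sum_{j=1}^N v_j H^{(j)}$, $v \in \{0,1\}^N$. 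This is exactly the property that the proof of Theorem~\ref{theo:str} exploits for the SOS1 simplex, whose vertices are the unit vectors $e_1,\dots,e_N$.

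First I would recast the hypercube problem $\mathcal{U}=\{0,1\}^N$ as an SOS1 problem over $2^N$ ``super-controllers'': introduce control variables $w_v$, $v\in\{0,1\}^N$, with zero intrinsic Hamiltonian and control Hamiltonians $H_v$, subject to $\sum_v w_{v,k}=1$. For the given $u^\textrm{con}$ I would take the canonical product (independent-Bernoulli) representation $w^\textrm{con}_{v,k}=\prod_{j:v_j=1}u^\textrm{con}_{jk}\prod_{j:v_j=0}(1-u^\textrm{con}_{jk})$. This $w^\textrm{con}$ is nonnegative, satisfies $\sum_v w^\textrm{con}_{v,k}=1$ exactly (so SOS1 holds with $\epsilon^c=0$), and has the correct marginals $\sum_{v:v_j=1}w^\textrm{con}_{v,k}=u^\textrm{con}_{jk}$, whence $\sum_v w^\textrm{con}_{v,k}H_v = H^{(0)}+\sum_j u^\textrm{con}_{jk}H^{(j)}$. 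Consequently the lifted and original continuous controls induce identical dynamics and identical objective values.

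Next I would verify that running Algorithm~\ref{alg:switch-eobj} on the lifted SOS1 problem reproduces, step for step, the objective values produced by the algorithm on the original hypercube problem. At each receding-horizon step the algorithm fixes a prefix of binary controls, searches over the current admissible binary Hamiltonians, and appends the continuous suffix; the set of reachable binary Hamiltonians is $\{H_v\}$ in both formulations, and the continuous suffix propagators agree by the marginal identity above. Thus the spliced objective of Proposition~\ref{prop:acc-obj} coincides for the two formulations for every candidate, the argmin and its value agree at every step, and therefore $\bar F(u^\textrm{bin})$ equals the lifted binary objective $\bar F(w^\textrm{bin})$ while $\bar F(u^\textrm{con})=\bar F(w^\textrm{con})$. (If the bound of Theorem~\ref{theo:str} is invoked only for an optimal continuous-relaxation solution, I would additionally note that the lifted relaxation has the same reachable-Hamiltonian set as the original at every step, so optimality of $u^\textrm{con}$ transfers to $w^\textrm{con}$.)

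Finally I would apply Theorem~\ref{theo:str} to the lifted problem: its hypotheses hold since $w^\textrm{con}$ satisfies SOS1 exactly, SOS1 is enforced on $w^\textrm{bin}$, and $\alpha=0$. This yields $\bar F(w^\textrm{bin})-\bar F(w^\textrm{con})\le 2C_1 e^{C_2\Delta t}\Delta t$ with constants depending only on the (fixed) lifted data, not on $\Delta t$; translating back gives $\bar F(u^\textrm{bin})-\bar F(u^\textrm{con})\le 2C_1 e^{C_2\Delta t}\Delta t$, and letting $\Delta t\to 0$ proves the claim. I expect the main obstacle to be the step-for-step equivalence of the two algorithm runs: one must check that the greedy $\alpha=0$ selection and the tie-breaking are driven only by the reachable binary Hamiltonians and the common suffix propagators, which are preserved by the lifting, so that no spurious discrepancy arises between the hypercube and the $2^N$-vertex executions. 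An alternative that avoids the lifting is to re-run the proof of Theorem~\ref{theo:str} with the scalar sum-up rounding over the simplex replaced by multidimensional sum-up rounding over $[0,1]^N$; the accumulated rounding error then remains $O(\Delta t)$ coordinatewise, again with no $\epsilon^c$ contribution.
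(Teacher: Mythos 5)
Correct, and essentially the paper's own route: the paper likewise reduces to the SOS1 case by lifting to a $2^N$-controller system (controllers $\hat{H}^{(S)}=\sum_{j\in S}H^{(j)}$ for $S\subseteq\{1,\ldots,N\}$ plus an idle all-zero controller), rewriting $u^\textrm{con}$ as an SOS1-feasible convex combination that induces the identical Hamiltonian at every time step, and then invoking the SOS1 convergence result (Theorem~\ref{theo:str} via Corollary~\ref{cor:str}). The only differences are immaterial: the paper uses the sorted ``layer-cake'' weights $u'_{(l_1,\ldots,l_i)k}=u_{l_ik}-u_{l_{i+1}k}$ supported on a nested chain of subsets where you use independent-product weights supported on all of $\{0,1\}^N$ (both are valid convex representations with the correct marginals), and you spell out the step-for-step equivalence of the algorithm's executions on the original and lifted problems, a point the paper leaves implicit.
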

\begin{proof}
    We complete the proof by showing that we can convert a continuous solution $u$ without the SOS1 property to a solution $u'$ with the SOS1 property with re-defined Hamiltonian controllers. 
    
    Specifically, for a control system with $N$ Hamiltonian controllers $H^{(1)},\ldots,H^{(N)}$, 
    we introduce a new control system with $2^N$ controllers, 
    including Hamiltonian controllers $\displaystyle \hat{H}^{(l_1,\ldots, l_i)} = \sum_{j=1}^{i} H^{(l_j)},\ \forall \left\{l_1,\ldots,l_i\right\}\subseteq \left\{1,\ldots,N\right\},\ \forall i=1,\ldots,N$, Especially $\hat{H}^{(\emptyset)}$ is an idle controller with represented as an all-zero Hamiltonian matrix. 
For any given controls $u_k\in [0,1]^N$ of the continuous relaxation of the discretized model~\eqref{eq:model-d-1} without the SOS1 property, we can convert them to controls $u'_k\in [0,1]^{2^N}$ such that $\sum_{j=1}^{2^N} u'_{jk}=1$ as the following rule for every time step $k=1,\ldots,T$. 
At time step $k$, given control variable $u_k$ for $N$ controllers, we sort the control values by descending order as $u_{l_1(k)k}\geq \ldots\geq u_{l_N(k)k}$ where $l_i(k)$ is the controller index with the $i$th high control value. For simplicity, we represent the sorted controller index as $l_1,\ldots,l_N$ by eliminating $k$, then we reconstruct controls $u'$ as 
\begin{subequations}
\begin{align}
    & u'_{(l_1,\ldots, l_i)k} = u_{l_ik} - u_{l_{i+1}k},\ i=1,\ldots,N-1\\
    & u'_{(l_1, \ldots, l_N)k} = u_{l_Nk}\\
    & u'_0 = 1 - u_{l_1k}, 
\end{align}
\end{subequations}
where $u'_{(l_1,\ldots, l_i)}$ is the control value corresponding to Hamiltonian controller $H^{(l_1,\ldots, l_i)},\ i=1,\ldots,N$ and $u'_0$ is the control value corresponding to all-zeros Hamiltonian controller $\hat{H}^{(\emptyset)}$. We have that
\begin{align}
    \sum_{i=1}^N u'_{(l_1\ldots l_i)k} \hat{H}^{(l_1\ldots l_i)} + u'_0 \hat{H}^{(\emptyset)} 
    & = \sum_{i=1}^N u'_{(l_1\ldots l_i)k} \sum_{j=1}^i H^{(l_j)} 
    =\sum_{j=1}^N \left(\sum_{i=j}^N u'_{(l_1\ldots l_i)k}\right) H^{(l_j)} \nonumber \\
    & =\sum_{j=1}^N u_{l_jk}H^{(l_j)} = \sum_{l=1}^N u_{lk} H^{(l)}. 
\end{align}
Therefore the two control variables $u$ and $u'$ with corresponding control systems have the same impact on the states. 
Based on the definition of $u'$ and bound constraints that $u\in [0,1]^{N\times T}$, we have $\displaystyle \sum_{i=1}^N u'_{(l_1\ldots l_i)k} + u'_0=1$ and $u'\in [0,1]^{2^N\times T}$, which means that the SOS1 property holds for the new control system. 
From Corollary~\ref{cor:str}, the convergence results of Algorithm~\ref{alg:switch-eobj} hold for $u'$ with the SOS1 property, hence, hold for $u$.
\end{proof}

\subsection{Method Based on Cumulative Difference}
\label{sec:alg-sur}
Evaluating the difference between binary controls and continuous controls by cumulative difference was proposed by~\citet{sager2011combinatorial} and is widely used in optimal control theory~\citep{sager2012integer, you2011mixed, manns2020multidimensional}. 
In this section, we propose another method (Algorithm~\ref{alg:switch-cumdiff}) to balance the cumulative difference between binary controls and continuous controls and the penalized number of switches. We define $\beta$ as the TV norm penalty parameter. The explanation of the algorithm is as follows. 
At each time step $k$, we compute the cumulative difference of current controls $[u_1^\textrm{bin},\ldots,u_{k-1}^\textrm{bin}, \hat{u}]\in [0,1]^{N\times T}$ for all $\hat{u}\in \mathcal{U}$ and select $\hat{u}^*$ as the one with the smallest cumulative difference (see step~\ref{algline:sur-diff}--\ref{algline:sur-select}). 
If the cumulative difference of keeping control of the current time step the same as the previous time step, i.e. $\hat{u}=u_{k-1}$, is no larger than the weighted TV-norm value of choosing the control with the smallest cumulative difference, i.e. $\hat{u}=\hat{u}^*$, we keep the previous time step controllers (see step~\ref{algline:sur-condition}--\ref{algline:sur-keep}). Otherwise, we choose the controller with the smallest cumulative difference. Then update the binary control $u^\textrm{bin}$ and move to the next time step $k+1$ (see step~\ref{algline:sur-update}).  

According to the paper by~\citet{sager2011combinatorial}, the cumulative difference at each time step $k$ for any feasible control $\hat{u}\in \mathcal{U}$ is defined by
\begin{align}
\label{eq:cum-diff}
    \textrm{Diff}\left(u^\textrm{con}, [u_1^\textrm{bin},\ldots,u_{k-1}^\textrm{bin}, \hat{u}]\right) = \max_{j=1,\ldots,N} \ \left|\sum_{l=1}^{k-1} \left(u_{jl }^\textrm{con}  - u_{jl}^{bin}\right) \Delta t + \left(u_{jk }^\textrm{con}  - \hat{u}_j\right)\Delta t \right|.
\end{align}
Following the paper by~\citet{sager2012integer}, we introduce auxiliary deviation variables $\hat{p}_{jk}$ for each controller $j=1,\ldots,N$ and each time step $k=1,\ldots,T$, such that:
\begin{align}
    \label{eq:deviation}
        \hat{p}_{jk} = \sum_{l=1}^{k} u^\textrm{con}_{jl} \Delta t - \sum_{l=1}^{k-1} u^\textrm{bin}_{jl} \Delta t,\ j=1,\ldots,N,\ k=1,\ldots,T.
\end{align}
At time step $k=1,\ldots,T$, we use $\hat{p}_k$ to represent the column vector of deviation variables. We update deviation variables by a recursive formula as:
\begin{align}
    \label{eq:deviation-update}
    \hat{p}_k = \hat{p}_{k-1} + u_k^\textrm{con} - u_{k-1}^\textrm{bin},\ k=2,\ldots,T.
\end{align} 
For any binary control $\hat{u}$, the cumulative difference at time step $k$ can be computed by the deviation variables as
\begin{align}
    \textrm{Diff}\left(u^\textrm{con}, [u_1^\textrm{bin},\ldots,u_{k-1}^\textrm{bin}, \hat{u}]\right) = 
    \max\{\max_{j:\hat{u}_j=1} \left|\hat{p}_{jk} - \Delta t \right|, \max_{j:\hat{u}_j=0} \left|\hat{p}_{jk}\right|\}.
\end{align}
Therefore we only require to compute $\hat{p}_{jk}$ and $\hat{p}_{jk}-\Delta t$ for each time step. 
With the definition of deviation variables, we conclude that setting controller $j$ active does not increase the cumulative difference if and only if $|\hat{p}_{jk}| > |\hat{p}_{jk}-\Delta t|$, which means that $\hat{p}_{jk}\geq 0.5\Delta t$. Furthermore, for any two controllers, setting the one with a higher deviation leads to a higher decrease or lower increase in the cumulative difference. 
\begin{remark}
\label{rmk:cum-diff}
   If binary control $u^\textrm{bin}$ requires the SOS1 property, at each time step $k$, let $j^*=\argmax_{j=1,\ldots,N} \hat{p}_{jk}$, 
    we choose $\hat{u}^*_k$ as $\hat{u}^*_{j^*k} = 1,\ \hat{u}^*_{jk}=0,\ \forall j\neq j^*$. If binary control does not require any additional constraints, at each time step $k$, 
    let $J^*=\{j=1,\ldots,N: \hat{p}_{jk}\geq 0.5\Delta t\}$, 
    we choose $\hat{u}^*_k$ as $\hat{u}^*_{jk} = 1,\ \forall j\in J^*,\ \hat{u}^*_{jk}=0,\ \forall j\notin J^*$. 
\end{remark}

We describe the detailed algorithm in Algorithm~\ref{alg:switch-cumdiff}. 
\begin{algorithm}[!ht] \caption{Heuristic Rounding Method based on Cumulative Difference (Cdiff) \label{alg:switch-cumdiff}} 
    \DontPrintSemicolon
    \SetNoFillComment
    \KwInput{Continuous control $u^\textrm{con}$ and TV penalty parameter $\beta$.}
    Initialize binary control 
    $\hat{p}_1=u^\textrm{con}_{1}$.\;
    Let $u_1^\textrm{bin} = \argmin_{\hat{u}\in \mathcal{U}} \textrm{Diff}\left(u^\textrm{con}, [\hat{u}]\right)$.\;
    \For{$k=2,\ldots,T$}
    {
    $\hat{p}_{k}=\hat{p}_{k-1} + u^\textrm{con}_{k} - u^\textrm{bin}_{k-1}$.\;
    \label{algline:sur-diff}
    Let $\hat{u}^*=\argmin_{\hat{u}\in \mathcal{U}} \textrm{Diff}\left(u^\textrm{con}, [u_1^\textrm{bin},\ldots,u_{k-1}^\textrm{bin}, \hat{u}]\right)$.\;
    \label{algline:sur-select}
    \eIf{
    $\textrm{Diff}\left(u^\textrm{con}, [u_1^\textrm{bin},\ldots,u_{k-1}^\textrm{bin}, u_{k-1}^\textrm{bin}]\right) \leq \beta TV\left([u_1^\textrm{bin},\ldots,u_{k-1}^\textrm{bin}, \hat{u}^*, u_{k+1}^\textrm{con},\ldots,u_{T}^\textrm{con}]\right)$
    \label{algline:sur-condition}
    }
    {
    Update binary control $u_k^\textrm{bin} = u_{k-1}^\textrm{bin}$.
    \label{algline:sur-keep} }
    {
    Update control $u_k^\textrm{bin} = \hat{u}^*$, breaking ties
    choosing control with smallest TV value.\;
    \label{algline:sur-update}}
    }
    \KwOutput{Binary control $u^\textrm{bin}$}
\end{algorithm}
Notice that the maximum number of switches with penalty parameter $\beta$ is $t_f/\beta$ following from the definition of cumulative differences. When the weight parameter $\beta=0$, the convergence results of SUR~\citep{sager2012integer,Fei2023binarycontrolpulse} ensure that the cumulative difference between $u^\textrm{bin}$ and $u^\textrm{con}$ converges to zero, no matter whether the SOS1 property holds or not,  leading to the objective value of the binary control $\bar{F}(u^\textrm{bin})$ converges to the objective value of the continuous control $\bar{F}(u^\textrm{con})$ when $\Delta t$ goes to zero. 

\section{Switching Time Optimization}
\label{sec:model-switching}
With derived controller sequences in Section~\ref{sec:alg-extraction}, we next propose our generic switching time optimization model based on the continuous time horizon. Then we introduce the solution method and an acceleration technique for time-evolution simulations. 
\paragraph{Formulation} Let $S$ be the total number of controllers in the controller sequence $\bar{\mathcal{H}}$, we divide the whole time horizon into $S$ time intervals as $0=t_0<t_1<\ldots<t_S=t_f$ where each time interval has a fixed Hamiltonian controller. 
For each time interval $s=1,\ldots,S$, we define variables $\tau_s\in [0,t_f]$ as the length and $X_s\in \mathbb{C}^{2^q\times 2^q}$ as the unitary operator at the end of the time interval. 
Let $\tau$ be the corresponding vector form of $\tau_s$. By definition, the final unitary operator $X(t_f) = X_S$. 
The start and end time of time interval $s$ is computed by $t_{s-1}=\sum_{l=1}^{s-1} \tau_l$ and $t_s = \sum_{l=1}^s \tau_l$. 
The unitary operators follow the differential equation 
\begin{align}
    \frac{d}{dt} X(t) = -i\bar{H}_s X(t),\ \forall t\in (t_{s-1}, t_s],\ s=1,\ldots,S.
\end{align}
We obtain the explicit solution as $X(t)=\exp\{-i\bar{H}_s (t - t_{s-1})\}X(t_{s-1}),\ \forall t\in (t_{s-1}, t_s],\ s=1,\ldots,S$. 
As a result, the final operator $X(t_f)$ can be computed as an implicit function of $\tau$:
\begin{align}
    \label{eq:final-state-s}
    X(t_f, \tau) = X_S(\tau) = \prod_{s=1}^S e^{-i\bar{H}_s \tau_s} X_\textrm{init}.
\end{align}
Substituting the final operator $X(t_f, \tau)$ into a general objective function $F$, we obtain an objective function $\hat{F}(\tau)$ with respect to variables $\tau$ and formulate the switching time optimization problem as
\begin{subequations}
\makeatletter
\def\@currentlabel{STO}
\makeatother
\label{eq:model-ts}
\begin{align}
    \label{eq:model-ts-obj}
    \textrm(STO)\quad \min_{\tau} \quad & \hat{F}(\tau) = F(X(t_f, \tau))\\
    \label{eq:model-ts-sum-tau}
    \textrm{s.t.} \quad & \sum_{s=1}^{S} \tau_s = t_f \\
    \label{eq:model-ts-tau}
    & \tau_s\geq 0,\ s=1,\ldots,S.
\end{align}
\end{subequations}
The objective function~\eqref{eq:model-ts-obj} is a general function only corresponding to $\tau$.
Constraint~\eqref{eq:model-ts-sum-tau} enforces that the sum of the time intervals equals the evolution time. Constraint~\eqref{eq:model-ts-tau} ensures that the length of each time interval is in $[0,t_f]$. Note that time interval $s$ with length as zero means that we do not apply Hamiltonian controller $\bar{H}_s$ to the control system, leading to a reduction in the number of switches. 

\paragraph{Solution method} Because the switching optimization model~\eqref{eq:model-ts} includes the additional summation constraint~\eqref{eq:model-ts-sum-tau}, we derive the gradient of $\hat{F}(\tau)$ as follows and solve the model by the sequential least-squares programming (SLSQP) algorithm~\citep{2020SciPy-NMeth}, which is a widely used iterative algorithm for solving bounded optimization problems with equality constraints. 
For each time interval $s=1,\ldots,S$, we define propagators $U_s
= \exp\{-i\bar{H}_s\tau_s\}$. The gradient of $U_s$ corresponding to $\tau_s$ is computed as  
\begin{align}
    \frac{\partial U_s}{\partial \tau_s}= -i \bar{H}_s U_s,\ s=1,\ldots,S.
\end{align} 
The gradient of the objective function with respect to the propagators varies with its specific formulation.   
For the energy function, we define back propagation variables $\kappa_s = U_{s+1}^\dagger \cdots U_{S}^\dagger \tilde{H}X_S |\psi_0\rangle$ for each interval $s=1,\ldots,S-1$ and $\kappa_S=\tilde{H}X_S |\psi_0\rangle$. 
The gradient with respect to $\tau_s$ is computed as 
\begin{align}
    \frac{\partial \hat{F}}{\partial \tau_s}= \frac{2}{E_\textrm{min}}\textrm{Re}\left[i \langle \kappa_{s}| \bar{H}_s X_s |\psi_0\rangle\right], \ s=1,\ldots,S.
\end{align}
For the infidelity function, 
we define back propagation variables $\lambda_s = U_{s+1}^\dagger \cdots U_S^\dagger X_\textrm{targ}$ for each interval $s=1,\ldots,S-1$ and $\lambda_S = X_\textrm{targ}$. Then the gradient of the trace with respect to $\tau_s$ is computed as 
\begin{align}
    \frac{\partial \operatorname{tr}\left\{X_{\textrm{targ}}^{\dagger} X_S\right\}}{\partial \tau_s}= -i\lambda_{s}^\dagger \bar{H}_s X_s,\ s=1,\ldots,S.
\end{align}
Using the definition of the infidelity objective function~\eqref{eq:obj-fid}, we compute the gradient as  
\begin{align}
    \frac{\partial \hat{F}}{\partial \tau_s}= \frac{1}{2^q}\textrm{Re}\left[i\mathbf{tr}\left\{\lambda_{s}^\dagger \bar{H}_s X_s\right\} e ^{-i\textrm{arg} \left(\operatorname{tr}\left\{X_{\textrm{targ}}^{\dagger} X_S\right\}\right)}\right],\ s=1,\ldots,S, 
\end{align}
where $\textrm{arg}(\cdot)$ represents the argument of a complex number. 

\paragraph{Time-evolution simulation} For each set of switching time points, we can input controllers into quantum systems and conduct the time-evolution process to obtain final objectives and gradients by the finite difference method. 
Conducting time-evolution simulations on classical computers is still a widely used approach in quantum control research. Directly computing matrix exponentials during time evolution requires high computational costs. 
Therefore, we introduce an acceleration technique of time-evolution simulations with pre-computed eigenvalue decompositions. Because all the Hamiltonian controllers in quantum systems are Hermitian matrices, they can be written as
\begin{align}
    \bar{H}_s = Q_s\Lambda_s Q_s^\dagger,\ s=1,\ldots, S
\end{align}
where $\Lambda_s=\textbf{diag}\{\lambda_{1s},\ \ldots,\ \lambda_{2^qs}\}$ are diagonal matrices with eigenvalues $\lambda_{1s},\ \ldots,\ \lambda_{2^qs}$ of $\bar{H}_s$ as diagonals and $Q_s$ are unitary matrices. According to the paper by~\citet{hall2013lie}, the corresponding propagators are computed as
\begin{align}
    \label{eq:reform-matrix-exponential}
        U_s = e^{-i\bar{H}_s\tau_s} =  Q_se^{-i\Lambda_s \tau_s}Q_s^\dagger = Q_s\textbf{diag}\{e^{\lambda_{1s}},\ \ldots,\ e^{\lambda_{2^qs}}\}Q_s^\dagger,\ s=1,\ldots,S.
\end{align}
Therefore, with pre-decomposed $\Bar{H}_s,\ s=1,\ldots,S$, for any interval length variables $\tau$, we only need to compute $2^q$ real number exponentials for all the eigenvalues and $2$ matrix exponentials for computing each propagator during the time-evolution process. The number of eigenvalue decompositions we require to pre-compute is no more than the number of distinctive controllers in the sequence $\bar{\mathcal{H}}$, which is at most $N$ for problems with the SOS1 property and $2^N$ for problems without the SOS1 property. 
\section{Numerical Studies}
\label{sec:numerical}
We apply our solution frameworks from Sections~\ref{sec:alg-extraction} and \ref{sec:model-switching} with switching time optimization to solve multiple specific quantum control instances, to demonstrate their computational efficacy. 
In Section~\ref{sec:instance}, we introduce our experimental design and four specific problems. 
In Section~\ref{sec:sensitivity}, we select a specific instance to conduct the sensitivity analysis of the switching penalty parameter for obtaining binary controls. 
In Section~\ref{sec:results}, we first present the objective value results of different methods and then show the computational time results and the acceleration of our pre-computing matrix decomposition technique. 
We demonstrate that our switching time optimization framework eliminates the requirement for precise time discretization. In Section~\ref{sec:control}, we discuss the optimal control figures of selected instances, to provide intuitive insights for our algorithms. 

\subsection{Simulation Design}
\label{sec:instance}
For all the instances, we first solve the continuous relaxation of the discretized model~\eqref{eq:model-d-1} to obtain continuous solutions. Then we apply three methods to obtain discretized binary controls and their corresponding controller sequences. 
The first one~\citep[Algorithm TR+MT+ALB]{Fei2023binarycontrolpulse} solves the continuous relaxation with TV regularizer by a trust-region method (TR), then rounds continuous solutions with min-up-time constraints (MT), and improves binary solutions by an approximate local-branching method (ALB). 
We select TR+MT+ALB as the benchmark algorithm because it obtains the best trade-off between objective values and TV regularizer values among all the instances. 
The second and third methods are heuristic methods based on objective value (Algorithm~\ref{alg:switch-eobj}) and cumulative difference (Algorithm~\ref{alg:switch-cumdiff}). 
With obtained controller sequences, we solve the switching time optimization model~\eqref{eq:model-ts} and obtain final solutions. 
We introduce four specific quantum control problems as follows. 

\paragraph{Energy minimization problem}
We consider a quantum spin system with $q$ qubits, no intrinsic Hamiltonian $H^{(0)}$, and two control Hamiltonians $H^{(1)},\ H^{(2)}$. The initial state $|\psi_0\rangle$ is the ground state of $H^{(1)}$, defined as the eigenvector with minimum eigenvalue. 
The initial operator $X_{\textrm{init}}$ is a $2^q$-dimensional identity matrix. Our goal is to minimize the system energy corresponding to $H^{(2)}$, with minimum energy $E_{\textrm{min}}<0$ as the ground state energy of $H^{(2)}$. 
The objective function is the energy ratio function~\eqref{eq:obj-energy} with $\Tilde{H}=H^{(2)}$. 
We use $\sigma_i^x,\ \sigma_i^z$ to represent the Pauli matrices of qubit $i$ for $i=1,\ldots,q$. 
We define a matrix $[J_{ij}],\ i,j=1,\ldots,q$ as the coupling matrix for the instances of all-to-all spin glasses.  
\begin{subequations}
\label{eq:model-energy-c}
\begin{align}
\label{eq:model-energy-c-obj}
    \min_{u(t), X(t), H(t)}\quad  \displaystyle & 1 - \left \langle \psi_0\right| X(t_f)^\dagger H^{(2)} X(t_f) \left|\psi_0 \right\rangle / E_{\textrm{min}}\\
    \textrm{s.t.}\quad %
                    & H(t) = u_1(t)H^{(1)} + u_2(t)H^{(2)},\ \forall t\in [0,t_f]\\
                    \label{eq:model-energy-h1}
                    & H^{(1)} = -\sum_{i=1}^q \sigma_i^x
                    ,\ H^{(2)} = \sum_{ij} J_{ij}\sigma_i^z \sigma_j^z\\
                     & \frac{d}{dt}X(t)=-iH(t)X(t)\nonumber\\
                    & X(0)= X_\textrm{init} \nonumber\\
                    \label{eq:model-energy-cons-u-sum}
                    & u_1(t)+u_2(t)=1\\
                    & u_1(t),\ u_2(t) \in \left\{0,1\right\}, 
\end{align}
\end{subequations}
where constraint~\eqref{eq:model-energy-cons-u-sum} indicate that the SOS1 property is required in this problem.
We test instances for quantum systems with qubit numbers $q=2,\ 4,\ 6$ and represent the instances by Energy2, Energy4, and Energy6, respectively. When $q=2$, $J$ is fixed as a matrix with zero diagonals and other elements as one. When $q=4,\ 6$, $J$ is a symmetric matrix with zero diagonals and other elements randomly generated from the interval $[-1,1]$. 

\paragraph{CNOT estimation problem}
This problem is defined on a quantum system modeled as an isotropic Heisenberg spin-1/2 chain with length 2~\citep{pawela2016various}. We only consider the unitary evolution without energy leakage. 
The system includes an intrinsic Hamiltonian $H^{(0)}$ given by the Heisenberg Hamiltonian, and two Zeeman-like control Hamiltonians $H^{(1)},\ H^{(2)}$ which are only performed on the first spin. We define the initial operator $X_\textrm{init}$ as a 4-dimensional identity matrix and the target operator as the matrix formulation of the CNOT gate as: 
\begin{equation*}
    X_{\textrm{CNOT}} = \begin{pmatrix}
        1 & 0 & 0 & 0\\
        0 & 1 & 0 & 0\\
        0 & 0 & 0 & 1\\
        0 & 0 & 1 & 0
    \end{pmatrix}.
\end{equation*}
With $\sigma_i^x,\ \sigma_i^y,\ \sigma_i^z$ as the Pauli matrices for qubit $i=1,2$, we have the following formulation:
\begin{subequations}
\label{eq:model-cnot-c}
\begin{align}
    \min_{u(t), X(t), H(t)}\quad  & 1 - \frac{1}{4}\left|\operatorname{tr}\left\{X_\textrm{CNOT}^\dagger X(t_f)\right \}\right| \\
    \textrm{s.t.}\quad  & H(t) = H^{(0)} + u_1(t)H^{(1)} + u_2(t)H^{(2)},\ \forall t\in [0,t_f]\\
                    \label{eq:model-cnot-h0}
                    & H^{(0)} = \sigma^x_{1} \sigma^x_{2}+\sigma^y_{1} \sigma^y_{2}+\sigma^z_{1} \sigma^z_{2},\ H^{(1)} = \sigma^x_1
                    ,\ H^{(2)} = \sigma^y_1\\
                    & \frac{d}{dt}X(t)=-iH(t)X(t)\nonumber\\
                    & X(0)= X_\textrm{init} \nonumber\\
                    & u_1(t),\ u_2(t) \in \left\{0,1\right\}.
\end{align}
\end{subequations}
The objective function is the infidelity function~\eqref{eq:obj-fid} with $X_\textrm{targ}=X_\textrm{CNOT}$. To demonstrate the generality of our algorithm, we relax the SOS1 property in this problem. We test instances with evolution times $t_f=5,\ 10,\ 20$ and label them by CNOT5, CNOT10, and CNOT20, respectively.

\paragraph{NOT estimation problem}
Following the setting in the paper by~\citet{Motzoi2009}, we consider the lowest three levels of a driven slightly anharmonic energy spectrum for a single qubit, where the nearest levels have couplings. The first two levels comprise the qubit. The third level is a physical state representing the energy leakage but not a logical state. 
The system contains an intrinsic Hamiltonian $H^{(0)}$ and two control Hamiltonians $H^{(1)},\ H^{(2)}$. We define the initial operator $X_\textrm{init}$ as a 3-dimensional identity matrix and the target operator as the matrix formulation of the NOT gate:
\begin{align*}
    X_{\textrm{NOT}} = \begin{pmatrix}
    0 & 1 & 0\\
    1 & 0 & 0\\
    0 & 0 & 0\\
    \end{pmatrix}.
\end{align*}
For level $j=0,\ 1,\ 2$, the quantum state $|j\rangle$ is represented by a standard vector $j$ with the $j$th element as $1$ and other elements as $0$. 
The continuous-time formulation of this problem is 
\begin{subequations}
\label{eq:model-not-c}
\begin{align}
    \min_{u(t), X(t), H(t)}\quad  & 1 - \frac{1}{2}\left|\operatorname{tr}\left\{X_\textrm{NOT}^\dagger X(t_f)\right \}\right| \\
    \textrm{s.t.}\quad  & H(t) = H^{(0)} + u_1(t)H^{(1)} + u_2(t)H^{(2)},\ \forall t\in [0,t_f]\\                
    \label{eq:model-not-h0}
                    & H^{(0)} = J_1 |1\rangle \langle 1| + J_2 |2\rangle \langle 2|\\ 
                    \label{eq:model-not-h1}
                    & H^{(1)} = {\omega_1}(|0\rangle \langle 1| + |1\rangle \langle 0|)
                    + {\omega_2}(|1\rangle \langle 2| + |2\rangle \langle 1|)\\
                    \label{eq:model-not-h2}
                    & H^{(2)} = {\omega_1}(i|0\rangle \langle 1| -i |1\rangle \langle 0|)
                    + {\omega_2}(i|1\rangle \langle 2| -i |2\rangle \langle 1|)\\
                    & \frac{d}{dt}X(t)=-iH(t)X(t)\nonumber\\
                    & X(0)= X_\textrm{init} \nonumber\\
                    & u_1(t),\ u_2(t) \in \left\{0,1\right\}, 
\end{align}
\end{subequations}
where $J_1,\ J_2$ are the relative strength of transitions between different qubits. Without loss of generality, we choose  $J_1=1,\ J_2=\sqrt{2}$.
The parameters $\omega_1,\ \omega_2$ are the detunings of
the transitions with respect to the drive frequency. We assume that the drive frequency is resonant with the qubit frequency, which means that $\omega_1=0,\ \omega_2=2\pi$. 
Keeping consistent with common quantum system settings with energy leakage~\citep{Motzoi2009,rebentrost2009optimal}, we also relax the SOS1 property in this problem.
We test instances with evolution times $t_f=2,\ 6,\ 10$ and represent them by NOT2, NOT6, and NOT10, respectively.

\paragraph{Circuit compilation problem}
The goal of the circuit compilation problem is to represent a given quantum circuit by specific controllers dependent on the quantum system. 
We consider a gmon qubit quantum system, which is a superconducting qubit architecture that combines high-coherence qubits and tunable qubit couplings~\citep{chen2014qubit}. 
In this system, qubits are connected to their nearest neighbors with a rectangular-grid topology. We use $E$ to represent the connected qubits.
Each qubit has two Hamiltonian controllers, a flux-drive controller, and a charge-drive controller, whose control functions are represented by $u(t)$.  
Every two connected qubits have a corresponding Hamiltonian controller, whose control functions are represented by $w(t)$. 
Following the paper by~\citet{gokhale2019partial}, we take quantum circuits generated by the Unitary Coupled-Cluster Single-Double (UCCSD) methodology~\citep{bartlett2007coupled, romero2018strategies} for estimating the ground state energy of molecules in quantum chemistry as examples. 
For a system with $q$ qubits, the initial operator $X_{\textrm{init}}$ is a $2^q$-dimensional identity matrix and the target operator is the matrix formulation of the quantum circuit, represented by $X_{\textrm{UCCSD}}\in \mathbb{C}^{2^q\times 2^q}$. The specific problem formulation is given as
\begin{subequations}
\label{eq:model-molecule-c}
\begin{align}
    \min_{u(t), X(t), H(t)}\quad  & 1 - \frac{1}{2^q}\left|\operatorname{tr}\left\{X_\textrm{UCCSD}^\dagger X(t_f)\right \}\right| \\
    \textrm{s.t.}\quad  & H(t) = \sum_{j=1}^{2q} u_j(t)H^{(j)} + \sum_{(j_1,j_2)\in E} w_{j_1j_2}(t)H^{(j_1j_2)},\ \forall t\in [0,t_f]\\
                    \label{eq:model-molecule-h1}
                    & H^{(2j-1)} = J_c \sigma^x_{j},\ 
                    H^{(2j)} = J_f\left(\begin{array}{ll}
                                0 & 0 \\
                                0 & 1
                    \end{array}\right)_{(j)},\ j=1,\ldots,q\\
                    \label{eq:model-molecule-hc}
                    & H^{(j_1j_2)} = J_e \sigma^x_{j_1}\sigma^x_{j_2},\ (j_1, j_2)\in E\\
                    & \frac{d}{dt}X(t)=-iH(t)X(t)\nonumber\\
                    & X(0)= X_\textrm{init} \nonumber\\
                    \label{eq:model-circuit-cons-u-sum}
                    & \sum_{j=1}^{2q} u_j(t) + \sum_{(j_1,j_2)\in E}w_{j_1j_2}(t)=1\\
                    & u_j(t) \in \left\{0,1\right\},\ w_{j_1j_2}(t)\in \left\{0, 1\right\},\ j=1,\ldots,2q,\ (j_1, j_2)\in E,
\end{align}
\end{subequations}
where $J_c,\ J_f,\ J_e$ are the parameters corresponding to quantum machines and the subscript $(j)$ in constraints~\eqref{eq:model-molecule-h1} indicates that the matrix operation acts on the $j$th qubit. 
Constraint~\eqref{eq:model-circuit-cons-u-sum} enforces the SOS1 property for this problem. 
We choose $J_c=0.2\pi,\ J_f=3\pi,\ J_e=0.1\pi$ according to the paper by~\citet{gokhale2019partial}.
We test molecules H$_2$ (Dihydrogen), LiH (Lithium hydride), and BeH$_2$ (Beryllium dihydride) and generate the UCCSD circuits with minimum energy by the Python package Qiskit~\citep{aleksandrowicz2019qiskit}.  
The instances are labeled by CircuitH2, CircuitLiH, and CircuitBeH2, respectively. 

The settings of all the parameters are presented in Table~\ref{tab: setting}. 
We test parameters for penalizing the switching when rounding continuous controls $\alpha$ and $\beta$ both in the interval $[10^{-n},\ 10^{-n+1}]$ with step size $5\times 10^{-n-1}$ for $n=1,\ 2,\ 3,\ 4$. We choose the parameter with the best trade-off between objective values and the number of switches to present the results. 
All the numerical simulations were conducted on a macOS computer with 8 cores, 16GB RAM, and a 3.20GHz processor in Python 3.8. All the computational time results are the times on classical computers. Our full code and results are available on our GitHub repository~\citep{codeswitching2022}.
\begin{table}[htbp]
  \centering
  \caption{Parameter settings of examples. The parameters include the number of qubits ($q$), number of controllers ($N$), evolution time $t_f$, number of time steps ($T$), $L_2$ penalty parameter ($\rho$), and switching penalty parameter ($\alpha,\ \beta$). The $L_2$ penalty parameter $\rho$ is only for circuit compilation examples and it is marked by ``-'' for other examples.}
    \begin{tabular}{lrrrrrrr}
    \hline
    Instance & $q$ & $N$ & $t_f$ & $T$ & $\rho$ & $\alpha$ & $\beta$\\
    \hline
    Energy2 & 2     & 2  & 2   & 40 & - & 0.1 & 0.075 \\
    \hline
    Energy4 & 4     & 2  & 2    & 40  & - & 0.15 & 0.015\\
    \hline
    Energy6 & 6     & 2   & 5  & 100 & - & 0.015 & 0.01 \\
    \hline
    CNOT5 & 2     & 2   & 5   & 100 & - & 0.02 & 0.02 \\
    \hline
    CNOT10 & 2     & 2   & 10   & 200 & - & 0.003 & 0.008\\
    \hline
    CNOT20 & 2     & 2  & 20    & 400 & - & 0.01 & 0.015\\
    \hline
    NOT2 & 1 & 2 & 2 & 20 & - & 0.01 & 0.03\\
    \hline
    NOT6 & 1 & 2 & 6 & 60 & - & 0.0015 & 0.015\\
    \hline
    NOT10 & 1 & 2 & 10 & 100 & - & 0.009 & 0.035\\
    \hline
    CircuitH2 & 2     & 5   & 10   & 100 & 1.0 & 0.045 & 0.01\\
    \hline
    CircuitLiH & 4     & 12  & 20   & 200 & 0.1 & 0.03 & 0.06\\
    \hline
    CircuitBeH2 & 6     & 19 & 20 & 200 & 0.01 & 0.03 & 0.2\\
    \hline
    \end{tabular}%
  \label{tab: setting}%
\end{table}%

\subsection{Sensitivity Analysis of Switching Penalty}
\label{sec:sensitivity}
We take an instance Energy6 to show the performance of controls of Algorithm~\ref{alg:overall} with different switching penalty parameters $\alpha$ when obtaining binary controls based on objective values (Algorithm~\ref{alg:switch-eobj}). The performance of changing $\beta$ when obtaining binary controls based on the cumulative differences (Algorithm~\ref{alg:switch-cumdiff}) is similar. 

In the quantum energy minimization problem, the first excited state is the state having energy as the second smallest eigenvalue of the corresponding matrix $\Tilde{H}$. We use $E_{\textrm{fe}}$ to represent the energy of the first excited state. In most applications, we are interested in comparing obtained energy and $E_{\textrm{fe}}$. If the obtained energy is smaller, we say that the control is good enough to distinguish the states. In Table~\ref{tab:res-diff-alpha}, we present the difference between obtained energy and minimum energy $E_\textrm{min}$ for different penalty parameters as well as the difference between the first excited state energy and the minimum energy for the 5 randomly generated instances. 

\begin{table}[htbp]
  \centering
  \caption{Difference between obtained energy and minimum energy of Energy6 example for Algorithm~\ref{alg:overall} using Algorithm~\ref{alg:switch-eobj} at step~\ref{algline:round} with different switching penalty parameters $\alpha$. Column ``First-excited'' represents the difference between the first-excited energy and the ground energy. We bold the maximum parameter with obtained energy less than the first-excited energy for all 5 instances.}
    \begin{tabular}{lrrrrrrrr}
    \hline
     $\alpha$     & First-excited & 0     & 0.001 & 0.003 & 0.005  & 0.01  & \textbf{0.015} & 0.02 \\
    \hline
    Instance 1 & 0.8966 & 0.1503 & 0.1507 & 0.1518 & 0.1549 & 0.1825 & \textbf{0.2772} & 0.5358 \\
    Instance 2 & 0.9013 & 0.1235 & 0.1243 & 0.1289 & 0.1310 & 0.1993 & \textbf{0.2800} & 0.2800 \\
    Instance 3 & 0.8587 & 0.2156 & 0.2163 & 0.2194 & 0.2236 & 0.2313 & \textbf{0.2458} & 0.3283 \\
    Instance 4 & 1.4315 & 0.0356 & 0.0367 & 0.0409 & 0.0434 & 0.1639 & \textbf{0.1639} & 0.3532 \\
    Instance 5 & 0.2772 & 0.2059 & 0.2064 & 0.2082 & 0.2094 & 0.2200 & \textbf{0.2719} & 0.3969 \\
    Average & 0.8731 & 0.1462 & 0.1469 & 0.1499 & 0.1525 & 0.1994 & \textbf{0.2478} & 0.3788 \\
    \hline
     $\alpha$    & 0.03  & 0.04  & 0.05   & 0.07  & 0.1   & 0.2   & 0.3   & 0.6 \\
    \hline
    Instance 1 & 0.5358 & 0.5358 & 0.9666 & 0.9666 & 1.4327 & 1.5090 & 3.9625 & 3.9625 \\
    Instance 2 & 0.4363 & 0.7910 & 0.7910 & 0.7910 & 1.2370 & 3.2501 & 5.3743 & 6.2999 \\
    Instance 3 & 0.4504 & 0.6540 & 0.6540 & 0.6540 & 1.3813 & 2.4889 & 4.7132 & 4.7573 \\
    Instance 4 & 0.7195 & 0.7195 & 0.7195 & 1.6350 & 1.6350 & 2.4832 & 5.3054 & 5.3054 \\
    Instance 5 & 0.3969 & 0.6277 & 0.6277 & 0.9823 & 0.9823 & 1.5300 & 1.5300 & 4.1083 \\
    Average & 0.5078 & 0.6656 & 0.7518 & 1.0058 & 1.3337 & 2.2522 & 4.1771 & 4.8867 \\
    \hline
    \end{tabular}%
  \label{tab:res-diff-alpha}%
\end{table}%

We show that with the increase of switching penalty parameter $\alpha$, the differences of energy for all the instances increase, until reaching the maximum difference $-E_\textrm{min}$. We notice that when the difference between first-excited and minimum energy is small (Instance 5), we require a smaller energy ratio difference to distinguish the states, therefore requiring a smaller penalty parameter $\alpha$. We bold the maximum penalty parameter $\alpha=0.015$ with obtained energy smaller than the first-excited energy in all the instances.

In our model, obtaining energy less than the first-excited energy $E_{\textrm{fe}}$ is equivalent to obtaining an objective value smaller than $1-E_{\textrm{fe}}/E_{\textrm{min}}$. 
We present average objective values $1 - E/E_{\textrm{min}}$ and TV-norm values among $5$ instances for different switching penalty parameters in Table~\ref{tab:res-obj-tv-alpha}. 
The detailed objective values and TV-norm value results for each instance can be found in Appendix~\ref{app:results}. 
We show that with the increase of the switching penalty parameter $\alpha$, the average objective value increases monotonically and the average TV-norm value decreases monotonically.
\begin{table}[htbp]
  \centering
  \caption{Average objective value and TV-norm results of Energy6 example for Algorithm~\ref{alg:overall} using Algorithm~\ref{alg:switch-eobj} at step~\ref{algline:round} with different switching penalty parameters $\alpha$. TV-norm value of the first-excited state is not applicable and marked by ``-".}
    \begin{tabular}{lrrrrrrrr}
    \hline
    $\alpha$ & {First-excited} & 0     & 0.001 & 0.003 & 0.005  & 0.01  & 0.015 & 0.02\\
    \hline
    Objective & 0.1787 & 	0.0319 & 	0.0321 & 	0.0327 & 	0.0332 & 	0.0422 & 	0.0526 & 	0.0824 \\
    TV-norm & {-} & 116.4& 	74.0& 	41.2& 	32.4	& 23.2	& 20.4	& 17.6 \\
    \hline
    $\alpha$ &  0.03  & 0.04  & 0.05   & 0.07  & 0.1   & 0.2   & 0.3   & 0.6 \\
    \hline
    Objective & 0.1063&	0.1373&	0.1591&	0.2109&	0.2791&	0.4521&	0.8432&	1.0000 \\
    TV-norm & 15.2	&12.8	&12.0&	10.4&	8.4	&4.8	&2.4	&0.0 \\
    \hline
    \end{tabular}%
  \label{tab:res-obj-tv-alpha}%
\end{table}%
Furthermore, we present the trade-off relationship between objective values computed by $1-E/E_\textrm{min}$ and TV-norm values in Figure~\ref{fig:res-alpha}. 
Blue circles, orange triangles, green stars, red squares, and purple plus signs represent the results of various $\alpha$ of instances 1--5, respectively. We use larger transparent dots to represent the results with higher energy than first-excited state energy and use smaller opaque dots to represent the results with lower energy. We choose the point with energy lower than the first-excited state energy for both 5 instances and the smallest TV-norm as our best switching penalty parameter, which is $\alpha=0.015$. 
\begin{figure}[htbp]
    \centering
    \includegraphics[width=0.6\textwidth]{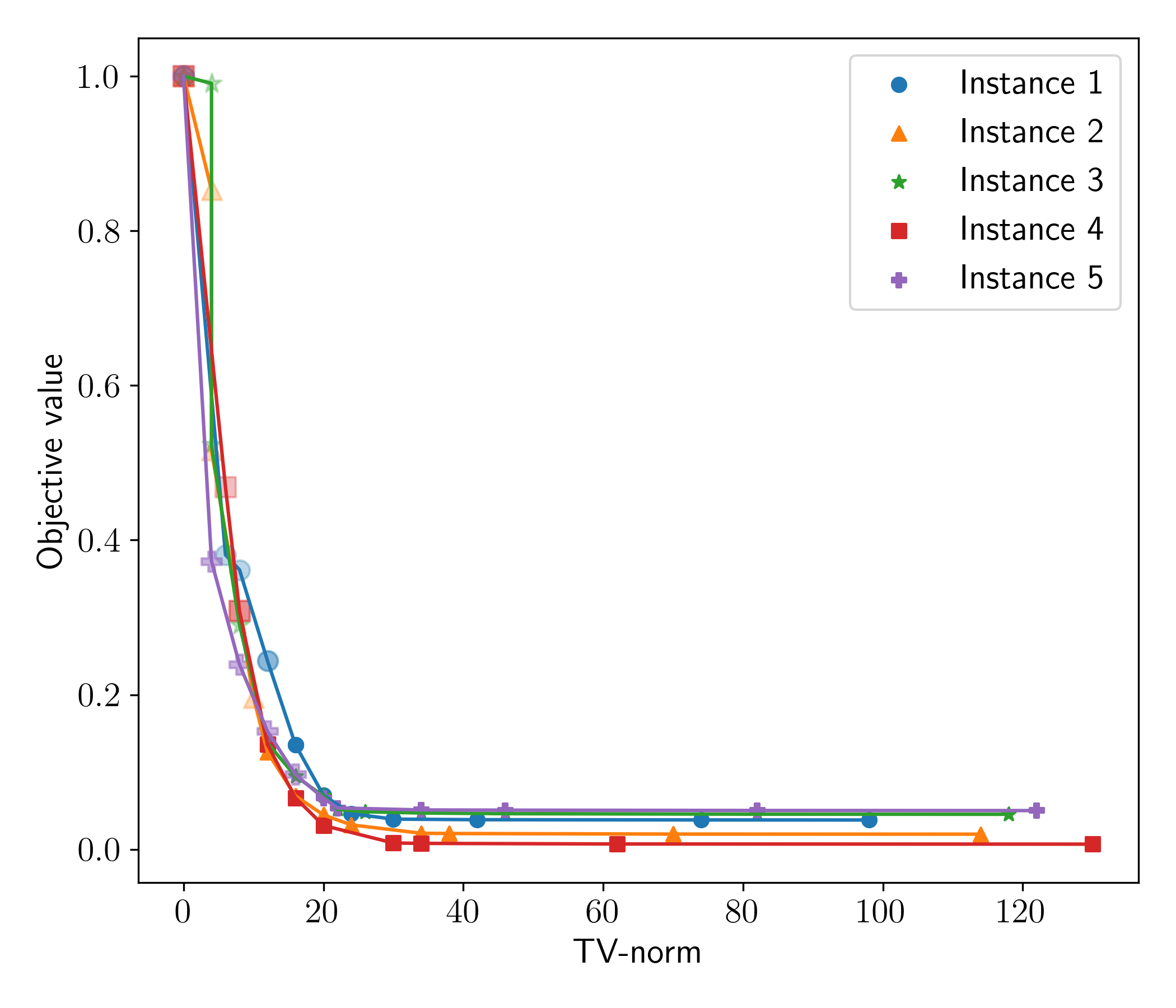}
    \caption{Objective value and TV-norm results of Energy6 example for Algorithm~\ref{alg:overall} using Algorithm~\ref{alg:switch-eobj} at step~\ref{algline:round} with different switching penalty parameters $\alpha$. Blue circles, orange triangles, green stars, red squares, and purple plus signs represent the results of instances 1--5, respectively. Larger transparent markers represent the results with higher energy than first-excited state energy while smaller opaque markers represent the results with lower energy.}
    \label{fig:res-alpha}
\end{figure}
\subsection{Numerical Results}
\label{sec:results}
In this section, we discuss the numerical results for four methods introduced in Section~\ref{sec:instance}. In all the following tables, figures and discussion, we label results of discretized controls obtained by~\citep[Algorithm TR+MT+ALB]{Fei2023binarycontrolpulse} as ``\citep{Fei2023binarycontrolpulse}''. We label the results obtained by Algorithm~\ref{alg:overall} using~\citep[Algorithm TR+MT+ALB]{Fei2023binarycontrolpulse}, heuristic methods based on objective values (Algorithm~\ref{alg:switch-eobj}), and heuristic methods based on cumulative difference (Algorithm~\ref{alg:switch-cumdiff}) to attain binary controls $u^\textrm{bin}$ at step~\ref{algline:round} as ``Alg.~\ref{alg:overall}w/\citep{Fei2023binarycontrolpulse}'', ``Alg.~\ref{alg:overall}w/\ref{alg:switch-eobj}'', and ``Alg.~\ref{alg:overall}w/\ref{alg:switch-cumdiff}'', respectively. 

\paragraph{Objective results}
We present the histograms of objective values with log-scale and TV-norm values in Figure~\ref{fig:all} for all the instances and methods. Blue bars represent objective value results and orange bars represent TV-norm results. Bars marked by slashes, backslashes, stars, and dots represent the results of methods TR+MT+ALB~\citep{Fei2023binarycontrolpulse}, Algorithm~\ref{alg:overall} with binary controls from TR+MT+ALB~\citep{Fei2023binarycontrolpulse}, Algorithm~\ref{alg:overall} with binary controls from Algorithm~\ref{alg:switch-eobj}, and Algorithm~\ref{alg:overall} with binary controls from Algorithm~\ref{alg:switch-cumdiff}, respectively.
\begin{figure}[htb]
    \centering
\includegraphics[width=\textwidth]{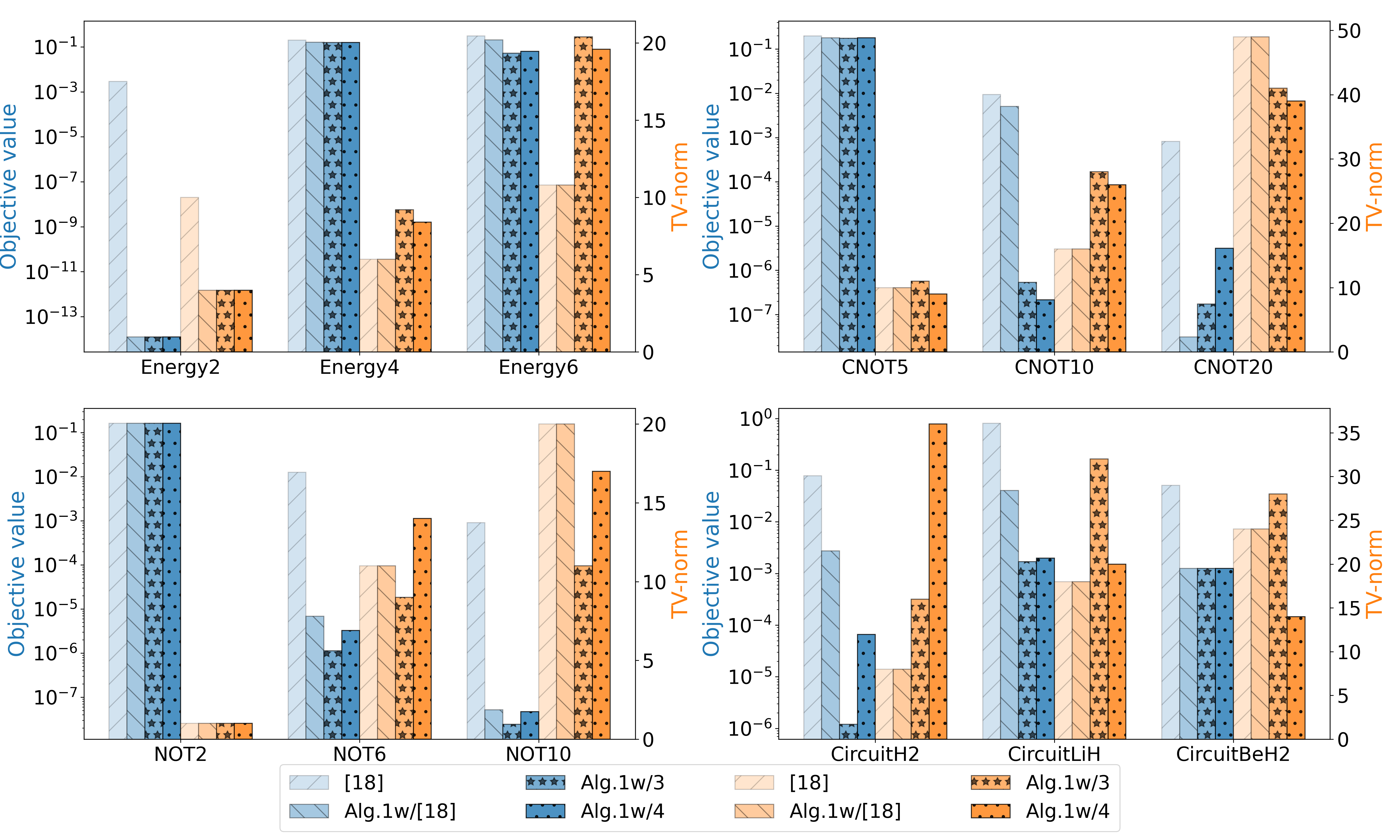}
\caption{Histograms of objective values with log-scale and TV-norm values for all the instances and methods. Blue bars represent objective value results and orange bars represent TV-norm results. Bars marked by slashes, backslashes, stars, and dots represent methods~\citep{Fei2023binarycontrolpulse}, Alg.~\ref{alg:overall} with binary controls from TR+MT+ALB~\citep{Fei2023binarycontrolpulse}, Alg.~\ref{alg:overall} with binary controls from Alg.~\ref{alg:switch-eobj}, and Alg.~\ref{alg:overall} with binary controls from Alg.~\ref{alg:switch-cumdiff}. }
    \label{fig:all}
\end{figure}

Comparing the direct results of~\citep{Fei2023binarycontrolpulse} and Algorithm~\ref{alg:overall} with binary controls from~\citep{Fei2023binarycontrolpulse}, we demonstrate that switching time optimization significantly improves the objective value because we can eliminate a fixed time discretization without increasing the number of switches. 
Compared to the results of Algorithm~\ref{alg:overall} with binary controls from~\citep{Fei2023binarycontrolpulse}, 
we show that Algorithm~\ref{alg:overall} with binary controls obtained from new heuristic methods (Algorithm~\ref{alg:switch-eobj}--\ref{alg:switch-cumdiff}) is not only more concise and convenient for adjusting parameters but also gains a better trade-off results for most instances. For the results of Algorithm~\ref{alg:overall} with our two new methods to obtain binary control, we illustrate that the objective values are in the same order but the TV-norm values vary among instances. In practice, we recommend testing two methods and choosing the best one. 

We select an instance NOT10 and present the control results with objective values (represented by ``Obj'') and TV-norm values in Figure~\ref{fig:ctrl-not}. We show that switching optimization improves the objective value significantly by slightly modifying the time points of switches by comparing the upper-left and upper-right figures in Figure~\ref{fig:ctrl-not}.
We demonstrate that switching time optimization with our new rounding algorithms reduces both objective values and TV-norm values in Figure~\ref{fig:ctrl-not}.

\paragraph{CPU time results} We present the CPU time of solving continuous relaxations (Step~\ref{algline:solvec}), extracting binary controls (Step~\ref{algline:round}), and conducting switching time optimization (Step~\ref{algline:st}) for all the methods in Table~\ref{tab:res-time}. 
We eliminate the time of merging intervals because it is not closely related to various instances and methods. 
We show our new methods of obtaining binary controls dramatically reduce the computational time compared to the previous method by eliminating the local search process. 
With our proposed acceleration technique in the heuristic method based on objective values, we avoid conducting time-evolution for $N\cdot T$ times and complete the process in seconds. 
The time of solving the switching optimization model~\eqref{eq:model-ts} is all less than one second. 

\begin{figure}[ht]
    \centering
\includegraphics[width=\textwidth]{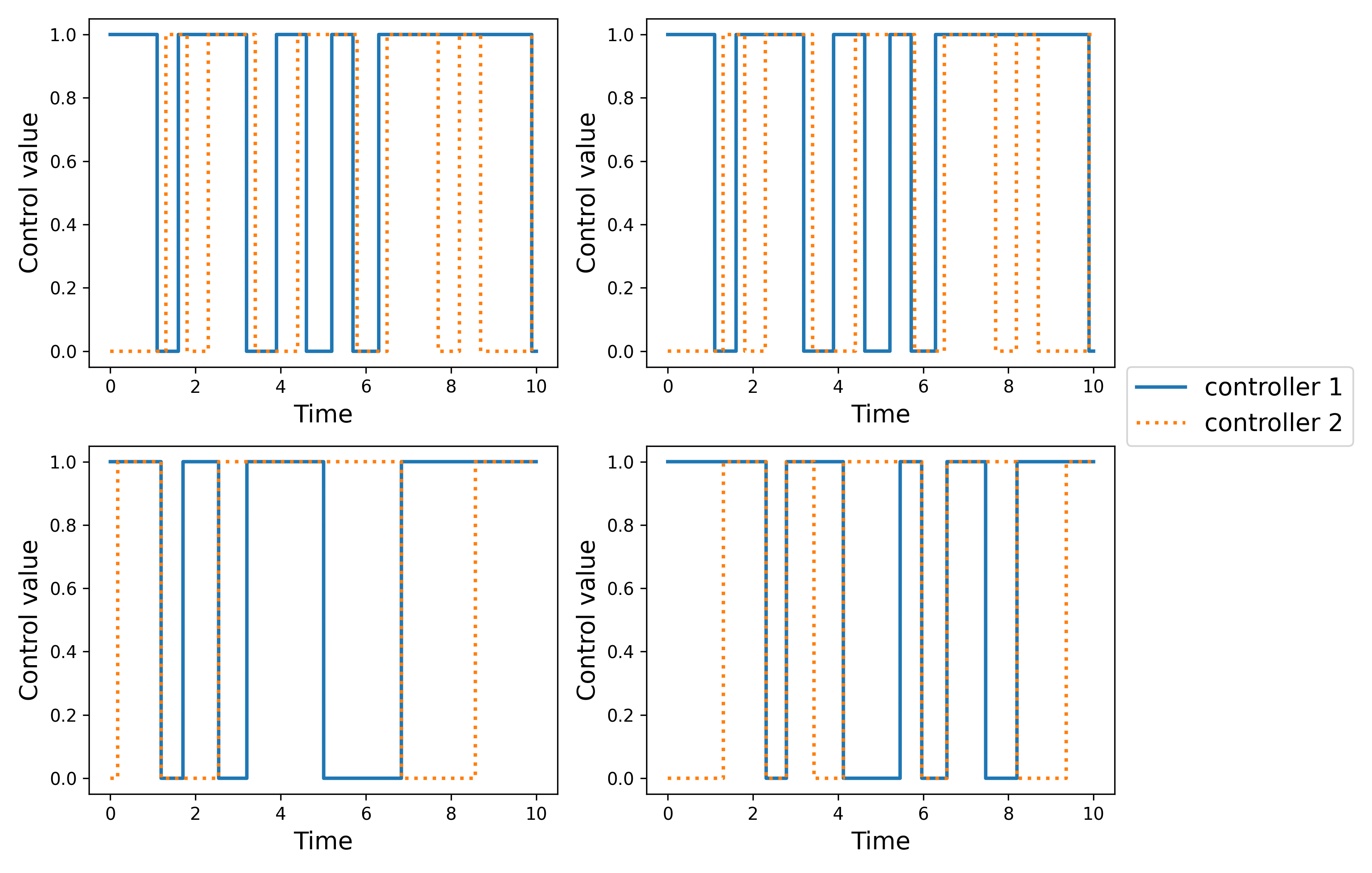}
    \caption{Control results of instance NOT10 for all the methods. 
    Upper-left: discretized controls obtained by~\citep{Fei2023binarycontrolpulse} (objective 9.087E$-$04, TV-norm 20). 
    Upper-right: controls of Algorithm~\ref{alg:overall} with binary controls obtained by~\citep{Fei2023binarycontrolpulse} (objective: 5.132E$-$08, TV-norm: 20). 
    Lower-left: controls of Algorithm~\ref{alg:overall} with binary controls obtained by Algorithm~\ref{alg:switch-eobj} (objective: 2.439E$-$08, TV-norm: 11). 
    Lower-right: controls of Algorithm~\ref{alg:overall} with binary controls obtained by Algorithm~\ref{alg:switch-cumdiff} (objective: 4.692E$-$08, TV-norm: 17). 
    Blue lines and orange dashed lines represent controllers 1 and 2, respectively.}
    \label{fig:ctrl-not}
\end{figure}
\begin{table}[ht]
  \centering
  \caption{CPU time (s) results of all the methods for main steps in Algorithm~\ref{alg:overall}. We eliminate the time of merging intervals. Column ``Continuous'' represents the results of solving continuous relaxations (Step~\ref{algline:solvec}). Columns ``Alg.~\ref{alg:overall}w/\citep{Fei2023binarycontrolpulse}'', ``Alg.~\ref{alg:overall}w/\ref{alg:switch-eobj}'', and ``Alg.~\ref{alg:overall}w/\ref{alg:switch-cumdiff}'' represent the time of rounding continuous controls (Step~\ref{algline:round}) and solving the switching time optimization model~\eqref{eq:model-ts} (Step~\ref{algline:st}) in Algorithm~\ref{alg:overall} using~\citep{Fei2023binarycontrolpulse}, Algorithm~\ref{alg:switch-eobj}, and Algorithm~\ref{alg:switch-cumdiff} in Step~\ref{algline:round}, respectively.}
  \begin{adjustbox}{width=\textwidth}
    \begin{tabular}{lr|rrr|rrr}
    \hline
          & Continuous
          & \multicolumn{3}{c|}{Binary control extraction (Step~\ref{algline:round})} & \multicolumn{3}{c}{Switching time optimization (Step~\ref{algline:st})} \\
          \cline{3-8}
          &   (Step~\ref{algline:solvec})    & {Alg.~\ref{alg:overall}w/\citep{Fei2023binarycontrolpulse}} & {Alg.~\ref{alg:overall}w/\ref{alg:switch-eobj}} & {Alg.~\ref{alg:overall}w/\ref{alg:switch-cumdiff}} & {Alg.~\ref{alg:overall}w/\citep{Fei2023binarycontrolpulse}} & {Alg.~\ref{alg:overall}w/\ref{alg:switch-eobj}} & {Alg.~\ref{alg:overall}w/\ref{alg:switch-cumdiff}} \\
    \hline
    Energy2 & 0.130 & 4.038 & 0.011 & 0.013 & 0.002 & 0.002 & 0.002 \\
    \hline
    Energy4 & 2.892 & 33.173 & 0.030 & 0.027 & 0.003 & 0.009 & 0.007 \\
    \hline
    Energy6 & 105.788 & 3174.384 & 0.678 & 0.484 & 0.063 & 0.463 & 0.167 \\
    \hline
    CNOT5 & 1.125 & 100.506 & 0.036 & 0.023 & 0.015 & 0.020 & 0.009 \\
    \hline
    CNOT10 & 0.725 & 291.447 & 0.071 & 0.043 & 0.035 & 0.036 & 0.044 \\
    \hline
    CNOT20 & 1.025 & 595.570 & 0.135 & 0.080 & 0.036 & 0.038 & 0.030 \\
    \hline
    NOT2  & 0.046 & 0.560 & 0.009 & 0.006 & 0.002 & 0.003 & 0.003 \\
    \hline
    NOT6  & 0.147 & 12.240 & 0.022 & 0.015 & 0.013 & 0.007 & 0.016 \\
    \hline
    NOT10 & 0.105 & 82.500 & 0.033 & 0.021 & 0.008 & 0.017 & 0.009 \\
    \hline
    CircuitH2 & 1.754 & 353.006 & 0.045 & 0.022 & 0.008 & 0.018 & 0.032 \\
    \hline
    CircuitLiH & 238.685 & 2347.433 & 0.392 & 0.131 & 0.015 & 0.042 & 0.025 \\
    \hline
    CircuitBeH2 & 3060.900 & 37368.162 & 2.026 & 1.758 & 0.124 & 0.555 & 0.311 \\
    \hline
    \end{tabular}%
    \end{adjustbox}
  \label{tab:res-time}%
\end{table}%

We present how the common logarithm of CPU time and iterations vary among problem sizes computed by $2^q\cdot T\cdot N$ when solving~\eqref{eq:model-ts} in Figure~\ref{fig:time}. 
We show that CPU times grow exponentially with $q$ because the dimensions of Hamiltonian matrices grow exponentially, while the number of iterations is stable. 
\begin{figure}[htbp]
    \centering
    \includegraphics[width=0.95\textwidth]{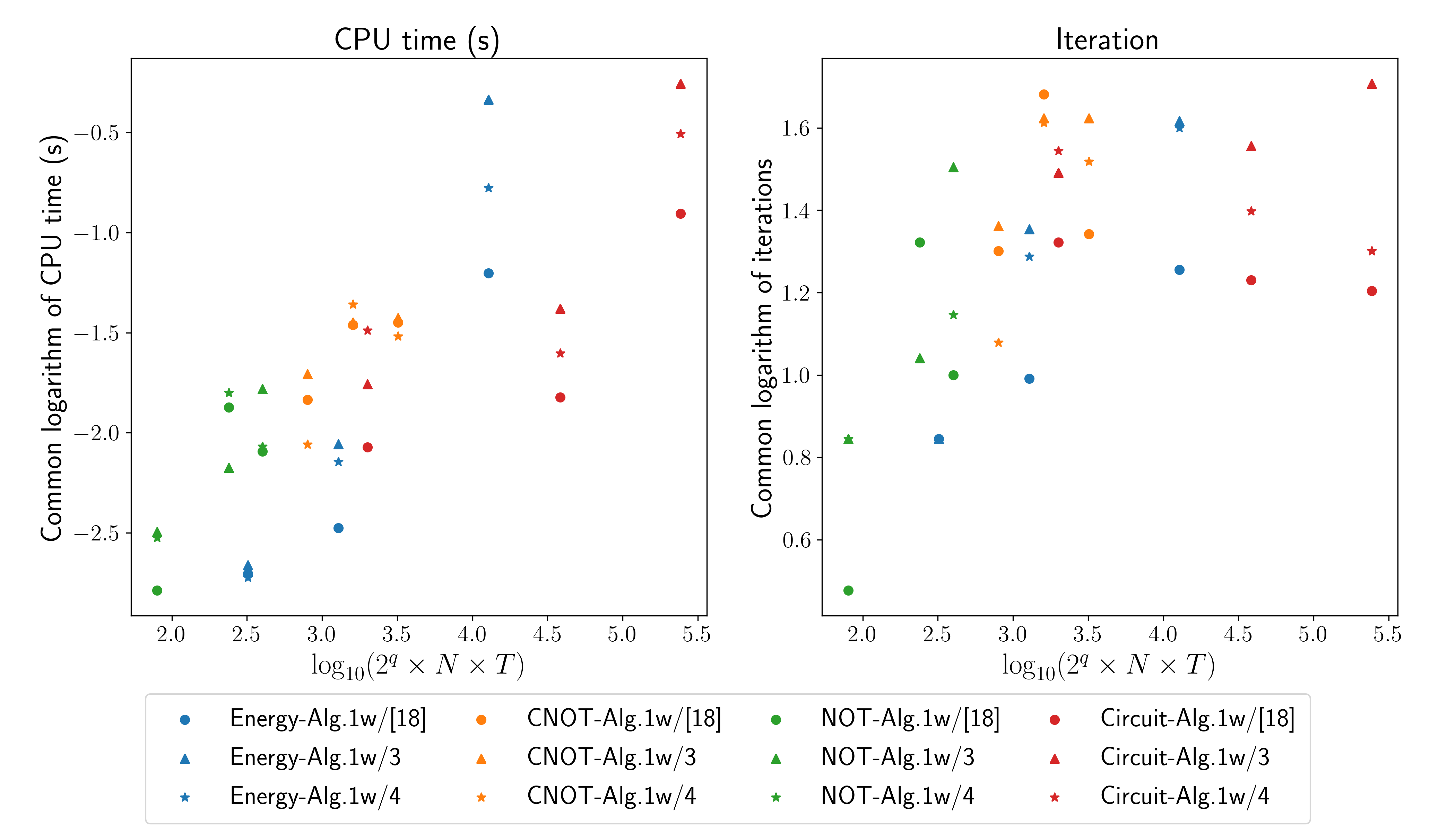}
    \caption{CPU times and numbers of iterations varying among problem sizes computed by $2^q\cdot T\cdot N$. We take the common logarithm of all the values. Blue, orange, green, and red dots represent instances of Energy, CNOT, NOT, and Circuit examples. Circles, triangles, and stars represent time results of Alg.~\ref{alg:overall} with~\citep{Fei2023binarycontrolpulse}, Alg.~\ref{alg:switch-eobj}, Alg.~\ref{alg:switch-cumdiff} to extract binary controls at Step~\ref{algline:round}.}
    \label{fig:time}
\end{figure}
The main CPU time of large-scale examples such as Energy6 and CircuitBeH2 comes from solving continuous relaxations. 
In practice, using quantum computers to conduct the time-evolution process can help save computational times with at most exponential reduction. We only need to evaluate the final state for computing the objective value and the gradient by the finite difference method. 

\paragraph{Acceleration of time-evolution simulation} We select two large-scale instances, Energy6 and CircuitBeH2 to indicate the benefits of our time-evolution simulation with pre-decomposed Hamiltonian controllers proposed in Section~\ref{sec:model-switching}.  
We present the CPU time of conducting time-evolution processes and the total CPU time when solving the model~\eqref{eq:model-ts} in Table~\ref{tab:res-time-acc}, where rows ``Accelerated'' represent the results of our accelerated simulation and rows ``Baseline'' represent the results of the baseline that computes matrix exponentials at each iteration. We demonstrate our accelerated simulation obtains remarkable improvement in CPU times with at most 16.3x speed-up.
\begin{table}[htbp]
  \centering
  \caption{CPU time (s) results of solving the switching time optimization model (Step~\ref{algline:st}) with extracted controllers by various methods, including the benchmark TR+MT+ALB~\citep{Fei2023binarycontrolpulse} in Column ``Alg.~\ref{alg:overall}w/\citep{Fei2023binarycontrolpulse}'', Algorithm~\ref{alg:switch-eobj} in Column ``Alg.~\ref{alg:overall}w/\ref{alg:switch-eobj}'', and Algorithm~\ref{alg:switch-cumdiff} in Column ``Alg.~\ref{alg:overall}w/\ref{alg:switch-cumdiff}''. We present CPU times of evolution simulations and total CPU times. Row ``\# Evolution'' presents the number of evolution times. Rows ``Acceleration'' and ``Baseline'' represent results after and before our acceleration by pre-decomposing matrices. Row ``Speed-up'' represents the speed-up results.}
  \begin{adjustbox}{width=\textwidth}
    \begin{tabular}{llrrrrrr}
    \hline
          &       & \multicolumn{3}{c}{Energy 6} & \multicolumn{3}{c}{CircuitBeH2} \\
    \cline{3-8}
    \multicolumn{2}{l}{} & Alg.~\ref{alg:overall}w/\citep{Fei2023binarycontrolpulse}  & Alg.~\ref{alg:overall}w/\ref{alg:switch-eobj}   & Alg.~\ref{alg:overall}w/\ref{alg:switch-cumdiff} & Alg.~\ref{alg:overall}w/\citep{Fei2023binarycontrolpulse}  & Alg.~\ref{alg:overall}w/\ref{alg:switch-eobj}  & Alg.~\ref{alg:overall}w/\ref{alg:switch-cumdiff} \\
    \hline
    \# Evolution && 18.0 & 41.4 & 39.8 & 16 & 51 & 20   \\
    \hline
    \multirow{3}[0]{*}{Evolution (s)} & Acceleration   & 0.035 & 0.255 & 0.095 & 0.052 & 0.235 & 0.163 \\
    \cline{2-8}
          & Baseline & 0.577 & 1.742 & 1.454 & 0.648 & 2.858 & 0.594 \\
    \cline{2-8}
          & Speed-up & 16.3x & 6.8x  & 15.3x & 12.6x & 12.1x & 12.5x \\
    \hline
    \multirow{3}[0]{*}{Total (s)} & Acceleration   & 0.063 & 0.463 & 0.167 & 0.124 & 0.555 & 0.311 \\
    \cline{2-8}
          & Baseline & 0.902 & 2.711 & 2.266 & 1.063 & 5.829 & 1.038 \\
    \cline{2-8}
          & Speed-up & 14.4x & 5.9x  & 13.6x & 8.6x  & 10.5x & 3.3x \\
    \hline
    \end{tabular}%
      \end{adjustbox}
  \label{tab:res-time-acc}%
\end{table}%

\paragraph{Results of fewer time steps} Because the switching time optimization model~\eqref{eq:model-ts} only requires the controller sequence and optimizes the time of all the switching points, we no longer require precise time discretization.
Therefore, we can solve continuous relaxation with longer discretized time intervals, thus reducing the number of time steps and computational costs. We take Energy6 and CircuitBeH2 as examples and solve instances with $T=20$ and $T=40$ by Algorithm~\ref{alg:overall} using our new methods, Algorithm~\ref{alg:switch-eobj}--\ref{alg:switch-cumdiff} to round controls. In Table~\ref{tab:res-fewer}, we compare objective values, TV-norm values, and CPU times between different numbers of time steps. We show that both methods obtain similar objective values and TV-norm values after switching time optimization with a significant decrease in CPU times.
\begin{table}[ht]
  \centering
  \caption{Comparison of objective value, TV-norm value, and CPU time results with different time steps on example Energy6 ($T=20,\ 100$) and  CircuitBeH2 ($T=40,\ 200$) for the switching time optimization model. Columns ``Alg.~\ref{alg:overall}w/\ref{alg:switch-eobj}'' and ``Alg.~\ref{alg:overall}w/\ref{alg:switch-cumdiff}'' represent results of Algorithm~\ref{alg:overall} with extracted binary controls obtained by Algorithm~\ref{alg:switch-eobj} and Algorithm~\ref{alg:switch-cumdiff}.}
  \begin{adjustbox}{width=\textwidth}
    \begin{tabular}{lrrrr|rrrr}
    \hline
          & \multicolumn{2}{c}{Energy6-T20} & \multicolumn{2}{c|}{Energy6-T100} & \multicolumn{2}{c}{CircuitBeH2-T40} & \multicolumn{2}{c}{CircuitBeH2-T200} \\
          \hline
          & {Alg.~\ref{alg:overall}w/\ref{alg:switch-eobj}} & {Alg.~\ref{alg:overall}w/\ref{alg:switch-cumdiff}} & {Alg.~\ref{alg:overall}w/\ref{alg:switch-eobj}} & {Alg.~\ref{alg:overall}w/\ref{alg:switch-cumdiff}} & {Alg.~\ref{alg:overall}w/\ref{alg:switch-eobj}} & {Alg.~\ref{alg:overall}w/\ref{alg:switch-cumdiff}} & {Alg.~\ref{alg:overall}w/\ref{alg:switch-eobj}} & {Alg.~\ref{alg:overall}w/\ref{alg:switch-cumdiff}} \\
          \hline
    Objective & 0.0723 & 0.0453 & 0.0526 & 0.0632 & 1.250E$-$03 & 1.249E$-$03 & 1.250E$-$03 & 1.250E$-$03 \\
    \hline
    TV-norm & 18.8  & 22  & 20.4  & 19.6  & 22    & 14    & 28    & 14 \\
    \hline
    CPU time & 38.579 & 38.609 & 106.929 & 106.439 & 329.643 & 329.383 & 3063.481 & 3062.387 \\
    \hline
    \end{tabular}%
    \end{adjustbox}
  \label{tab:res-fewer}%
\end{table}%

\subsection{Discussion of Controls}
\label{sec:control}
In this section, we choose Algorithm~\ref{alg:overall} with the method of obtaining binary controls based on objective values (Algorithm~\ref{alg:switch-eobj}) as an example to show the figures of controls and provide intuitive explanations of our overall framework. 
We select examples CircuitLiH, CircuitH2, and the fourth randomly generated instance of Energy6 to present the figures. 
For each example, we present the continuous controls $u^\textrm{con}$ after solving the continuous relaxation (Step~\ref{algline:solvec}), extracted binary controls by Algorithm~\ref{alg:switch-eobj} (Step~\ref{algline:round}), and controls after switching time optimization (Step~\ref{algline:st}) with their objective values. 

In Figure~\ref{fig:ctrl-lih}, we show the results of the example CircuitLiH. Comparing the objective value of binary controls and optimized control, we show that solving the switching time optimization model significantly improves the objective value. 
We notice that the binary controls obtained based on objective values have the most switches at the beginning of the evolution time interval because our heuristic algorithm aims to balance the objective value and the number of switches. 
Specifically, at the beginning of the time evolution interval, keeping the same control leads to a high objective value increase, so the system switches frequently among different controllers. At the second half of the time interval, the objective value becomes stable because it is upper-bounded by one, hence the algorithm chooses to keep the current controller to avoid an increase in the number of switches. 

\begin{figure}[htbp]
    \setcounter{subfigure}{0}
    \centering
    \includegraphics[width=\textwidth]{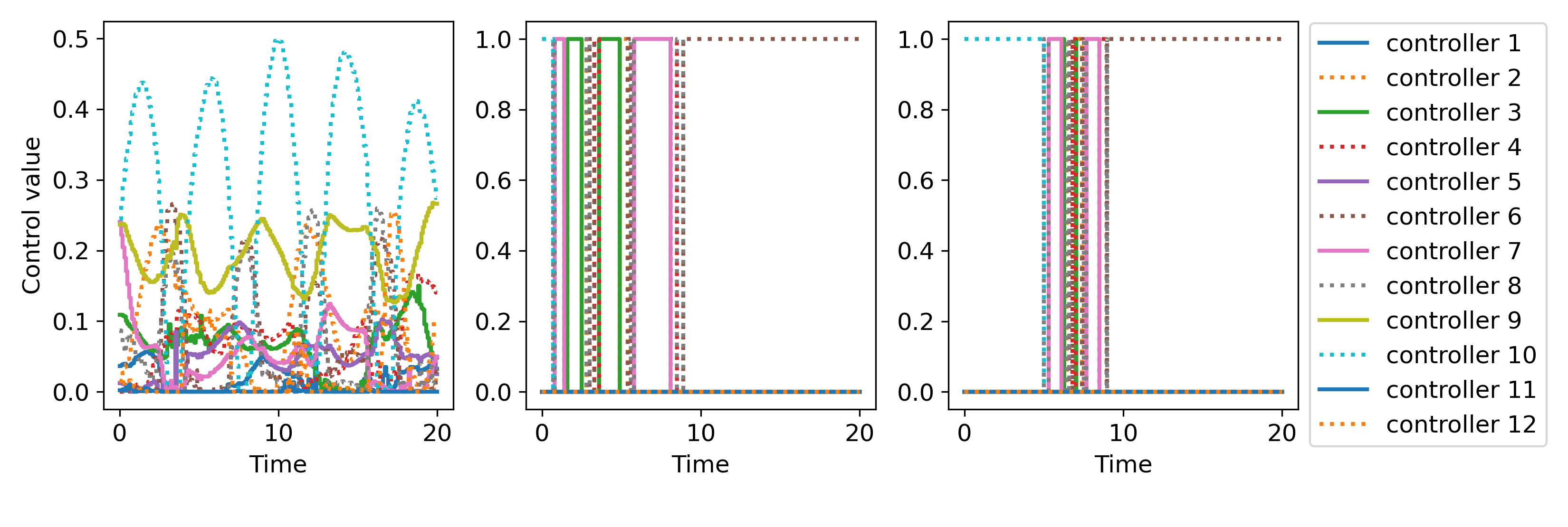}
    \caption{Control results for CircuitLiH of Algorithm~\ref{alg:overall} with binary controls obtained based on objective values (Algorithm~\ref{alg:switch-eobj}). 
    Left: Continuous controls with objective 1.310E$-$03. 
    Middle: Binary controls with objective 0.9993.
    Right: Optimized controls with objective 1.702E$-03$.}
    \label{fig:ctrl-lih}
\end{figure}

In Figure~\ref{fig:ctrl-h2}, we present the control results for the example CircuitH2. We demonstrate that the optimized control results obtain a smaller objective value even compared to the continuous controls, showing the advantages of the switching time optimization model by eliminating time discretization. Furthermore, we show that the switching time optimization model can obtain an optimal solution in that some time intervals have zero length, leading to a reduction in the number of switches.  
\begin{figure}[ht]
    \centering
    \includegraphics[width=\textwidth]{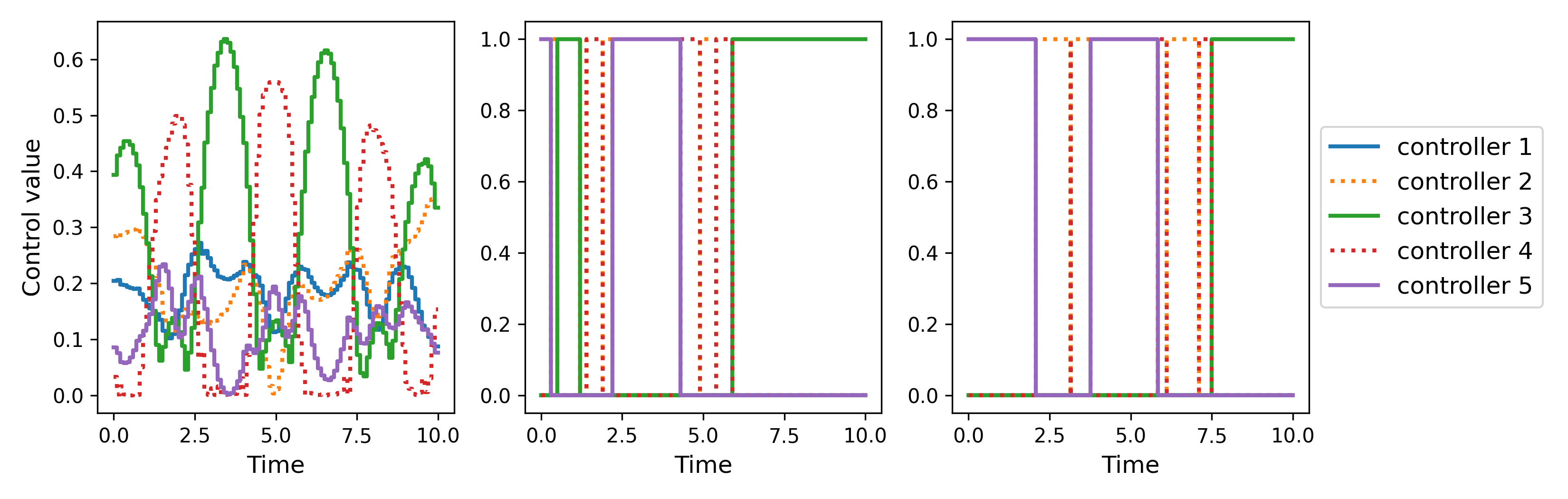}
    \caption{Control results for CircuitH2 of Algorithm~\ref{alg:overall} with binary controls obtained based on objective values (Algorithm~\ref{alg:switch-eobj}). 
    Left: Continuous controls with objective 2.150E$-$04. 
    Middle: Binary controls with objective 0.9619.
    Right: Optimized controls with objective 1.208E$-$06.}
    \label{fig:ctrl-h2}
\end{figure}

In Figures~\ref{fig:steps-obj-t100} and \ref{fig:steps-obj-t20}, we present the control results with larger time steps $T=100$ and fewer time steps $T=20$. We show that reducing the time steps leads to a higher objective value of continuous results and rounded binary controls, which is a disadvantage of the discretized model. 
However, our control results after the switching time optimization obtain the same objective for $T=100$ and $T=20$, indicating the capability of our model for reducing the computation of continuous relaxation with no degradation in the final optimized results. 
\begin{figure}[htbp]
    \centering
    \includegraphics[width=\textwidth]{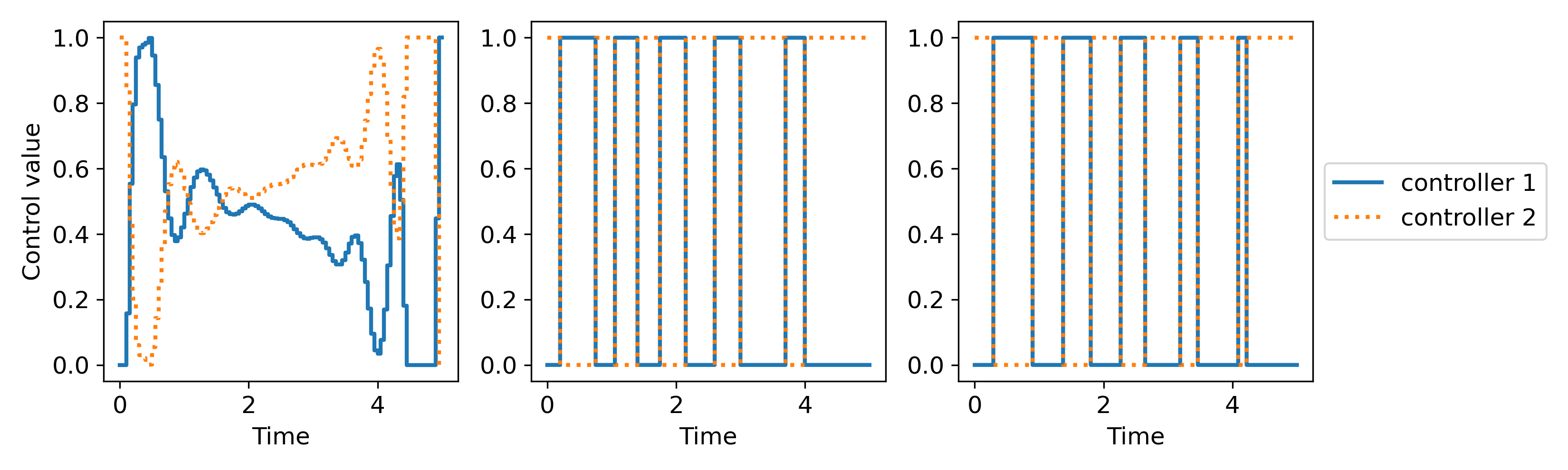}
    \caption{Control results for instance 4 of Energy6 of Algorithm~\ref{alg:overall} with binary controls obtained based on objective values (Algorithm~\ref{alg:switch-eobj}). The number of time steps $T=100$.
    Left: Continuous controls with objective 0.0067. 
    Middle: Binary controls with objective 0.1808.
    Right: Optimized controls with objective 0.0309.}
    \label{fig:steps-obj-t100}
\end{figure}
\begin{figure}[ht]
    \centering
    \includegraphics[width=\textwidth]{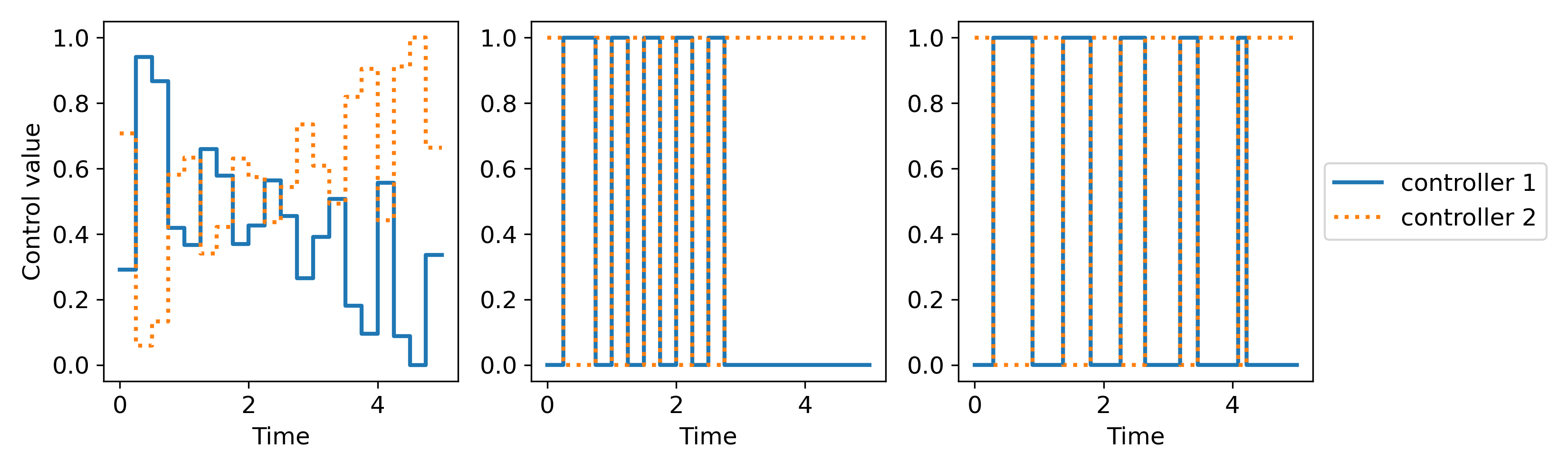}
    \caption{Control results for instance 4 of Energy6 of Algorithm~\ref{alg:overall} with binary controls obtained based on objective values (Algorithm~\ref{alg:switch-eobj}). The number of time steps $T=20$.
    Left: Continuous controls with objective 0.0078. 
    Middle: Binary controls with objective 0.2231.
    Right: Optimized controls with objective 0.0309.}
    \label{fig:steps-obj-t20}
\end{figure}
\section{Conclusion}
\label{sec:conclusion}
In this paper, we studied a generic binary quantum control problem. 
We proposed an algorithmic framework optimizing control functions as well as time points of switching controllers. 
We proposed two heuristic methods to obtain controller sequences from discretized continuous controls with a penalty on the number of switches. 
With obtained controller sequences, we developed a switching time optimization model to optimize the switching points of controllers and applied a sequential least-squares programming algorithm to solve it. 
Furthermore, we proposed an acceleration technique to compute matrix exponentials for the time-evolution simulation on classical computers. 

With numerical experiments on multiple instances in different quantum systems, we demonstrated that our new framework significantly outperforms binary solutions of the time discretization model in balancing objective values and switching frequency. 
Although approaches to obtaining binary controls are various, we showed that our new heuristic methods are concise and effective, and obtain the best quality control after switching time optimization. In practice, one can test new heuristic methods and adjust penalty parameters to achieve the best control. 
Furthermore, we indicated that the switching time optimization model requires less precise time discretization, leading to a reduction of the computational burden. 

Because our algorithm only requires the final state of quantum systems to evaluate the objective value and the gradient, using quantum computers to conduct the time-evolution process is one of our future research directions. 
Noise and uncertainty in quantum systems have been a long-standing problem, making it a valuable topic to take account into uncertainties in quantum control problems and design robust controls. 

~\\
~\\
\noindent {\bf Acknowledgement:}
    This work was supported in part by the U.S. Department of Energy, Office of Science, Office of Advanced Scientific Computing Research, Accelerated Research for Quantum Computing program under Contract No.~DE-AC02-06CH11357. X.~F.~and S.~S.~are partially supported by the Department of Energy (DoE) grant No.\ DE-SC0018018 and by the U.S.~National Science Foundation (NSF) Independent Research Development (IRD) program, respectively. 
L.~T.~B.~is a KBR employee working under the Prime Contract No.~80ARC020D0010 with the NASA Ames Research Center and is grateful for support from the DARPA Quantum Benchmarking program under IAA 8839, Annex 130.


\begin{thebibliography}{52}
\providecommand{\natexlab}[1]{#1}
\providecommand{\url}[1]{\texttt{#1}}
\expandafter\ifx\csname urlstyle\endcsname\relax
  \providecommand{\doi}[1]{doi: #1}\else
  \providecommand{\doi}{doi: \begingroup \urlstyle{rm}\Url}\fi

\bibitem[Glaser et~al.(2015)Glaser, Boscain, Calarco, Koch, Köckenberger,
  Kosloff, Kuprov, Luy, Schirmer, Schulte-Herbrüggen, Sugny, and
  Wilhelm]{Glaser_2015}
Steffen~J. Glaser, Ugo Boscain, Tommaso Calarco, Christiane~P. Koch, Walter
  Köckenberger, Ronnie Kosloff, Ilya Kuprov, Burkhard Luy, Sophie Schirmer,
  Thomas Schulte-Herbrüggen, Dominique Sugny, and Frank~K. Wilhelm.
\newblock Training schr\"odinger’s cat: quantum optimal control.
\newblock \emph{Eur. Phys. J. D}, 69, 2015.
\newblock \doi{10.1140/epjd/e2015-60464-1}.
\newblock URL \url{https://doi.org/10.1140/epjd/e2015-60464-1}.

\bibitem[Werschnik and Gross(2007)]{Werschnik_2007}
J~Werschnik and E~K~U Gross.
\newblock Quantum optimal control theory.
\newblock \emph{Journal of Physics B: Atomic, Molecular and Optical Physics},
  40\penalty0 (18):\penalty0 R175, 2007.
\newblock \doi{10.1088/0953-4075/40/18/R01}.
\newblock URL \url{https://dx.doi.org/10.1088/0953-4075/40/18/R01}.

\bibitem[d'Alessandro(2021)]{d2021introduction}
Domenico d'Alessandro.
\newblock \emph{Introduction to Quantum Control and Dynamics}.
\newblock Chapman and Hall/CRC, 2021.
\newblock \doi{10.1201/9781003051268}.

\bibitem[Kosloff et~al.(1989)Kosloff, Rice, Gaspard, Tersigni, and
  Tannor]{Kosloff_1989}
R.~Kosloff, S.A. Rice, P.~Gaspard, S.~Tersigni, and D.J. Tannor.
\newblock Wavepacket dancing: Achieving chemical selectivity by shaping light
  pulses.
\newblock \emph{Chemical Physics}, 139\penalty0 (1):\penalty0 201--220, 1989.
\newblock ISSN 0301-0104.
\newblock \doi{https://doi.org/10.1016/0301-0104(89)90012-8}.
\newblock URL
  \url{https://www.sciencedirect.com/science/article/pii/0301010489900128}.

\bibitem[Peirce et~al.(1988)Peirce, Dahleh, and Rabitz]{Peirce_1988}
Anthony~P. Peirce, Mohammed~A. Dahleh, and Herschel Rabitz.
\newblock Optimal control of quantum-mechanical systems: Existence, numerical
  approximation, and applications.
\newblock \emph{Phys. Rev. A}, 37:\penalty0 4950--4964, 1988.
\newblock \doi{10.1103/PhysRevA.37.4950}.
\newblock URL \url{https://link.aps.org/doi/10.1103/PhysRevA.37.4950}.

\bibitem[Pawela and Sadowski(2016)]{pawela2016various}
{\L}ukasz Pawela and Przemys{\l}aw Sadowski.
\newblock Various methods of optimizing control pulses for quantum systems with
  decoherence.
\newblock \emph{Quantum Information Processing}, 15\penalty0 (5):\penalty0
  1937--1953, 2016.
\newblock \doi{10.1007/s11128-016-1242-y}.

\bibitem[Motzoi et~al.(2009)Motzoi, Gambetta, Rebentrost, and
  Wilhelm]{Motzoi2009}
F.~Motzoi, J.~M. Gambetta, P.~Rebentrost, and F.~K. Wilhelm.
\newblock Simple pulses for elimination of leakage in weakly nonlinear qubits.
\newblock \emph{Physical Review Letters}, 103\penalty0 (11), 2009.
\newblock \doi{10.1103/physrevlett.103.110501}.

\bibitem[Gokhale et~al.(2019)Gokhale, Ding, Propson, Winkler, Leung, Shi,
  Schuster, Hoffmann, and Chong]{gokhale2019partial}
Pranav Gokhale, Yongshan Ding, Thomas Propson, Christopher Winkler, Nelson
  Leung, Yunong Shi, David~I Schuster, Henry Hoffmann, and Frederic~T Chong.
\newblock Partial compilation of variational algorithms for noisy
  intermediate-scale quantum machines.
\newblock In \emph{Proceedings of the 52nd Annual IEEE/ACM International
  Symposium on Microarchitecture}, pages 266--278, 2019.
\newblock \doi{10.1145/3352460.3358313}.
\newblock URL \url{https://doi.org/10.1145/3352460.3358313}.

\bibitem[Yang et~al.(2017)Yang, Rahmani, Shabani, Neven, and Chamon]{Yang_2017}
Zhi-Cheng Yang, Armin Rahmani, Alireza Shabani, Hartmut Neven, and Claudio
  Chamon.
\newblock Optimizing variational quantum algorithms using {Pontryagin's}
  minimum principle.
\newblock \emph{Physical Review X}, 7\penalty0 (2), 2017.
\newblock \doi{10.1103/physrevx.7.021027}.
\newblock URL \url{https://doi.org/10.1103%2Fphysrevx.7.021027}.

\bibitem[Bapat and Jordan(2019)]{Bapat2019}
Aniruddha Bapat and Stephen Jordan.
\newblock Bang-bang control as a design principle for classical and quantum
  optimization algorithms.
\newblock \emph{Quantum Information \& Computation}, 19:\penalty0 424--446,
  2019.
\newblock \doi{10.26421/qic19.5-6-4}.

\bibitem[Mbeng et~al.(2019)Mbeng, Fazio, and Santoro]{Mbeng_2019}
Glen~Bigan Mbeng, Rosario Fazio, and Giuseppe Santoro.
\newblock Quantum annealing: a journey through digitalization, control, and
  hybrid quantum variational schemes, 2019.
\newblock URL \url{https://arxiv.org/abs/1906.08948}.

\bibitem[Lin et~al.(2019)Lin, Wang, Kolesov, and Kalabi\ifmmode~\acute{c}\else
  \'{c}\fi{}]{Lin_2019}
Chungwei Lin, Yebin Wang, Grigory Kolesov, and Uro\ifmmode
  \check{s}\else~\v{s}\fi{} Kalabi\ifmmode~\acute{c}\else \'{c}\fi{}.
\newblock Application of {Pontryagin's} minimum principle to {Grover's} quantum
  search problem.
\newblock \emph{Phys. Rev. A}, 100:\penalty0 022327, 2019.
\newblock \doi{10.1103/PhysRevA.100.022327}.
\newblock URL \url{https://link.aps.org/doi/10.1103/PhysRevA.100.022327}.

\bibitem[Brady et~al.(2021{\natexlab{a}})Brady, Baldwin, Bapat, Kharkov, and
  Gorshkov]{brady2020optimal}
Lucas~T Brady, Christopher~L Baldwin, Aniruddha Bapat, Yaroslav Kharkov, and
  Alexey~V Gorshkov.
\newblock Optimal protocols in quantum annealing and quantum approximate
  optimization algorithm problems.
\newblock \emph{Physical Review Letters}, 126:\penalty0 070505,
  2021{\natexlab{a}}.
\newblock \doi{10.1103/physrevlett.126.070505}.

\bibitem[Brady et~al.(2021{\natexlab{b}})Brady, Kocia, Bienias, Bapat, Kharkov,
  and Gorshkov]{Brady_2021b}
Lucas~T. Brady, Lucas Kocia, Przemyslaw Bienias, Aniruddha Bapat, Yaroslav
  Kharkov, and Alexey~V. Gorshkov.
\newblock Behavior of analog quantum algorithms, 2021{\natexlab{b}}.
\newblock URL \url{https://arxiv.org/abs/2107.01218}.

\bibitem[Larocca et~al.(2022)Larocca, Czarnik, Sharma, Muraleedharan, Coles,
  and Cerezo]{Larocca_2022}
Martin Larocca, Piotr Czarnik, Kunal Sharma, Gopikrishnan Muraleedharan,
  Patrick~J. Coles, and M.~Cerezo.
\newblock Diagnosing barren plateaus with tools from quantum optimal control.
\newblock \emph{Quantum}, 6:\penalty0 824, 2022.
\newblock \doi{10.22331/q-2022-09-29-824}.
\newblock URL \url{https://doi.org/10.22331%2Fq-2022-09-29-824}.

\bibitem[Venuti et~al.(2021)Venuti, D'Alessandro, and Lidar]{Venuti_2021}
Lorenzo~Campos Venuti, Domenico D'Alessandro, and Daniel~A. Lidar.
\newblock Optimal control for quantum optimization of closed and open systems.
\newblock \emph{Physical Review Applied}, 16\penalty0 (5), 2021.
\newblock \doi{10.1103/physrevapplied.16.054023}.
\newblock URL \url{https://doi.org/10.1103%2Fphysrevapplied.16.054023}.

\bibitem[Ge et~al.(2022)Ge, Wu, and Rabitz]{Ge_2022}
Xiaozhen Ge, Re-Bing Wu, and Herschel Rabitz.
\newblock The optimization landscape of hybrid quantum-classical algorithms:
  from quantum control to nisq applications, 2022.
\newblock URL \url{https://arxiv.org/abs/2201.07448}.

\bibitem[Fei et~al.(2023)Fei, Brady, Larson, Leyffer, and
  Shen]{Fei2023binarycontrolpulse}
Xinyu Fei, Lucas~T. Brady, Jeffrey Larson, Sven Leyffer, and Siqian Shen.
\newblock Binary {C}ontrol {P}ulse {O}ptimization for {Q}uantum {S}ystems.
\newblock \emph{{Quantum}}, 7:\penalty0 892, 2023.
\newblock ISSN 2521-327X.
\newblock \doi{10.22331/q-2023-01-04-892}.
\newblock URL \url{https://doi.org/10.22331/q-2023-01-04-892}.

\bibitem[Farhi et~al.(2014)Farhi, Goldstone, and Gutmann]{farhi2014quantum}
Edward Farhi, Jeffrey Goldstone, and Sam Gutmann.
\newblock A quantum approximate optimization algorithm.
\newblock \emph{arXiv preprint arXiv:1411.4028}, 2014.

\bibitem[Bharti et~al.(2022)Bharti, Cervera-Lierta, Kyaw, Haug, Alperin-Lea,
  Anand, Degroote, Heimonen, Kottmann, Menke, Mok, Sim, Kwek, and
  Aspuru-Guzik]{Bharti2022}
Kishor Bharti, Alba Cervera-Lierta, Thi~Ha Kyaw, Tobias Haug, Sumner
  Alperin-Lea, Abhinav Anand, Matthias Degroote, Hermanni Heimonen, Jakob~S.
  Kottmann, Tim Menke, Wai-Keong Mok, Sukin Sim, Leong-Chuan Kwek, and
  Al{\'{a}}n Aspuru-Guzik.
\newblock Noisy intermediate-scale quantum algorithms.
\newblock \emph{Reviews of Modern Physics}, 94\penalty0 (1), 2022.
\newblock \doi{10.1103/revmodphys.94.015004}.

\bibitem[Cerezo et~al.(2021)Cerezo, Arrasmith, Babbush, Benjamin, Endo, Fujii,
  McClean, Mitarai, Yuan, Cincio, and Coles]{Cerezo2021}
M.~Cerezo, Andrew Arrasmith, Ryan Babbush, Simon~C. Benjamin, Suguru Endo,
  Keisuke Fujii, Jarrod~R. McClean, Kosuke Mitarai, Xiao Yuan, Lukasz Cincio,
  and Patrick~J. Coles.
\newblock Variational quantum algorithms.
\newblock \emph{Nature Reviews Physics}, 3\penalty0 (9):\penalty0 625--644,
  2021.
\newblock \doi{10.1038/s42254-021-00348-9}.

\bibitem[Nemhauser and Wolsey(1999)]{wolsey1999integer}
George~L Nemhauser and Laurence~A Wolsey.
\newblock \emph{Integer and Combinatorial Optimization}.
\newblock John Wiley \& Sons, 1999.
\newblock \doi{10.1002/9781118627372}.

\bibitem[Sager et~al.(2012)Sager, Bock, and Diehl]{sager2012integer}
Sebastian Sager, Hans~Georg Bock, and Moritz Diehl.
\newblock The integer approximation error in mixed-integer optimal control.
\newblock \emph{Mathematical Programming}, 133\penalty0 (1):\penalty0 1--23,
  2012.
\newblock \doi{10.1007/s10107-010-0405-3}.

\bibitem[Logsdon and Biegler(1989)]{logsdon1989accurate}
Jeffery~S Logsdon and Lorenz~T Biegler.
\newblock Accurate solution of differential-algebraic optimization problems.
\newblock \emph{Industrial \& Engineering Chemistry Research}, 28\penalty0
  (11):\penalty0 1628--1639, 1989.
\newblock \doi{10.1021/ie00095a010}.

\bibitem[Khaneja et~al.(2005)Khaneja, Reiss, Kehlet, Schulte-Herbr{\"u}ggen,
  and Glaser]{khaneja2005optimal}
Navin Khaneja, Timo Reiss, Cindie Kehlet, Thomas Schulte-Herbr{\"u}ggen, and
  Steffen~J Glaser.
\newblock Optimal control of coupled spin dynamics: Design of {NMR} pulse
  sequences by gradient ascent algorithms.
\newblock \emph{Journal of Magnetic Resonance}, 172\penalty0 (2):\penalty0
  296--305, 2005.
\newblock \doi{10.1016/j.jmr.2004.11.004}.

\bibitem[Larocca and Wisniacki(2021)]{Larocca_2021}
Martín Larocca and Diego Wisniacki.
\newblock Krylov-subspace approach for the efficient control of quantum
  many-body dynamics.
\newblock \emph{Physical Review A}, 103\penalty0 (2), 2021.
\newblock ISSN 2469-9934.
\newblock \doi{10.1103/physreva.103.023107}.
\newblock URL \url{http://dx.doi.org/10.1103/PhysRevA.103.023107}.

\bibitem[Li et~al.(2009)Li, Ruths, and Stefanatos]{li2009pseudospectral}
Jr-Shin Li, Justin Ruths, and Dionisis Stefanatos.
\newblock A pseudospectral method for optimal control of open quantum systems.
\newblock \emph{The Journal of Chemical Physics}, 131\penalty0 (16):\penalty0
  164110, 2009.
\newblock \doi{10.1063/1.3253796}.

\bibitem[Bukov et~al.(2018)Bukov, Day, Sels, Weinberg, Polkovnikov, and
  Mehta]{bukov2018reinforcement}
Marin Bukov, Alexandre~GR Day, Dries Sels, Phillip Weinberg, Anatoli
  Polkovnikov, and Pankaj Mehta.
\newblock Reinforcement learning in different phases of quantum control.
\newblock \emph{Physical Review X}, 8\penalty0 (3):\penalty0 031086, 2018.
\newblock \doi{10.1103/physrevx.8.031086}.

\bibitem[Niu et~al.(2019)Niu, Boixo, Smelyanskiy, and Neven]{niu2019universal}
Murphy~Yuezhen Niu, Sergio Boixo, Vadim~N Smelyanskiy, and Hartmut Neven.
\newblock Universal quantum control through deep reinforcement learning.
\newblock \emph{npj Quantum Information}, 5\penalty0 (1):\penalty0 1--8, 2019.
\newblock \doi{10.1038/s41534-019-0141-3}.

\bibitem[Sivak et~al.(2022)Sivak, Eickbusch, Liu, Royer, Tsioutsios, and
  Devoret]{sivak2022model}
VV~Sivak, A~Eickbusch, H~Liu, B~Royer, I~Tsioutsios, and MH~Devoret.
\newblock Model-free quantum control with reinforcement learning.
\newblock \emph{Physical Review X}, 12\penalty0 (1):\penalty0 011059, 2022.
\newblock \doi{10.1103/physrevx.12.011059}.

\bibitem[Vogt and Petersson(2022)]{vogt2022binary}
Ryan~H. Vogt and N.~Anders Petersson.
\newblock Binary optimal control of single-flux-quantum pulse sequences.
\newblock \emph{SIAM Journal on Control and Optimization}, 60\penalty0
  (6):\penalty0 3217--3236, 2022.
\newblock \doi{10.1137/21M142808X}.

\bibitem[Nocedal and Wright(2006)]{nocedal2006numerical}
Jorge Nocedal and Stephen Wright.
\newblock \emph{Numerical Optimization}.
\newblock Springer Science \& Business Media, 2006.

\bibitem[Meier and Bryson~Jr(1990)]{meier1990efficient}
Elke-Barbara Meier and Arthur~E Bryson~Jr.
\newblock Efficient algorithm for time-optimal control of a two-link
  manipulator.
\newblock \emph{Journal of Guidance, Control, and Dynamics}, 13\penalty0
  (5):\penalty0 859--866, 1990.
\newblock \doi{10.2514/3.25412}.

\bibitem[Egerstedt et~al.(2006)Egerstedt, Wardi, and
  Axelsson]{egerstedt2006transition}
Magnus Egerstedt, Yorai Wardi, and Henrik Axelsson.
\newblock Transition-time optimization for switched-mode dynamical systems.
\newblock \emph{IEEE Transactions on Automatic Control}, 51\penalty0
  (1):\penalty0 110--115, 2006.
\newblock \doi{10.1109/tac.2005.861711}.

\bibitem[Johnson and Murphey(2011)]{johnson2011second}
Elliot~R Johnson and Todd~D Murphey.
\newblock Second-order switching time optimization for nonlinear time-varying
  dynamic systems.
\newblock \emph{IEEE Transactions on Automatic Control}, 56\penalty0
  (8):\penalty0 1953--1957, 2011.
\newblock \doi{10.1109/tac.2011.2150310}.

\bibitem[Fla{\ss}kamp et~al.(2013)Fla{\ss}kamp, Murphey, and
  Ober-Bl{\"o}baum]{flasskamp2013discretized}
Kathrin Fla{\ss}kamp, Todd Murphey, and Sina Ober-Bl{\"o}baum.
\newblock Discretized switching time optimization problems.
\newblock In \emph{2013 European Control Conference (ECC)}, pages 3179--3184.
  IEEE, 2013.
\newblock \doi{10.23919/ecc.2013.6669577}.

\bibitem[Stellato et~al.(2017)Stellato, Ober-Bl{\"o}baum, and
  Goulart]{stellato2017second}
Bartolomeo Stellato, Sina Ober-Bl{\"o}baum, and Paul~J Goulart.
\newblock Second-order switching time optimization for switched dynamical
  systems.
\newblock \emph{IEEE Transactions on Automatic Control}, 62\penalty0
  (10):\penalty0 5407--5414, 2017.
\newblock \doi{10.1109/tac.2017.2697681}.

\bibitem[Vossen(2010)]{vossen2010switching}
Gottfried Vossen.
\newblock Switching time optimization for bang-bang and singular controls.
\newblock \emph{Journal of Optimization Theory and Applications}, 144\penalty0
  (2):\penalty0 409--429, 2010.
\newblock \doi{10.1007/s10957-009-9594-4}.

\bibitem[Liang et~al.(2020)Liang, Li, and Leichenauer]{liang2020investigating}
Daniel Liang, Li~Li, and Stefan Leichenauer.
\newblock Investigating quantum approximate optimization algorithms under
  bang-bang protocols.
\newblock \emph{Physical Review Research}, 2\penalty0 (3):\penalty0 033402,
  2020.
\newblock \doi{10.1103/physrevresearch.2.033402}.

\bibitem[Bao et~al.(2018)Bao, Kleer, Wang, and Rahmani]{bao2018optimal}
Seraph Bao, Silken Kleer, Ruoyu Wang, and Armin Rahmani.
\newblock Optimal control of superconducting gmon qubits using {Pontryagin's}
  minimum principle: {Preparing} a maximally entangled state with singular
  bang-bang protocols.
\newblock \emph{Physical Review A}, 97\penalty0 (6):\penalty0 062343, 2018.
\newblock \doi{10.1103/physreva.97.062343}.

\bibitem[You and Leyffer(2011)]{you2011mixed}
Fengqi You and Sven Leyffer.
\newblock Mixed-integer dynamic optimization for oil-spill response planning
  with integration of a dynamic oil weathering model.
\newblock \emph{AIChE Journal}, 57\penalty0 (12):\penalty0 3555, 2011.
\newblock \doi{10.1002/aic.12536}.

\bibitem[Manns and Kirches(2020)]{manns2020multidimensional}
Paul Manns and Christian Kirches.
\newblock Multidimensional sum-up rounding for elliptic control systems.
\newblock \emph{SIAM Journal on Numerical Analysis}, 58\penalty0 (6):\penalty0
  3427--3447, 2020.
\newblock \doi{10.1137/19M1260682}.

\bibitem[Sager et~al.(2011)Sager, Jung, and Kirches]{sager2011combinatorial}
Sebastian Sager, Michael Jung, and Christian Kirches.
\newblock Combinatorial integral approximation.
\newblock \emph{Mathematical Methods of Operations Research}, 73\penalty0
  (3):\penalty0 363--380, 2011.
\newblock \doi{10.1007/s00186-011-0355-4}.

\bibitem[Horn and Johnson(1991)]{horn_johnson_1991}
Roger~A. Horn and Charles~R. Johnson.
\newblock \emph{Topics in Matrix Analysis}.
\newblock Cambridge University Press, 1991.
\newblock \doi{10.1017/CBO9780511840371}.

\bibitem[Virtanen et~al.(2020)Virtanen, Gommers, Oliphant, Haberland, Reddy,
  Cournapeau, Burovski, Peterson, Weckesser, Bright, {van der Walt}, Brett,
  Wilson, Millman, Mayorov, Nelson, Jones, Kern, Larson, Carey, Polat, Feng,
  Moore, {VanderPlas}, Laxalde, Perktold, Cimrman, Henriksen, Quintero, Harris,
  Archibald, Ribeiro, Pedregosa, {van Mulbregt}, and {SciPy 1.0
  Contributors}]{2020SciPy-NMeth}
Pauli Virtanen, Ralf Gommers, Travis~E. Oliphant, Matt Haberland, Tyler Reddy,
  David Cournapeau, Evgeni Burovski, Pearu Peterson, Warren Weckesser, Jonathan
  Bright, St{\'e}fan~J. {van der Walt}, Matthew Brett, Joshua Wilson, K.~Jarrod
  Millman, Nikolay Mayorov, Andrew R.~J. Nelson, Eric Jones, Robert Kern, Eric
  Larson, C~J Carey, {\.I}lhan Polat, Yu~Feng, Eric~W. Moore, Jake
  {VanderPlas}, Denis Laxalde, Josef Perktold, Robert Cimrman, Ian Henriksen,
  E.~A. Quintero, Charles~R. Harris, Anne~M. Archibald, Ant{\^o}nio~H. Ribeiro,
  Fabian Pedregosa, Paul {van Mulbregt}, and {SciPy 1.0 Contributors}.
\newblock {{SciPy} 1.0: Fundamental Algorithms for Scientific Computing in
  Python}.
\newblock \emph{Nature Methods}, 17:\penalty0 261--272, 2020.
\newblock \doi{10.1038/s41592-019-0686-2}.

\bibitem[Hall(2013)]{hall2013lie}
Brian~C Hall.
\newblock Lie groups, lie algebras, and representations.
\newblock In \emph{Quantum Theory for Mathematicians}, pages 333--366.
  Springer, 2013.

\bibitem[Rebentrost and Wilhelm(2009)]{rebentrost2009optimal}
Patrick Rebentrost and Frank~K Wilhelm.
\newblock Optimal control of a leaking qubit.
\newblock \emph{Physical Review B}, 79\penalty0 (6):\penalty0 060507, 2009.
\newblock \doi{10.1103/physrevb.79.060507}.

\bibitem[Chen et~al.(2014)Chen, Neill, Roushan, Leung, Fang, Barends, Kelly,
  Campbell, Chen, Chiaro, et~al.]{chen2014qubit}
Yu~Chen, C~Neill, P~Roushan, N~Leung, M~Fang, R~Barends, J~Kelly, B~Campbell,
  Z~Chen, B~Chiaro, et~al.
\newblock Qubit architecture with high coherence and fast tunable coupling.
\newblock \emph{Physical Review Letters}, 113\penalty0 (22):\penalty0 220502,
  2014.
\newblock \doi{10.1103/PhysRevLett.113.220502}.

\bibitem[Bartlett and Musia{\l}(2007)]{bartlett2007coupled}
Rodney~J Bartlett and Monika Musia{\l}.
\newblock Coupled-cluster theory in quantum chemistry.
\newblock \emph{Reviews of Modern Physics}, 79\penalty0 (1):\penalty0 291,
  2007.
\newblock \doi{10.1103/RevModPhys.79.291}.

\bibitem[Romero et~al.(2018)Romero, Babbush, McClean, Hempel, Love, and
  Aspuru-Guzik]{romero2018strategies}
Jonathan Romero, Ryan Babbush, Jarrod~R McClean, Cornelius Hempel, Peter~J
  Love, and Al{\'a}n Aspuru-Guzik.
\newblock Strategies for quantum computing molecular energies using the unitary
  coupled cluster ansatz.
\newblock \emph{Quantum Science and Technology}, 4\penalty0 (1):\penalty0
  014008, 2018.
\newblock \doi{10.1088/2058-9565/aad3e4}.

\bibitem[Aleksandrowicz et~al.(2019)Aleksandrowicz, Alexander, Barkoutsos,
  Bello, Ben-Haim, Bucher, Cabrera-Hern{\'a}ndez, Carballo-Franquis, Chen,
  Chen, et~al.]{aleksandrowicz2019qiskit}
Gadi Aleksandrowicz, Thomas Alexander, Panagiotis Barkoutsos, Luciano Bello,
  Yael Ben-Haim, David Bucher, Francisco~Jose Cabrera-Hern{\'a}ndez, Jorge
  Carballo-Franquis, Adrian Chen, Chun-Fu Chen, et~al.
\newblock {Qiskit}: {An} open-source framework for quantum computing.
\newblock \emph{Accessed on: Mar}, 16, 2019.
\newblock \doi{10.5281/ZENODO.2562110}.

\bibitem[Fei(2023)]{codeswitching2022}
Xinyu Fei.
\newblock {Code and Results: Switching Time Optimization for Binary Control
  Pulse in Quantum Systems}.
\newblock \url{https://github.com/xinyufei/Switching-time-optimization}, 2023.

\end{thebibliography}

\appendix
\section{Proofs of All Theorems in Section~\ref{sec:alg-extraction}}
\label{app:proof}
\begin{proof}[Proof of Lemma~\ref{lemma:singular-4}]
Let $A=\sum_{i=1}^m w_i\sigma_i(A)v_i^T$ be the singular value decomposition of $A$ where $w_1,\ldots,w_m$ and $v_1,\ldots,v_m$ are orthonormal vectors. Then the norm of the trace 
\begin{align}
    \left|\mathbf{tr}\left\{UA\right\}\right| & = \left|\mathbf{tr}\left\{ \sum_{i=1}^m Uw_i\sigma_i(A) v_i^\dagger\right\}\right|
    \leq \sum_{i=1}^m \left|\mathbf{tr}\left\{Uw_i\sigma_i(A) v_i^\dagger\right\}\right|\nonumber\\
    & \leq \sum_{i=1}^m \|Uw_i\|_2\sigma_i(A)\|v_i\|_2\leq \sum_{i=1}^m \sigma_i(A)\leq m\sigma_1(A).
\end{align}
The first and second inequalities follow from the norm inequality. The third inequality holds because $Uw_1,\ldots,Uw_m$ and $v_1,\ldots,v_m$ are orthonormal vectors. The last inequality comes from the definition of singular values. 
\end{proof}

\begin{proof}[Proof of Theorem~\ref{theo:str}]
At time step $k$, for each examined control $u^{(j)}$ with only controller $j$ active, we define the change of objective value as $R_{jk}=\Bar{F}(u^{(j)}) - F_\textrm{cur}$.
We prove the theorem by showing that at each time step, there exists a controller such that the change of objective value $R_{jk}\leq 2C_1e^{C_2\Delta t} \Delta t^2$. 
Our proof mainly consists of two parts. Firstly we prove that the sum of all the terms in each $R_{jk}$ corresponding to $\Delta t^l,\ l\geq 2$ is upper bounded by $2C_1e^{C_2\Delta t} \Delta t^2$. 
Then we prove that there exists a controller $\hat{j}$ such that the sum of the constant terms and terms corresponding to $\Delta t$ in $R_{\hat{j}k}$ is upper bounded by $0$. 
{Before we present our detailed proof, we define a constant parameter
\begin{align}
    \sigma_{\textrm{max}} = \max_{j=1,\ldots,N} \sigma_1(H^{(j)}) + \sigma_1(H^{(0)}),
\end{align}
where $\sigma_1(\cdot )$ is the maximum singular value of a given constant Hamiltonian controller. 
}
We discuss the change of objective value for the aforementioned two specific objective function formulations{~\eqref{eq:obj-energy}--\eqref{eq:obj-fid}}. 
We start from the infidelity function because the formulation is more concise to prove. 
\paragraph{Infidelity function}
For each time step $k$, the change of objective value $R_{jk}$ is computed as 
\begin{align}
    R_{jk} = & \frac{1}{2^q}\left|\mathbf{tr}\left\{X_{\textrm{targ}}^\dagger \prod_{l=k+1}^{T} U_l e^{-iH^{(0)} \Delta t-i\sum_{j'=1}^N u^\textrm{con}_{j'k}H^{(j')}\Delta t} \prod_{l=1}^{k-1} U_l X_\textrm{init}\right\}\right|\nonumber\\
    & - \frac{1}{2^q}\left|\mathbf{tr}\left\{X_{\textrm{targ}}^\dagger \prod_{l=k+1}^{T} U_l e^{-iH^{(0)}\Delta t -iH^{(j)}\Delta t} \prod_{l=1}^{k-1} U_l X_\textrm{init}\right\}\right|,
\end{align}
where 
\begin{align}
    U_l = \begin{cases}
    e^{-iH^{(0)}\Delta t -i\sum_{j'=1}^N u^\textrm{bin}_{j'l} H^{(j)} \Delta t} & 1\leq l\leq k - 1\\
    e^{-iH^{(0)}\Delta t -i\sum_{j'=1}^N u^\textrm{con}_{j'l} H^{(j)} \Delta t} & k + 1\leq l\leq T
    \end{cases}.
\end{align}
For simplicity, we denote 
\begin{align}
    \mathbf{U}^{(k)} = \prod_{l=1}^{k-1} U_l X_\textrm{init}X_{\textrm{targ}}^\dagger \prod_{l=k+1}^{T} U_l, 
\end{align}
which is still a unitary matrix because it is the product of a series of unitary matrices. 
Following the definition of trace, we know that the trace of a product of matrices is invariant under cyclic permutations. Hence, 
\begin{align}
    R_{jk} = \frac{1}{2^q}\left|\mathbf{tr}\left\{\mathbf{U}^{(k)} e^{-iH^{(0)}\Delta t -i\sum_{j'=1}^N u^\textrm{con}_{j'k}H^{(j')}\Delta t}\right\}\right| - \frac{1}{2^q}\left|\mathbf{tr}\left\{\mathbf{U}^{(k)} e^{-iH^{(0)}\Delta t -iH^{(j)}\Delta t}\right\}\right| .
\end{align}
Without loss of generality, we consider a general matrix $A\in \mathbb{C}^{m\times m}$, then the exponential matrix $\displaystyle e^{A\Delta t} = \sum_{l=0}^\infty \frac{1}{l!} A^l\Delta t^l$. Therefore we have
\begin{align}
    \label{eq:trace-any-matrix}
    \mathbf{tr}\left\{\mathbf{U}^{(k)} e^{A\Delta t}\right\} = \mathbf{tr}\left\{\mathbf{U}^{(k)} (I + A\Delta t)\right\} + \sum_{l=2}^\infty \frac{1}{l!} \mathbf{tr}\left\{\mathbf{U}^{(k)} A^l\right\}\Delta t^l.
\end{align}
The norm of the second term of the trace in~\eqref{eq:trace-any-matrix} is upper bounded as
\begin{align}
    \left|\sum_{l=2}^\infty \frac{1}{l!} \mathbf{tr}\left\{\mathbf{U}^{(k)} A^l\right\}\Delta t^l\right|
     & \leq \sum_{l=2}^\infty \frac{m}{l!} \sigma_1(A^l)\Delta t^l\leq
     \sum_{l=2}^\infty \frac{m}{l!} (\sigma_1(A)\Delta t)^l 
     \leq m\sigma_1^2(A)e^{\sigma_1(A)\Delta t}\Delta t^2.
\end{align}
The first inequality follows from Lemma~\ref{lemma:singular-4} and the second inequality follows from Lemma~\ref{lemma:singular-2}. 
The last inequality directly comes from the Taylor expansion. 
{We define constants $C_1(A),\ C_2(A)$ corresponding to matrix $A$ as
\begin{align}
    & C_1(A) = m\sigma^2_1(A),\quad C_2(A) = \sigma_1(A),
\end{align}
then the second term is upper bounded by $C_1(A)e^{C_2(A)\Delta t}\Delta t^2$.}
Therefore we have the following inequality for the norm of trace~\eqref{eq:trace-any-matrix}: 
\begin{align}
    \label{eq:trace-simplify}
    \left|\mathbf{tr}\left\{\mathbf{U}^{(k)} (I + A\Delta t)\right\} \right| -  
    C_1(A)e^{C_2(A)\Delta t}\Delta t^2 & \leq \left|\mathbf{tr}\left\{\mathbf{U}^{(k)} e^{A\Delta t}\right\}\right| \nonumber\\
    & \leq \left|\mathbf{tr}\left\{\mathbf{U}^{(k)} (I + A\Delta t)\right\} \right|+
    C_1(A)e^{C_2(A)\Delta t}\Delta t^2, 
\end{align}
In our examples, all the Hamiltonian control matrices have a formulation $A=-i\left(^{(0)} + \sum_{j=1}^N u_j H^{(j)}\right)$ with given controls $\sum_{j=1}^N u_j=1$ and given constant controllers $H^{(0)},\ H^{(j)}\in \mathbb{C}^{2^q\times 2^q}$. Then the maximum singular value of any control matrix
\begin{align}
    \label{eq:singular-val}
    \sigma_1(A) \leq \sigma_1(H^{(0)}) + \sum_{j=1}^N u_j \sigma_1(H^{(j)})\leq \sigma_1(H^{(0)}) + \max_{j=1,\ldots,N} \sigma_1(H^{(j)}) = \sigma_\textrm{max}.
\end{align}
We define constants $C_1,\ C_2$ as  
\begin{align}
    \label{eq:constant-ub}
    \quad 
    C_1 = \sigma_\textrm{max}^2,\quad 
    C_2= \sigma_\textrm{max}.
\end{align}
For any {time-dependent Hamiltonian} matrix $A$, the constants $C_1(A)\leq 2^qC_1,\ C_2(A)\leq C_2$.
Substituting $\displaystyle A=-iH^{(0)}-i\sum_{j'=1}^N u^\textrm{con}_{j'k}H^{(j')}$ and $\displaystyle A=-iH^{(0)}-iH^{(j)}$ into the inequality~\eqref{eq:trace-simplify} and combining~\eqref{eq:constant-ub}, the change of objective value satisfies
\begin{align}
    R_{jk} \leq &
    \frac{1}{2^q}\left|\mathbf{tr}\left\{\mathbf{U}^{(k)} \left(I-iH^{(0)}\Delta t -i\sum_{j'=1}^N u^\textrm{con}_{j'k}H^{(j')}\Delta t\right)\right\}\right| \nonumber\\
    & - \frac{1}{2^q}\left|\mathbf{tr}\left\{\mathbf{U}^{(k)} \left(I-iH^{(0)}\Delta t -iH^{(j)}\Delta t\right)\right\}\right|
    + 2C_1e^{C_2\Delta t} \Delta t^2.
\end{align}
We prove that there exists a controller $j$ such that the value of the first two terms is no more than $0$ by contradiction. Assume that for any $j$, 
\begin{align}
    \label{eq:assumption}
    \left|\mathbf{tr}\left\{\mathbf{U}^{(k)} \left(I-iH^{(0)}\Delta t -i\sum_{j'=1}^N u^\textrm{con}_{j'k}H^{(j')}\Delta t\right)\right\}\right| - \left|\mathbf{tr}\left\{\mathbf{U}^{(k)} \left(I-iH^{(0)}\Delta t -iH^{(j)}\Delta t\right)\right\}\right| > 0.
\end{align}
Then we have 
\begin{align}
    \left|\mathbf{tr}\left\{\mathbf{U}^{(k)} \left(I-iH^{(0)}\Delta t -i\sum_{j'=1}^N u^\textrm{con}_{j'k}H^{(j')}\Delta t\right)\right\}\right|& \leq \sum_{j'=1}^N u^\textrm{con}_{j'k}\left|\mathbf{tr}\left\{\mathbf{U}^{(k)} \left(I-iH^{(0)}\Delta t -i H^{(j')}\Delta t\right)\right\}\right|\nonumber\\
    & < \left|\mathbf{tr}\left\{\mathbf{U}^{(k)} \left(I-iH^{(0)}\Delta t -i\sum_{j'=1}^N u^\textrm{con}_{j'k}H^{(j')}\Delta t\right)\right\}\right|.
\end{align}
The first inequality follows from $\sum_{j'=1}^N u^\textrm{con}_{j'k}=1$ and $0\leq u^\textrm{con}_{j'k}\leq 1$. The second inequality follows from~\eqref{eq:assumption}. It obviously leads to a contradiction. Therefore we prove that there exists at least one controller $j$ such that 
\begin{align}
    R_{jk}\leq 2C_1e^{C_2\Delta t} \Delta t^2.
\end{align}
In the algorithm, at each time step, we choose the $\hat{j}$ with minimum change of objective value. Taking the summation over all time steps, we have
\begin{align}
    \bar{F}(u^\textrm{bin})-\bar{F}(u^\textrm{con})=\sum_{k=1}^T \min_{j=1,\ldots,N} R_{jk} \leq 2C_1e^{C_2\Delta t} \Delta t.
\end{align}

\paragraph{Energy function} We have a similar proof for the energy function. For each time step $k$, the change of objective value 
$R_{jk}$ is computed as
\begin{align}
    R_{jk} = & \langle \psi_0| \left(\mathbf{U}^{(2k)} e^{-iH^{(0)}\Delta t -i\sum_{j'=1}^N u^\textrm{con}_{j'k} H^{(j')}\Delta t} \mathbf{U}^{(1k)}\right)^\dagger \tilde{H} \mathbf{U}^{(2k)} e^{-iH^{(0)}\Delta t -i\sum_{j'=1}^N u^\textrm{con}_{j'k} H^{(j')}\Delta t} \mathbf{U}^{(1k)}|\psi_0 \rangle / E_{\textrm{min}}\nonumber\\
    & - \langle \psi_0| \left(\mathbf{U}^{(2k)} e^{-iH^{(0)}\Delta t -iH^{(j)}\Delta t} \mathbf{U}^{(1k)}\right)^\dagger \tilde{H} \mathbf{U}^{(2k)} e^{-iH^{(0)}\Delta t -i H^{(j)}\Delta t} \mathbf{U}^{(1k)} |\psi_0 \rangle / E_{\textrm{min}}, 
\end{align}
where 
\begin{align}
    \mathbf{U}^{(1k)} = \prod_{l=1}^{k-1} e^{-iH^{(0)}\Delta t -i\sum_{j'=1}^N u^\textrm{bin}_{j'l} H^{(j')} \Delta t}X_\textrm{init},\quad 
    \mathbf{U}^{(2k)} = \prod_{l=k+1}^T e^{-iH^{(0)}\Delta t -i\sum_{j'=1}^N u^\textrm{con}_{j'l} H^{(j')} \Delta t}.
\end{align}
It is obvious that $\mathbf{U}^{(1k)}, \mathbf{U}^{(2k)}$ are both unitary matrices. We consider a general Hamiltonian matrix $A\in \mathbb{C}^{m\times m}$, with the matrix exponential expansion, we have
\begin{subequations}
\label{eq:obj-exp-energy}
\begin{align}
    & \langle \psi_0| \left(\mathbf{U}^{(2k)} e^{A\Delta t} \mathbf{U}^{(1k)}\right)^\dagger \tilde{H} \mathbf{U}^{(2k)} e^{A\Delta t} \mathbf{U}^{(1k)}|\psi_0 \rangle \nonumber \\
    =& \langle\psi_0| \left(\mathbf{U}^{(2k)} \mathbf{U}^{(1k)}\right)^\dagger \tilde{H} \mathbf{U}^{(2k)}  \mathbf{U}^{(1k)}|\psi_0 \rangle \nonumber \\
    & + \langle \psi_0| \left(\mathbf{U}^{(2k)} A \mathbf{U}^{(1k)}\right)^\dagger \tilde{H} \mathbf{U}^{(2k)}  \mathbf{U}^{(1k)}|\psi_0 \rangle \Delta t + \langle \psi_0| \left(\mathbf{U}^{(2k)} \mathbf{U}^{(1k)}\right)^\dagger \tilde{H} \mathbf{U}^{(2k)} A \mathbf{U}^{(1k)}|\psi_0 \rangle \Delta t\nonumber\\
    \label{eq:square-1}
    & + \langle \psi_0| \left(\mathbf{U}^{(2k)} A \mathbf{U}^{(1k)}\right)^\dagger \tilde{H} \mathbf{U}^{(2k)} A  \mathbf{U}^{(1k)}|\psi_0 \rangle \Delta t^2\\
    \label{eq:square-2}
    & + \langle \psi_0| \left(\mathbf{U}^{(2k)} \sum_{l=2}^\infty \frac{1}{l!}A^l\Delta t^l \mathbf{U}^{(1k)}\right)^\dagger \tilde{H} \mathbf{U}^{(2k)} e^{A\Delta t} \mathbf{U}^{(1k)}|\psi_0 \rangle \\
    \label{eq:square-3}
    & + \langle \psi_0| \left(\mathbf{U}^{(2k)} \mathbf{U}^{(1k)}\right)^\dagger \tilde{H} \mathbf{U}^{(2k)} \sum_{l=2}^\infty\frac{1}{l!} A^l\Delta t^l \mathbf{U}^{(1k)}|\psi_0 \rangle \\
    \label{eq:square-4}
    & + \langle \psi_0| \left(\mathbf{U}^{(2k)} A \Delta t \mathbf{U}^{(1k)}\right)^\dagger \tilde{H} \mathbf{U}^{(2k)} \sum_{l=2}^\infty \frac{1}{l!}A^l\Delta t^l \mathbf{U}^{(1k)}|\psi_0 \rangle.
\end{align}
\end{subequations}
{We first prove that there exists constants $C_1(A),\ C_2(A)$ such that the summation of terms~\eqref{eq:square-1}--\eqref{eq:square-4} is bounded by $C_1(A)e^{C_2(A)\Delta t}\Delta t^2$.
}
Because $\mathbf{U}^{(1k)}$ and $\mathbf{U}^{(2k)}$ are unitary matrices and $\||\psi_0\rangle\|_2=1$, following from Lemma~\ref{lemma:singular-1} and Lemma~\ref{lemma:singular-3}, the norm value of term~\eqref{eq:square-1} is no more than 
\begin{align}
    \sigma_1(A^\dagger \tilde{H}A)\Delta t^2\leq \sigma_1^2(A) \sigma_1(\tilde{H})\Delta t^2\leq \sigma_1^2(A)  \sigma_1(\tilde{H})e^{\sigma_1(A)\Delta t}\Delta t^2.
\end{align}
The first inequality follows from Lemma~\ref{lemma:singular-2}.
The second inequality holds because $\Delta t$ and $\sigma_1(A)$ are non-negative.
Recall that $e^{A\Delta t}$ is a unitary matrix because $A$ is a Hamiltonian matrix, hence by Lemma~\ref{lemma:singular-1} and Lemma~\ref{lemma:singular-3}, the norm value of term~\eqref{eq:square-2} is no more than 
\begin{align}
    \sum_{l=2}^\infty \frac{1}{l!} \sigma_1(A^l\tilde{H})\Delta t^l
    \leq \sigma_1(\tilde{H}) \sum_{l=2}^\infty \frac{1}{l!} \sigma_1^l(A) \Delta t^l
    \leq \sigma_1(\tilde{H}) e^{\sigma_1(A)\Delta t} \Delta t^2.
\end{align}
The first inequality follows from Lemma~\ref{lemma:singular-2} and the second inequality follows from Taylor expansion. 
Similarly, the norm value of term~\eqref{eq:square-3} is bounded by 
\begin{align}
    \sigma_1(\tilde{H}) e^{\sigma_1(A)\Delta t} \Delta t^2, 
\end{align}
and the norm value of term~\eqref{eq:square-4} is bounded by 
\begin{align}
    \sigma_1(A^\dagger \tilde{H}) e^{\sigma_1(A)\Delta t} \Delta t^3\leq t_f \sigma_1(A) \sigma_1(\tilde{H}) e^{\sigma_1(A)\Delta t} \Delta t^2,
\end{align}
where the inequality holds because of the fact that $\Delta t\leq t_f$ and Lemma~\ref{lemma:singular-2}. 
We define the constants $C_1(A),\ C_2(A)$ as
\begin{subequations}
    \begin{align}
    & C_1(A) = \sigma_1(\tilde{H}) \left(\sigma_1^2(A) + 2 + t_f\sigma_1(A)\right)\\
    & C_2(A) = \sigma_1(A).
    \end{align}
\end{subequations}
The values of {the summation of terms}~\eqref{eq:square-1}--\eqref{eq:square-4} {is} upper bounded by the summation of their norms, which is bounded by $C_1(A)e^{C_2(A)\Delta t}\Delta t^2$. 
We define the constants $C_1,\ C_2$ as 
\begin{subequations}
\begin{align}
    & C_1 = \sigma_1(\tilde{H}) \left(\sigma^2_\textrm{max} + 2 + t_f\sigma_\textrm{max}\right) / |E_\textrm{min}|\\
    & C_2 = \sigma_\textrm{max}.
\end{align}
\end{subequations}
From~\eqref{eq:singular-val}, we know that for any {time-dependent} Hamiltonian controller $A$, we have $C_1(A)\leq |E_\textrm{min}| C_1,\ C_2(A)\leq C_2$. Substituting $\displaystyle A=-iH^{(0)}-i\sum_{j'=1}^N u^\textrm{con}_{j'k}H^{(j')}$ and $\displaystyle A=-iH^{(0)}-iH^{(j)}$ into the expanded trace equation~\eqref{eq:obj-exp-energy} and combining with the constants $C_1,\ C_2$, the change of objective value satisfies
\begin{align}
    R_{jk}\leq & \frac{1}{E_{\textrm{min}}}\langle \psi_0| \left(\mathbf{U}^{(2k)} i\left(H^{(j)} - \sum_{j'=1}^N u^\textrm{con}_{j'k}H^{(j')}\right) \mathbf{U}^{(1k)}\right)^\dagger \tilde{H} \mathbf{U}^{(2k)} \mathbf{U}^{(1k)}|\psi_0 \rangle \Delta t \nonumber\\
    & + \frac{1}{E_{\textrm{min}}} \langle \psi_0| \left(\mathbf{U}^{(2k)} \mathbf{U}^{(1k)}\right)^\dagger  \tilde{H} \mathbf{U}^{(2k)} i \left(H^{(j)} - \sum_{j'=1}^N u^\textrm{con}_{j'k}H^{(j')}\right) \mathbf{U}^{(1k)}|\psi_0 \rangle \Delta t \nonumber\\
    & + 2C_1e^{C_2\Delta t}\Delta t^2. 
\end{align}
We prove that there exists a controller $j$ such that the value of terms of $\Delta t$ is upper bounded by zero. Assume that there does not exist such a controller, then by taking summation weighted by $u^\textrm{con}_{jk}$ over all the controllers, we have
\begin{align}
\label{eq:ineq-energy}
    0 < & \frac{1}{E_{\textrm{min}}}\langle \psi_0| \left(\mathbf{U}^{(2k)} i\sum_{j=1}^N u^\textrm{con}_{jk}\left(H^{(j)} - \sum_{j'=1}^N u^\textrm{con}_{j'k}H^{(j')}\right) \mathbf{U}^{(1k)}\right)^\dagger \tilde{H} \mathbf{U}^{(2k)} \mathbf{U}^{(1k)}|\psi_0 \rangle \Delta t \nonumber\\
    & + \frac{1}{E_{\textrm{min}}} \langle \psi_0| \left(\mathbf{U}^{(2k)} \mathbf{U}^{(1k)}\right)^\dagger  \tilde{H} \mathbf{U}^{(2k)} i \sum_{j=1}^N u^\textrm{con}_{jk} \left(H^{(j)} - \sum_{j'=1}^N u^\textrm{con}_{j'k}H^{(j')}\right) \mathbf{U}^{(1k)}|\psi_0 \rangle \Delta t. 
\end{align}
Because $\sum_{j=1}^N u^\textrm{con}_{jk}=1$, we have 
\begin{align}
    \sum_{j=1}^N u^\textrm{con}_{jk} \left(H^{(j)} - \sum_{j'=1}^N u^\textrm{con}_{j'k}H^{(j')}\right)
     = \sum_{j=1}^N u^\textrm{con}_{jk} H^{(j)} - \sum_{j'=1}^N u^\textrm{con}_{j'k}H^{(j')} = 0.
\end{align}
Therefore the right-hand side of the inequality~\eqref{eq:ineq-energy} is $0$, which leads to a contradiction. 
At each time step $k$, by our update rule in the algorithm, the change of objective value 
\begin{align}
    R_{\hat{j}k}=\min_{j=1,\ldots,N} R_{jk} \leq 2C_1e^{C_2\Delta t}\Delta t^2.
\end{align}
Summing over all time steps, we have
\begin{align}
    \bar{F}(u^\textrm{bin})-\bar{F}(u^\textrm{con})=\sum_{k=1}^T \min_{j=1,\ldots,N} R_{jk} \leq 2C_1e^{C_2\Delta t}\Delta t.
\end{align}
\end{proof}

\begin{proof}[Proof of Theorem~\ref{theo:str-no-sos1}]
We first define the constants $C_1,\ C_2$. 
Similar to~\eqref{eq:singular-val}, for any time-dependent Hamiltonian matrix $A = -i\left(H^{(0)}+ \sum_{j=1}^N u_jH^{(j)}\right)$, 
we now have $\sum_{j=1}^N u_j\leq N$ because $u_j\in [0,1],\ j=1,\ldots,N$. Hence the maximum singular value
\begin{align}
\label{eq:hamil-no-sos1}
    \sigma_1(A)\leq \sigma_1(H^{(0)}) + \sum_{j=1}^N u_j \sigma_1(H^{(j)})\leq N \sigma_\textrm{max}.
\end{align}
For energy objective function~\eqref{eq:obj-energy}, we define constants $C_1,\ C_2$ are specified as
\begin{subequations}
    \begin{align}
    & C_1 = \sigma_1(\tilde{H}) \left(N^2\sigma^2_\textrm{max} + 2 + t_fN\sigma_\textrm{max}\right) / |E_\textrm{min}|\\
    & C_2 = N\sigma_\textrm{max}.
\end{align}
For infidelity objective function~\eqref{eq:obj-fid}, $C_1,\  C_2$ are specified as
\begin{align}
    C_1 = N^2\sigma^2_\textrm{max},\quad 
    C_2= N\sigma_\textrm{max}.
\end{align}
\end{subequations}
With the same proof in Theorem~\ref{theo:str}, we show that the summation of terms with order $O(\Delta t^n),\ n\geq 2$ in $R_{jk}$ 
is upper bounded by $C_1e^{C_2\Delta t}\Delta t^2$.
We modify the proof of Theorem~\ref{theo:str} to prove that the terms with respect to $O(\Delta t)$ in $R_{jk}$ is upper bounded by $C_0\epsilon^c(\Delta t)$ where $C_0$ is a constant parameter. 
For simplicity of the proof, we define $\epsilon_k$ as 
\begin{align}
    \epsilon_k = \left|\sum_{j=1}^N u^\textrm{con}_{jk} - 1\right|,\ k=1,\ldots,T.
\end{align}
We discuss the proof specifically for two objective functions following the sequence in the proof of Theorem~\ref{theo:str}. 

\paragraph{Infidelity function}
We define the constant $C_0=\sigma_\textrm{max}$ and prove the conclusion by contradiction. Assume that for any controller $j$, it holds that
\begin{align}
    &\frac{1}{2^q}\left|\mathbf{tr}\left\{\mathbf{U}^{(k)} \left(I-iH^{(0)}\Delta t -i\sum_{j'=1}^N u^\textrm{con}_{j'k}H^{(j')}\Delta t\right)\right\}\right|\nonumber\\
    \label{eq:assumption-i-no-sos1}
    -& \frac{1}{2^q}\left|\mathbf{tr}\left\{\mathbf{U}^{(k)} \left(I-iH^{(0)}\Delta t-iH^{(j)}\Delta t\right)\right\}\right| > C_0\epsilon_k\Delta t.
\end{align}
Now we have 
\begin{subequations}
\begin{align}
    & 2^q C_0\epsilon_k\Delta t =  2^q \sum_{j=1}^N \frac{u^\textrm{con}_{jk}}{\sum_{j'=1}^N u^\textrm{con}_{j'k}} C_0\epsilon_k \Delta t\\
    \label{ieq:1}
    < & \left|\mathbf{tr} \left\{\mathbf{U}^{(k)} \left(I-iH^{(0)}\Delta t-i\sum_{j=1}^N u^\textrm{con}_{jk}H^{(j)}\Delta t\right)\right\}\right|\nonumber\\
    & - \sum_{j=1}^N \frac{u^\textrm{con}_{jk}}{\sum_{j'=1}^N u^\textrm{con}_{j'k}} \left|\mathbf{tr}\left\{\mathbf{U}^{(k)} \left(I-i H^{(0)} \Delta t -i H^{(j)}\Delta t\right)\right\}\right|\\
    \label{ieq:2}
    \leq & \left|\mathbf{tr}\left\{\mathbf{U}^{(k)} \left(-i\sum_{j=1}^N u^\textrm{con}_{jk} H^{(j)}\Delta t + i\sum_{j=1}^N \frac{u^\textrm{con}_{jk}}{\sum_{j'=1}^N u^\textrm{con}_{j'k}} H^{(j)}\Delta t\right) \right\}\right|\\
    \label{ieq:3}
    \leq & \frac{\left|\sum_{j'=1}^N {u^\textrm{con}_{j'k}} - 1\right|}{\sum_{j'=1}^N {u^\textrm{con}_{j'k}}}\left|\mathbf{tr}\left\{\mathbf{U}^{(k)}\left(-i\sum_{j=1}^N u^\textrm{con}_{jk}H^{(j)}\Delta t\right)\right\}\right|\\
    \leq & \frac{\epsilon_k}{\sum_{j'=1}^N u^\textrm{con}_{j'k}} 2^q \sum_{j=1}^N u^\textrm{con}_{jk} \sigma_\textrm{max}\Delta t
    = 2^q \sigma_\textrm{max} \epsilon_k \Delta t = 2^q C_0\epsilon_k\Delta .
    \end{align}
\end{subequations}
The inequality~\eqref{ieq:1} comes from the weighted summation of~\eqref{eq:assumption-i-no-sos1} over all the controllers with weight $u_{jk}/\sum_{j'=1}^N u_{j'k}$ for controller $j=1,\ldots,N$. The inequalities~\eqref{ieq:2}--\eqref{ieq:3} follow from the norm inequality and the fact that the summation of all the weights is one. The last inequality follows from the definition of $\epsilon_k$ and Lemma~\ref{lemma:singular-4}. By substituting $C_0$, the inequalities lead to a contradiction, which means that at each step, the minimum change of objective value 
\begin{align}
    \min_{j=1,\ldots,N} R_{jk}\leq 2C_1e^{C_2\Delta t}\Delta t^2 + C_0\epsilon_k\Delta t. 
\end{align}
Taking summation over all the time steps, we have 
\begin{align}
    \bar{F}(u^\textrm{bin}) - \bar{F}(u^\textrm{con}) = \sum_{k=1}^T \min_{j=1,\ldots,N} R_{jk}\leq 2C_1e^{C_2 \Delta t}\Delta t + C_0\epsilon^c(\Delta t).
\end{align}

\paragraph{Energy function}
We define $C_0 = 2  \sigma_1(\tilde{H})\sigma_\textrm{max}/|E_{\textrm{min}}|$ and prove the statement by contradiction. Assume that for any controller $j$, it holds that 
\begin{align}
\label{eq:assumption-e-no-sos1}
    & \frac{1}{E_{\textrm{min}}}\langle \psi_0| \left(\mathbf{U}^{(2k)} i\left(H^{(j)} - \sum_{j'=1}^N u^\textrm{con}_{j'k}H^{(j')}\right) \mathbf{U}^{(1k)}\right)^\dagger \tilde{H} \mathbf{U}^{(2k)} \mathbf{U}^{(1k)}|\psi_0 \rangle \Delta t \nonumber\\
    & + \frac{1}{E_{\textrm{min}}} \langle \psi_0| \left(\mathbf{U}^{(2k)} \mathbf{U}^{(1k)}\right)^\dagger  \tilde{H} \mathbf{U}^{(2k)} i \left(H^{(j)} - \sum_{j'=1}^N u^\textrm{con}_{j'k}H^{(j')}\right) \mathbf{U}^{(1k)}|\psi_0 \rangle \Delta t > C_0 \epsilon_k \Delta t.
\end{align}
Then we have 
\begin{subequations}
\begin{align}
    & C_0 \epsilon_k \Delta t = \sum_{j=1}^N \frac{u^\textrm{con}_{jk}}{\sum_{j'=1}^N u^\textrm{con}_{j'k}} 
     C_0\epsilon_k \Delta t\\
     \label{eq:ieq-e-1}
    < & \frac{\sum_{j=1}^N u^\textrm{con}_{jk}}{E_{\textrm{min}}}\langle \psi_0| \left(\mathbf{U}^{(2k)} i\frac{1-\sum_{j'=1}^N u^\textrm{con}_{j'k}}{\sum_{j'=1}^N u^\textrm{con}_{j'k}}H^{(j)} \mathbf{U}^{(1k)}\right)^\dagger \tilde{H} \mathbf{U}^{(2k)} \mathbf{U}^{(1k)}|\psi_0 \rangle \Delta t \nonumber\\
    & + \frac{\sum_{j=1}^N u^\textrm{con}_{jk}}{E_{\textrm{min}}} \langle \psi_0| \left(\mathbf{U}^{(2k)} \mathbf{U}^{(1k)}\right)^\dagger  \tilde{H} \mathbf{U}^{(2k)} i \frac{1-\sum_{j'=1}^N u^\textrm{con}_{j'k}}{\sum_{j'=1}^N u^\textrm{con}_{j'k}}H^{(j)} \mathbf{U}^{(1k)}|\psi_0 \rangle \Delta t\\
    \label{eq:ieq-e-2}
    \leq & \frac{2 }{|E_{\textrm{min}}|\sum_{j'=1}^N u^\textrm{con}_{j'k}} \epsilon_k   \sum_{j=1}^N u_{jk}^\textrm{con} \sigma_\textrm{max}\sigma_1(\tilde{H})\Delta t
    = \frac{2}{|E_{\textrm{min}}|} \sigma_{\textrm{max}}\sigma_1(\tilde{H}) \epsilon_k \Delta t=C_0 \epsilon_k \Delta t.
\end{align}
\end{subequations}
The inequality~\eqref{eq:ieq-e-1} comes from the weighted summation of~\eqref{eq:assumption-e-no-sos1} over all the controllers with weight $u_{jk}/\sum_{j=1}^N u_{jk}$ for controller $j=1,\ldots,N$. Notice that we use the fact that the summation of all the weights is one. The last inequality~\eqref{eq:ieq-e-2} follows from the definition of $\epsilon_k$ and Lemma~\ref{lemma:singular-4}. By substituting $C_0$, the inequalities lead to a contradiction, which means that at each step, the minimum change of objective value 
\begin{align}
    \min_{j=1,\ldots,N} R_{jk}\leq 2C_1e^{C_2 \Delta t}\Delta t^2 + C_0\epsilon_k\Delta t.
\end{align}
Taking summation over all the time steps, we have 
\begin{align}
    \bar{F}(u^\textrm{bin}) - \bar{F}(u^\textrm{con}) = \sum_{k=1}^T \min_{j=1,\ldots,N} R_{jk}\leq 2C_1e^{C_2 \Delta t}\Delta t + C_0\epsilon^c(\Delta t).
\end{align}
\end{proof}

\section{Detailed Numerical Results}
\label{app:results}
\paragraph{Detailed Results for Sensitivity Analysis}
For the sensitivity analysis in Section~\ref{sec:sensitivity}, 
we present the detailed objective values for $5$ instances in Table~\ref{tab:res-obj-alpha}. 
\begin{table}[htbp]
  \centering
  \caption{Objective values for Energy6 example solved by Algorithm~\ref{alg:overall} using Algorithm~\ref{alg:switch-eobj} to round continuous controls with various switching penalty parameter $\alpha$.}
    \begin{tabular}{lrrrrrrrr}
    \hline
          & First-excited &  0     & 0.001 & 0.003 & 0.005  & 0.01  & 0.015 & 0.02 \\
    \hline
    Instance 1 & 0.2263 & 0.0379 & 0.0380 & 0.0383 & 0.0391 & 0.0461 & 0.0700 & 0.1352 \\
    Instance 2 & 0.1431 & 0.0196 & 0.0197 & 0.0205 & 0.0208 & 0.0316 & 0.0444 & 0.0444 \\
    Instance 3 & 0.1805 & 0.0453 & 0.0455 & 0.0461 & 0.0470 & 0.0486 & 0.0517 & 0.0690 \\
    Instance 4 & 0.2698 & 0.0067 & 0.0069 & 0.0077 & 0.0082 & 0.0309 & 0.0309 & 0.0666 \\
    Instance 5 & 0.0675 & 0.0501 & 0.0502 & 0.0507 & 0.0510 & 0.0536 & 0.0662 & 0.0966 \\
    Average & 0.1787 & 0.0319 & 0.0321 & 0.0327 & 0.0332 & 0.0422 & 0.0526 & 0.0824 \\
    \hline
         & 0.03  & 0.04  & 0.05   & 0.07  & 0.1   & 0.2   & 0.3   & 0.6  \\
    \hline
    Instance 1 & 0.1352 & 0.1352 & 0.2439 & 0.2439 & 0.3616 & 0.3808 & 1.0000 & 1.0000 \\
    Instance 2 & 0.0693 & 0.1256 & 0.1256 & 0.1256 & 0.1963 & 0.5159 & 0.8531 & 1.0000 \\
    Instance 3 & 0.0947 & 0.1375 & 0.1375 & 0.1375 & 0.2904 & 0.5232 & 0.9907 & 1.0000 \\
    Instance 4 & 0.1356 & 0.1356 & 0.1356 & 0.3082 & 0.3082 & 0.4680 & 1.0000 & 1.0000 \\
    Instance 5 & 0.0966 & 0.1528 & 0.1528 & 0.2391 & 0.2391 & 0.3724 & 0.3724 & 1.0000 \\
    Average & 0.1063 & 0.1373 & 0.1591 & 0.2109 & 0.2791 & 0.4521 & 0.8432 & 1.0000 \\
    \hline
    \end{tabular}%
  \label{tab:res-obj-alpha}%
\end{table}%
We present the detailed TV-norm values for $5$ instances in Table~\ref{tab:res-tv-alpha}. 
\begin{table}[htbp]
  \centering
  \caption{TV-norm results for Energy6 example solved by Algorithm~\ref{alg:overall} using Algorithm~\ref{alg:switch-eobj} to round continuous controls with various switching penalty parameter $\alpha$. TV-norm values of the first-excited state are marked by "-".}
    \begin{tabular}{lrrrrrrrr}
    \hline
          & First-excited & 0     & 0.001 & 0.003 & 0.005  & 0.01  & 0.015 & 0.02 \\
    \hline
    Instance 1 & -     & 98    & 74    & 42    & 30    & 24    & 20    & 16 \\
    Instance 2 & -     & 114   & 70    & 38    & 34    & 24    & 20    & 20 \\
    Instance 3 & -     & 118   & 82    & 46    & 34    & 26    & 22    & 20 \\
    Instance 4 & -     & 130   & 62    & 34    & 30    & 20    & 20    & 16 \\
    Instance 5 & -     & 122   & 82    & 46    & 34    & 22    & 20    & 16 \\
    Average & -     & 116.4 & 74.0  & 41.2  & 32.4  & 23.2  & 20.4  & 17.6 \\
    \hline
          & 0.03  & 0.04  & 0.05   & 0.07  & 0.1   & 0.2   & 0.3   & 0.6  \\
    \hline
    Instance 1 & 16    & 16    & 12    & 12    & 8     & 6     & 0     & 0 \\
    Instance 2 & 16    & 12    & 12    & 12    & 10    & 4     & 4     & 0 \\
    Instance 3 & 16    & 12    & 12    & 12    & 8     & 4     & 4     & 0 \\
    Instance 4 & 12    & 12    & 12    & 8     & 8     & 6     & 0     & 0 \\
    Instance 5 & 16    & 12    & 12    & 8     & 8     & 4     & 4     & 0 \\
    Average & 15.2  & 12.8  & 12.0  & 10.4  & 8.4   & 4.8   & 2.4   & 0.0 \\
    \hline
    \end{tabular}%
  \label{tab:res-tv-alpha}%
\end{table}%

\paragraph{Detailed Results for Objective}
We present the detailed numerical results for objective values and TV-norm values for all the methods and instances in Table~\ref{tab:res-obj}. 
We present the best discretized control obtained by~\citep[Algorithm TR+MT+ALB]{Fei2023binarycontrolpulse} in column ``~\citep{Fei2023binarycontrolpulse}'' as a baseline. 
Columns ``Alg.~\ref{alg:overall}w/\citep{Fei2023binarycontrolpulse}'', ``Alg.~\ref{alg:overall}w/\ref{alg:switch-eobj}'', and ``Alg.~\ref{alg:overall}w/\ref{alg:switch-cumdiff}'' represent results 
obtained by Algorithm~\ref{alg:overall} with binary control obtained by ~\citep[Algorithm TR+MT+ALB]{Fei2023binarycontrolpulse}, the heuristic method based on objective value (Algorithm~\ref{alg:switch-eobj}), and the heuristic method based on the cumulative difference (Algorithm~\ref{alg:switch-cumdiff}), respectively. 
\begin{table}[htbp]
  \centering
  \caption{Objective and TV-norm value results of various approaches. 
  Column ``\citep{Fei2023binarycontrolpulse}'' represent the results of the benchmark discretized controls without optimizing switching points. 
  Columns ``Alg.~\ref{alg:overall}w/\citep{Fei2023binarycontrolpulse}'', ``Alg.~\ref{alg:overall}w/\ref{alg:switch-eobj}'', and ``Alg.~\ref{alg:overall}w/\ref{alg:switch-cumdiff}'' represent the results of Algorithm~\ref{alg:overall} with extracting binary controls by various approaches, including TR+MT+ALB~\citep{Fei2023binarycontrolpulse}, Algorithm~\ref{alg:switch-eobj} and Algorithm~\ref{alg:switch-cumdiff}.}
    \begin{tabular}{lrrrr}
    \hline
          & \multicolumn{4}{c}{(Objective value, TV-norm value)} \\
          \hline
          & \citep{Fei2023binarycontrolpulse}  & Alg.~\ref{alg:overall}w/\citep{Fei2023binarycontrolpulse} & Alg.~\ref{alg:overall}w/\ref{alg:switch-eobj} & Alg.~\ref{alg:overall}w/\ref{alg:switch-cumdiff} \\
          \hline
    Energy2 & (2.930E$-$03, 10) & (1.255E$-$14, 4) & (1.255E$-$14, 4) & (1.255E$-$14, 4) \\
    \hline
    Energy4 & (0.1986, 6) & (0.1611, 6) & (0.1569, 9.2) & (0.1568, 8.4) \\
    \hline
    Energy6 & (0.3046, 10.8) & (0.2049, 10.8) & (0.0526, 20.4) & (0.0632, 19.6) \\
    \hline
    CNOT5 & (0.1968, 10) & (0.1807, 10) & (0.1763, 11) & (0.1792, 9) \\
    \hline
    CNOT10 & (9.431E$-$03, 16) & (5.071E$-$03, 16) & (5.326E$-$07, 28) & (2.163E$-$07, 26) \\
    \hline
    CNOT20 & (8.224E$-$04, 49) & (3.124E$-$08, 49) & (1.728E$-$07, 41) & (3.136E$-$07, 39) \\
    \hline
    NOT2  & (0.1636, 1) & (0.1632, 1) & (0.1632, 1) & (0.1632, 1) \\
    \hline
    NOT6  & (1.270E$-$02, 11) & (6.885E$-$06, 11) & (1.122E$-$06, 0) & (3.267E$-$06, 14) \\
    \hline
    NOT10 & (9.087E$-$04, 20) & (5.132E$-$08, 20) & (2.439E$-$08, 11) & (4.692E$-$08, 17) \\
    \hline
    CircuitH2 & (7.777E$-$02, 8) & (2.747E$-$03, 8) & (1.208E$-$06, 16) & (6.612E$-$05, 36) \\
     \hline
    CircuitLiH & (8.110E$-$01, 18) & (4.054E$-$02, 18) & (1.702E$-$03, 32) & (1.986E$-$03, 20) \\
    \hline
    CircuitBeH2 & (5.115E$-$02, 24) & (1.250E$-$03, 24) & (1.250E$-$03, 28) & (1.250E$-$03, 14) \\
    \hline
    \end{tabular}%

  \label{tab:res-obj}%
\end{table}%

\end{document}